\documentclass[fleqn,10pt]{article}
\usepackage{latexsym, graphicx, epsfig, amsmath, amssymb,amsfonts}
\usepackage{natbib,amsthm,version}
\usepackage{amsbsy,bm,multirow,enumerate}
\usepackage[titletoc,page]{appendix}
\usepackage[mathscr]{eucal}
\usepackage{mathtools}
\usepackage{color}
\usepackage{subfigure}
\usepackage[utf8]{inputenc}
\usepackage[english]{babel}
\usepackage{float}
\usepackage{caption}
\usepackage{systeme}
\usepackage[affil-it]{authblk}

\newtheorem{theorem}{Theorem}[section]

\newtheorem{lemma}[theorem]{Lemma}
\newtheorem{remark}{Remark}[theorem]

\begin{document}

\title{ 
	A consistent and conservative Phase-Field method for compressible multiphase flows with the six-equation model
} 

\author{
	Ziyang Huang%
	\thanks{Email: \texttt{ziyanghuang@scut.edu.cn}; Corresponding author at the School of Marine Science and Engineering, Guangzhou International Campus, South China University of Technology, Guangzhou, Guangdong, China, 511400.}}

\affil{
	School of Marine Science and Engineering, Guangzhou International Campus, South China University of Technology, Guangzhou, Guangdong, China, 511400}

\date{Aug 26, 2026}

\maketitle


\begin{abstract}
In the present study, the consistent and conservative Phase-Field method is extended to the six-equation model for compressible multiphase flows.
Based solely on the conservation laws and the second law of thermodynamics, the six-equation model with the Phase-Field mechanism is first derived. In addition to satisfying Galilean invariance and consistency of reduction, the model is general to admit an arbitrary number of phases with different formulations of the Phase-Field mechanism. The isobaric closure by the pressure relaxation and the incompressible limit of the proposed model are analyzed. The derivation and analysis identify additional terms arising from the Phase-Field mechanism, which are absent from previous studies.
The consistent and conservative numerical approach is adapted to the proposed six-equation model with moderate modifications, retaining the interfacial equilibrium condition, conservation, and flexibility and robustness of incorporating different formulations of the Phase-Field mechanism with bound preservation.
As the new component of solving the six-equation model, both the pressure and pressure-temperature relaxations are theoretically analyzed in a general multiphase setup, with a proof of the existence and uniqueness of a thermodynamically admissible solution for these two relaxations.
Various two-phase compressible flow benchmarks are performed to demonstrate the method, and good agreement with exact solutions is achieved.
\end{abstract}

\vspace{0.05cm}
Keywords: {\em
  Multiphase flows;
  Compressible flows;
  Six-Equation models;
  Phase-Field methods;
  Pressure relaxation;
  Pressure-Temperature relaxation
}

\section{Introduction}\label{Sec:Introduction}
Compressible multiphase flows, characterized by dynamical interactions between material interfaces and wave structures such as shocks and rarefactions, are ubiquitous in scientific research and engineering applications, and the diffuse-interface capturing methods \citep{SaurelPantano2018} have demonstrated their success in solving such problems. In comparison to methods that have an explicit definition of interface location, such as the front-tracking \cite{UnverdiTryggvason1992,Tryggvasonetal2001}, level-set \cite{OsherSethian1988,Sussmanetal1994,SethianSmereka2003,Gibouetal2018}, and volume-of-fluid (VOF) \cite{HirtNichols1981,ScardovelliZaleski1999,OwkesDesjardins2017} methods, material interfaces in the diffuse-interface capturing methods are implicitly defined by transition over a few grid cells of a scalar function that is usually related to the volume (or mass) fractions of individual phases. Interface-wave interactions are accurately captured by incorporating interface evolution into nonlinear wave propagation.

In the family of the diffuse-interface capturing methods, the five-equation models \citep{Allaireetal2002,Massonietal2002,Kapilaetal2001} are widely used \citep{CoralicColonius2014,FriessKokh2014,BeigJohnsen2015,RodriguezJohnsen2019}. On one hand, they are more flexible to use complex equations of state than the $\gamma$-based models \citep{Abgrall1996,Shyue1998,SaurelAbgrall1999-Gamma,AbgrallKarni2001,JohnsenColonius2006}. On the other hand, they directly impose the isobaric closure, which corresponds to the mechanical equilibrium required at material interfaces.
Compared to the five-equation model of Allaire et al. \citep{Allaireetal2002} and of Massoni et al. \citep{Massonietal2002}, the five-equation model of Kapila et al. \citep{Kapilaetal2001} has an extra non-conservative term resulting from the Baer-Nunziato model \citep{BaerNunziato1986} and its corresponding types (also known as the seven-equation models) with a hierarchy of relaxations \citep{SaurelAbgrall1999-Nonequilibrium,Saureletal2003,PerigaudSaurel2005,Saureletal2008,Petitpasetal2009}. This non-conservative term plays a critical role in bubble collapse dynamics \citep{Tiwarietal2013,Schmidmayeretal2020}, while poses numerical challenges in terms of shock computation, volume fraction boundedness, and wave propagation \citep{Petitpasetal2007,Saureletal2009,DalMasoetal1995}. To address these difficulties, the six-equation model is suggested \citep{Saureletal2009}, where only the velocity equilibrium is considered. In numerical practice, the possibly singular non-conservative term is removed from the hyperbolic evolution, but then the isobaric closure (mechanical equilibrium at material interfaces) is enforced via a separate pressure relaxation. Due to the usage of the phasic internal energy \citep{Saureletal2009}, calibration to the relaxation outcome is needed to conserve energy. It is demonstrated in \citep{Schmidmayeretal2020} that this six-equation model also captures bubble collapse correctly. Another six-equation model based on the phasic total energy is proposed \citep{PelantiShyue2014}, which automatically satisfies energy conservation after the relaxation. These six-equation models are further applied to cases with surface tension \citep{Schmidmayeretal2017}, deformable solids \citep{Favrieetal2009}, heat transfer and phase change resulting from additional relaxations of temperature and Gibbs free energy \citep{PelantiShyue2014,Pelanti2022}, and more than two phases \citep{Petitpasetal2009,Ndanouetal2015,PelantiShyue2019,Pelanti2026}.
However, there are theoretical and numerical gaps in terms of the six-equation models, which the present study attempts to fill.

Although the derivations of the two-phase six-equation models \citep{Saureletal2009,PelantiShyue2014} have been detailed by either performing the velocity relaxation on the Baer-Nunziato model \citep{Kapilaetal2001} or following the idea of matching the phasic and mixture momentum equations \citep{DeLorenzoetal2018,KuhnDesjardins2021}, the derivations of their $N$-phase extensions \citep{Petitpasetal2009,Ndanouetal2015,PelantiShyue2019} from first principles are missed. Furthermore, the $N$-phase models \citep{Petitpasetal2009,Ndanouetal2015} are unable to recover their two-phase correspondence \citep{Saureletal2009} when only two phases ($N=2$) are present, which fails the consistency of reduction. The present study attempts to rigorously derive the six-equation model with the Phase-Field mechanism (discussed below) under a general $N$-phase setup based solely on the conservation laws (for the phasic mass, mixture momentum, and phasic total energy) and the second law of thermodynamics. Our derivation illustrates that the interphase interaction terms as well as the evolution of volume fraction are direct consequences of satisfying the second law of thermodynamics.

As a new component of solving the six-equation model, the relaxations in the six-equation models are ordinary differential systems that are ultimately turned into nonlinear algebraic equations numerically. In general, iterative methods like the Newton-Raphson method are used to find the solution \citep{SaurelAbgrall1999-Nonequilibrium,Saureletal2009,Favrieetal2009,Ndanouetal2015}, while exact or approximate closed-form solutions are proposed for some specific equations of state \citep{Saureletal2008,PelantiShyue2014,Pelanti2022,Pelanti2026}. The effect of relaxation parameters is discussed based on numerical observations in \citep{Saureletal2009,Haegemanetal2024}. However, theoretical investigation of the thermodynamical admissibility (i.e., bounded volume fractions with summation to unity, positive mass, positive temperature, and real speed of sound) of the relaxation outcome is missed, although a qualitative discussion is provided in \citep{FonkwaKamga2026} following a brief description in \citep{Ndanouetal2015} for the pressure relaxation. The present study addresses this issue by analyzing the pressure and pressure-temperature relaxations with the equation of state by Le M{\'e}tayer et al. \citep{LeMetayeretal2005}, which covers the commonly used ideal-gas and stiffened-gas equations of state. Our proof shows that there exists a unique solution that is thermodynamically admissible under a general $N$-phase setup for these two relaxations.

Finally, the six-equation models share a key challenge of the diffuse-interface capturing methods for compressible multiphase flows; that is unbounded growth of material interface thickness due to numerical diffusion inherent to the capturing schemes \citep{Harten1977,Harten1978}. As a result, it becomes more ambiguous to distinguish different phases as computation goes on. Following its success in various incompressible multiphase flows \citep{Huang2021,Huangetal2020,Huangetal2020CAC,Huangetal2020N,Huangetal2020B,Huangetal2020NPMC,Huangetal2020Solid,Huangetal2021Contact} and compressible multiphase flows with the five-equation models \citep{Shuklaetal2010,Tiwarietal2013,Jainetal2020,JainMoin2022,HuangJohnsen2022,HuangJohnsen2023,HuangJohnsen2024}, the Phase-Field mechanism is introduced in the present study to address the issue of excessive interface thickening. Beyond its contributions to mass, momentum, and energy transport, the Phase-Field mechanism also exerts an influence on the interphase work, which is revealed by our model derivation. After enforcing the isobaric closure via the pressure relaxation, our six-equation model with the Phase-Field mechanism reduces to a five-equation model that is different from those in previous studies \citep{Shuklaetal2010,Tiwarietal2013,Jainetal2020,JainMoin2022,HuangJohnsen2022,HuangJohnsen2023,HuangJohnsen2024}. This difference stems from the appearance of new non-conservative terms associated with the Phase-Field mechanism in the volume fraction equation of the reduced five-equation model. To the best of our knowledge, such non-conservative terms are identified for the first time in this work; they are otherwise difficult, if not impossible, to directly derive from the five-equation models. Accordingly, when the Phase-Field mechanism is incorporated, the present six-equation model with the pressure relaxation is no longer equivalent and cannot serve as an alternative to the five-equation models \citep{Shuklaetal2010,Tiwarietal2013,Jainetal2020,JainMoin2022,HuangJohnsen2022,HuangJohnsen2023,HuangJohnsen2024} with the non-conservative term of Kapila et al. \citep{Kapilaetal2001}.
We note that while a seven-equation model was actually implemented in \citep{HatashitaJain2025}, the appendix presents a two-phase six-equation model with the Phase-Field mechanism. In that appendix, contributions of the Phase-Field mechanism to kinetic energy are split according to mass fractions, with limited explanations but no detailed derivations. Nevertheless, this two-phase model is mathematically equivalent to our model derived from first principles under the two-phase flow condition.

Together with the theoretical outcomes, the consistent and conservative numerical approach in \citep{HuangJohnsen2022,HuangJohnsen2023,HuangJohnsen2024} is modified and adapted to numerically solve the proposed six-equation model. Different from those \citep{Shuklaetal2010,Tiwarietal2013,Jainetal2020,JainMoin2022} that are designed for a specific formulation of the Phase-Field mechanism, the approach in \citep{HuangJohnsen2022,HuangJohnsen2023,HuangJohnsen2024} provides a unified bound-preserving framework to incorporate different formulations of the Phase-Field mechanism into compressible multiphase flows \citep{Huangetal2024} with possible high-order implementation \citep{Whiteetal2025} and adaptive mesh refinement \citep{Huangetal2025}. Both the mathematical model and numeral approach form a consistent and conservative Phase-Field method, which is extensively examined in various compressible two-phase flows including the bubble collapse problem in \citep{Schmidmayeretal2020}.

The rest of the paper is organized as follows.
In Section~\ref{Sec:GoverningEquations}, the proposed six-equation model with the Phase-Field mechanism is derived and analyzed.
In Section~\ref{Sec:NumericalScheme}, the adaptation of the consistent and conservative numerical approach to the proposed six-equation model is detailed, with emphasis on the analysis of the pressure and pressure-temperature relaxations.
In Section~\ref{Sec:Results}, various compressible two-phase flow benchmarks are performed to demonstrate the proposed model and approach, and the present study is concluded in Section~\ref{Sec:Conclusions}.

\section{Model derivation and analysis}\label{Sec:GoverningEquations}
In this section, we derive and analyze the six-equation model with the Phase-Field mechanism for compressible multiphase flows in a general $N$-phase setup. The derivation is based on the consistency conditions, which were originally proposed for incompressible multiphase flows \citep{Huangetal2020,Huangetal2020CAC,Huangetal2020N,Huangetal2020B,Huangetal2020NPMC,Huangetal2020Solid,Huangetal2021Contact,Huang2021} and recently applied to the five-equation models for compressible multiphase flows \citep{HuangJohnsen2022,HuangJohnsen2023,HuangJohnsen2024,Whiteetal2025,Huangetal2025}, and neglects the effects of surface tension and phase changes.
Starting with mass, momentum, and energy balance, the proposed model is completed with the second law of thermodynamics, followed by analysis of Galilean invariance, consistency of reduction \citep{BoyerMinjeaud2014,Dong2018,Huangetal2020N,Huangetal2020B}, isobaric closure via relaxation, and incompressible limit.

\subsection{Conservation laws}\label{Sec:ConservationLaw}
Without considering any phase change, the mass of individual phases is conserved, following
\begin{equation}\label{Eq:Mass-Phase}
\frac{\partial (\alpha_p \rho_p)}{\partial t}
+
\nabla \cdot \left( (\alpha_p\mathbf{u} - \mathbf{J}_p)\rho_p \right)
=
0,
\end{equation}
where $\mathbf{u}$ is the flow velocity, and $\alpha_p$, $\rho_p$, and $\mathbf{J}_p$ are the volume fraction, phasic density, and Phase-Field mechanism of Phase~$p$.
The mass flux now incorporates the Phase-Field mechanism that so far can be any formulation but needs to vanish at the domain boundary, i.e., has the zero-flux boundary condition. As a result, only the flow velocity can drive the phases entering/leaving the domain.
In the present study, $(\alpha_p\rho_p)$ is called the mass of Phase~$p$.
As the mixture density is defined as $\rho = \sum_{q=1}^N (\alpha_q\rho_q)$, Eq.~(\ref{Eq:Mass-Phase}) implies mass conservation of the multiphase mixture after summing Eq.~(\ref{Eq:Mass-Phase}) over all the phases.

The consistency conditions \citep{Huangetal2020,Huangetal2020N,Huangetal2020NPMC,Huangetal2020Solid,HuangJohnsen2022} require that momentum and energy are transported by the same mass flux in Eq.~(\ref{Eq:Mass-Phase}).
As a result, with the assumption of velocity equilibrium between phases, the mixture momentum balance from the Newton's second law follows
\begin{equation}\label{Eq:Momentum}
\frac{\partial (\rho \mathbf{u})}{\partial t}
+
\nabla \cdot \left(\sum_{q=1}^N (\alpha_q\mathbf{u} - \mathbf{J}_q)\rho_q \otimes \mathbf{u}\right)
=
\nabla \cdot \left(\sum_{q=1}^N \alpha_q \boldsymbol{\sigma}_q\right),
\end{equation}
where $\boldsymbol{\sigma}_p$ is the phasic stress tensor of Phase~$p$. Without considering any external (body) force, the mixture momentum is conserved.

Similarly, the total energy of individual phases from the first law of thermodynamics is governed by
\begin{equation}\label{Eq:Energy-Phase}
\frac{\partial (\alpha_p \rho_p E_p)}{\partial t}
+
\nabla \cdot \left( (\alpha_p \mathbf{u} - \mathbf{J}_p) \rho_p E_p \right)
=
\nabla \cdot (\alpha_p \mathbf{u} \cdot \boldsymbol{\sigma}_p)
-
\nabla \cdot (\alpha_p \mathbf{Q}_p)
+
\nabla \cdot \mathbf{J}_p^{E}
+
\sum_{q=1}^N W_{p \leftarrow q}^I
+
\sum_{q=1}^N Q_{p \leftarrow q}^I,
\end{equation}
where $E_p = e_p + \frac{1}{2} \mathbf{u} \cdot \mathbf{u}$ is the specific total energy, $e_p$ is the specific internal energy, $\mathbf{Q}_p$ is the heat flux, and $\mathbf{J}_p^E$ is the energy flux due to the Phase-Field mechanism, of Phase~$p$, and $\mathbf{W}_{p \leftarrow q}^I$ and $Q_{p \leftarrow q}^I$ are the work and heat transfer, respectively, applied to Phase~$p$ from Phase~$q$.
The notation for the interfacial quantities in the present study implies $\xi_{p \leftarrow q}^I=-\xi_{q \leftarrow p}^I$ and $\xi_{p,q}^I=\xi_{q,p}^I$. As a result, the mixture total energy is conserved without considering any external energy source, following
\begin{equation}\label{Eq:Energy}
\frac{\partial (\rho E)}{\partial t}
+
\nabla \cdot \left( \mathbf{u} \rho E - \sum_{q=1}^N (\mathbf{J}_q \rho_q E_q) \right)
=
\nabla \cdot (\mathbf{u} \cdot \boldsymbol{\sigma})
-
\nabla \cdot \mathbf{Q}
+
\nabla \cdot \mathbf{J}^{E},
\quad
\rho E = \sum_{q=1}^N (\alpha_q \rho_q E_q),
\quad
\mathbf{Q}=\sum_{q=1}^N (\alpha_q \mathbf{Q}_q)
\quad
\mathbf{J}^{E}=\sum_{q=1}^N \mathbf{J}_q^{E},
\end{equation}
from summing Eq.~(\ref{Eq:Energy-Phase}) over all the phases.

\subsection{Second law of thermodynamics}\label{Sec:SeconLaw}
With the mass, momentum, and energy balance in Section~\ref{Sec:ConservationLaw}, we apply the second law of thermodynamics to determine the evolution of volume fraction, the work from Phase~$q$ to Phase~$p$, and the constraints on the stress tensor, heat flux, and Phase-Field mechanism.
The second law of thermodynamics requires
\begin{equation}\label{Eq:Entropy-Phase}
\Sigma_p
=
\frac{\partial (\alpha_p \rho_p s_p) }{\partial t}
+
\nabla \cdot \left( (\alpha_p\mathbf{u}-\mathbf{J}_p) \rho_p s_p \right)
-
\nabla \cdot \left( -\frac{\alpha_p\mathbf{Q}_p}{T_p} \right)
-
\sum_{q=1}^N \frac{Q_{p \leftarrow q}^I}{T_q}
-
\nabla \cdot \mathbf{J}_p^{S}
\geqslant 0,
\end{equation}
where $s_p$ is the specific entropy, $T_p$ is the temperature, $\mathbf{J}_p^{S}$ is the entropy flux due the Phase-Field mechanism, and $\Sigma_p$ is the entropy production, of Phase~$p$.
We note that the entropy is again transported by the same mass flux in Eq.~(\ref{Eq:Mass-Phase}).
Using the mass, momentum, and energy balance (Eq.~(\ref{Eq:Mass-Phase}), Eq.~(\ref{Eq:Momentum}), and Eq.~(\ref{Eq:Energy-Phase}) in Section~\ref{Sec:ConservationLaw}) and the thermodynamic relation $de_p = T_p ds_p + \frac{P_p}{\rho_p^2} d\rho_p$, where $P_p$ is the thermodynamic pressure of Phase~$p$, Eq.~(\ref{Eq:Entropy-Phase}) becomes
\begin{equation}\label{Eq:SecondLaw}
\begin{split}
\Sigma_p
=
\frac{\alpha_p \left(
P_p \mathbf{I}
+
\boldsymbol{\sigma}_p
\right) : \nabla \mathbf{u}}{T_p}
-
\frac{\alpha_p \mathbf{Q}_p \cdot \nabla T_p}{T_p^2}
+
\sum_{q=1}^N \frac{(T_q -T_p) Q_{p \leftarrow q}^I}{T_p T_q}\\
+
\frac{\nabla \cdot \left(\mathbf{J}_p^{E} - (\alpha_p P_p) \sum_{q=1}^N \mathbf{J}_q \right)}{T_p}
+
\frac{\nabla (\alpha_p P_p) \cdot \sum_{q=1}^N \mathbf{J}_q}{T_p}
-
\nabla \cdot \mathbf{J}_p^{S}\\
+
\frac{1}{T_p} \sum_{q=1}^N \left(
W_{p \leftarrow q}^I
-
\mathbf{u} \cdot \left(
\frac{(\alpha_p \rho_p)}{\rho} \nabla \cdot (\alpha_q \boldsymbol{\sigma}_q)
-
\frac{(\alpha_q\rho_q)}{\rho} \nabla \cdot (\alpha_p \boldsymbol{\sigma}_p)
\right)\right.\\
\left.
-
\mathbf{u} \cdot \left(
\frac{(\alpha_p \rho_p)}{\rho} (\mathbf{J}_q \rho_q) \cdot  \nabla \mathbf{u}
-
\frac{(\alpha_q \rho_q)}{\rho} (\mathbf{J}_p \rho_p) \cdot \nabla \mathbf{u}
\right)
+
P^I_{p,q} \left(\frac{\mathcal{D}\alpha}{\mathcal{D}t}\right)^I_{p \leftarrow q}
\right)\\
+
\frac{1}{T_p} \left(
P_p \left(
\frac{\partial \alpha_p}{\partial t}
+
\nabla \cdot (\alpha_p \mathbf{u} - \mathbf{J}_p)
-
\alpha_p \nabla \cdot \left(
\mathbf{u}
-
\sum_{q=1}^N \mathbf{J}_q
\right)
\right)
-
\sum_{q=1}^N P^I_{p,q} \left(\frac{\mathcal{D}\alpha}{\mathcal{D}t}\right)^I_{p \leftarrow q}
\right)
\geqslant
0,
\end{split}
\end{equation}
where $P^I_{p,q}$ is the pressure at the interface separating Phases~$p$ and $q$, and $\left(\mathcal{D}\alpha/\mathcal{D}t\right)^I_{p \leftarrow q}$ denotes the time rate change of volume of Phase~$p$ due to Phase~$q$.

The terms on the second row of Eq.~(\ref{Eq:SecondLaw}) are related to the Phase-Field mechanism. The simplest option to satisfy the second law under different pressure is
\begin{equation}\label{Eq:SecondLaw-PhaseField}
\sum_{q=1}^N \mathbf{J}_q = \mathbf{0} \quad \mathrm{or} \quad \sum_{q=1}^N \nabla \cdot \mathbf{J}_q = 0.
\end{equation}
As a result, there is no need to introduce extra energy and entropy fluxes, i.e., $\mathbf{J}_p^E=\mathbf{0}$ and $\mathbf{J}_p^S=\mathbf{0}$.
The terms on the third and fourth rows of Eq.~(\ref{Eq:SecondLaw}) determine the work applied to Phase~$p$ from Phase~$q$, which reads
\begin{equation}\label{Eq:Work-Interface}
W_{p \leftarrow q}^I
=
\mathbf{u} \cdot \left(
\frac{(\alpha_p \rho_p)}{\rho} \nabla \cdot (\alpha_q \boldsymbol{\sigma}_q)
-
\frac{(\alpha_q \rho_q)}{\rho} \nabla \cdot (\alpha_p \boldsymbol{\sigma}_p)
\right)
+
\mathbf{u} \cdot \left(
\frac{(\alpha_p \rho_p)}{\rho} (\mathbf{J}_q \rho_q) \cdot  \nabla \mathbf{u}
-
\frac{(\alpha_q \rho_q)}{\rho} (\mathbf{J}_p \rho_p) \cdot \nabla \mathbf{u}
\right)
-
P_{p,q}^I \left(\frac{\mathcal{D}\alpha}{\mathcal{D}t}\right)^I_{p \leftarrow q}.
\end{equation}
It is again straightforward to verify that $W_{p \leftarrow q}^I$ in Eq.~(\ref{Eq:Work-Interface}) satisfies $W_{p \leftarrow q}^I=-W_{q \leftarrow p}^I$ and thus $\sum_{p,q=1}^N W_{p \leftarrow q}^I=0$, i.e., there is no net work produced between the phases. We note that $W_{p \leftarrow q}^I$ includes the contribution from the Phase-Field mechanism to satisfy the second law.
Due to the fact that the volume change of Phase~$p$ is equal to the summation of volume exchanges contributed from all other phases, we have
\begin{equation}\label{Eq:VolumeChange}
\frac{\partial \alpha_p}{\partial t}
+
\nabla \cdot (\alpha_p \mathbf{u} - \mathbf{J}_p)
-
\alpha_p \nabla \cdot \left(
\mathbf{u}
-
\sum_{q=1}^N \mathbf{J}_q
\right)
=
\sum_{q=1}^N \left(\frac{\mathcal{D}\alpha}{\mathcal{D}t}\right)^I_{p \leftarrow q}.
\end{equation}
Summing Eq.~(\ref{Eq:VolumeChange}) over $p$, we have $\partial (\sum_{q=1}^N \alpha_q)/\partial t=0$ due to $\left(\mathcal{D}\alpha/\mathcal{D}t\right)^I_{p \leftarrow q}=-\left(\mathcal{D}\alpha/\mathcal{D}t\right)^I_{q \leftarrow p}$. Therefore, $\sum_{q=1}^N \alpha_q=1$ is preserved.
Incorporating Eq.~(\ref{Eq:SecondLaw-PhaseField}), Eq.~(\ref{Eq:Work-Interface}), and Eq.~(\ref{Eq:VolumeChange}) into Eq.~(\ref{Eq:SecondLaw}), we obtain
\begin{equation}\label{Eq:SecondLaw-Final}
\Sigma_p
=
\frac{\alpha_p \boldsymbol{\tau}_p : \nabla \mathbf{u}}{T_p}
-
\frac{\alpha_p \mathbf{Q}_p \cdot \nabla T_p}{T_p^2}
+
\sum_{q=1}^N \frac{(T_q -T_p) Q_{p \leftarrow q}^I}{T_p T_q}
+
\frac{1}{T_p} \sum_{q=1}^N \left(P_p - P_{p,q}^I \right) \left(\frac{\mathcal{D}\alpha}{\mathcal{D}t}\right)^I_{p \leftarrow q}
\geqslant
0,
\end{equation}
where $\boldsymbol{\tau}_p = P_p \mathbf{I} + \boldsymbol{\sigma}_p$ is the stress tensor of Phase~$p$ excluding the thermodynamic pressure.
It is clear that the Newtonian viscous stress ($\boldsymbol{\tau}_p=\mu_p (\nabla \mathbf{u} + \nabla \mathbf{u}^T) + \mu_p^B (\nabla \cdot \mathbf{u}) \mathbf{I}$, where $\mu_p$ and $\mu_p^B$ are the shear and bulk viscosities of Phase~$p$), the Fourier's law of heat conduction ($\mathbf{Q}_p=-\kappa_p \nabla T_p$, where $\kappa_p$ is the heat conductivity of Phase~$p$), and the Newton's law of cooling ($Q_{p \leftarrow q}^I=h_{p,q}^I \delta_{p,q}^I(T_q -T_p)$, where $h_{p,q}^I$ is the convective heat transfer coefficient and $\delta_{p,q}^I$ is the surface delta function, of Phases$p$ and $q$) are admissible candidates.
In an extreme case where viscosity, heat conduction, and heat transfer are neglected, the last term in Eq.~(\ref{Eq:SecondLaw-Final}) must be non-negative, resulting in
\begin{equation}\label{Eq:VolumeChange-pq}
\left(\frac{\mathcal{D}\alpha}{\mathcal{D}t}\right)^I_{p \leftarrow q}
=
\zeta_{p,q}^I (P_p - P_q),
\quad
\zeta_{p,q}^I \geqslant 0,
\quad
\min(P_p,P_q) \leqslant P_{p,q}^I \leqslant \max(P_p,P_q).
\end{equation}
Combining Eq.~(\ref{Eq:SecondLaw-PhaseField}), Eq.~(\ref{Eq:VolumeChange}), and Eq.~(\ref{Eq:VolumeChange-pq}), the volume fraction equation is finalized.

\subsection{Proposed model}\label{Sec:Model}
Incorporating the analysis of the second law of thermodynamics into the conservation of mass, momentum, and energy, we obtain the final governing equations written in a vector form
\begin{equation}\label{Eq:GoverningEquations}
\frac{\partial \mathbf{U}}{\partial t}
+
\nabla \cdot \mathbf{F}^{HB}
=
\mathbf{S}^{HB}
+
\nabla \cdot \mathbf{F}^{DF}
+
\mathbf{S}^{DF}
+
\nabla \cdot \mathbf{F}^{PF}
+
\mathbf{S}^{PF}
+
\mathbf{S}^{RX},
\end{equation}
where $\mathbf{U}$ is the vector of conservative variables containing the phasic masses, mixture momentum, phasic total energies, and volume fractions
\begin{equation}\label{Eq:Conservative}
\mathbf{U}=\left[\left\{(\alpha_p \rho_p)\right\}_{p=1}^N,(\rho \mathbf{u}),\left\{(\alpha_p \rho_p E_p)\right\}_{p=1}^N,\left\{\alpha_p\right\}_{p=1}^N\right]^T,
\end{equation}
$\mathbf{F}^{HB}$ and $\mathbf{S}^{HB}$ are the hyperbolic flux vector and hyperbolic non-conservative vector
\begin{equation}\label{Eq:Hyperbolic}
\begin{split}
\mathbf{F}^{HB}&=\left[\left\{(\alpha_p\rho_p) \mathbf{u}\right\}_{p=1}^N,\rho \mathbf{u}\otimes \mathbf{u} + \sum_{q=1}^N (\alpha_q P_q) \mathbf{I},\left\{( (\alpha_p \rho_p E_p)  + (\alpha_p P_p) )\mathbf{u}\right\}_{p=1}^N,\left\{\alpha_p \mathbf{u}\right\}_{p=1}^N\right]^T,\\
\mathbf{S}^{HB}&=\left[\left\{0\right\}_{p=1}^N,\mathbf{0},\left\{\sum_{q=1}^N \mathbf{u} \cdot \left(
\frac{(\alpha_q \rho_q)}{\rho} \nabla (\alpha_p P_p)
-
\frac{(\alpha_p \rho_p)}{\rho} \nabla (\alpha_q P_q)
\right)\right\}_{p=1}^N,\left\{\alpha_p \nabla \cdot \mathbf{u}\right\}_{p=1}^N\right]^T,
\end{split}
\end{equation}
$\mathbf{F}^{DF}$ and $\mathbf{S}^{DF}$ are the diffusive flux vector and diffusive non-conservative vector
\begin{equation}\label{Eq:Diffusion}
\begin{split}
\mathbf{F}^{DF}&=\left[\left\{0\right\}_{p=1}^N,\sum_{q=1}^N (\alpha_q \boldsymbol{\tau}_q),\left\{\mathbf{u} \cdot (\alpha_p \boldsymbol{\tau}_p) - (\alpha_p \mathbf{Q}_p)\right\}_{p=1}^N,\left\{0\right\}_{p=1}^N\right]^T,\\
\mathbf{S}^{DF}&=\left[\left\{0\right\}_{p=1}^N,\mathbf{0},\left\{\sum_{q=1}^N \mathbf{u} \cdot \left(
\frac{(\alpha_p \rho_p)}{\rho} \nabla \cdot (\alpha_q \boldsymbol{\tau}_q)
-
\frac{(\alpha_q \rho_q)}{\rho} \nabla \cdot (\alpha_p \boldsymbol{\tau}_p)
\right)
\right\}_{p=1}^N,\left\{0\right\}_{p=1}^N\right]^T,
\end{split}
\end{equation}
$\mathbf{F}^{PF}$ and $\mathbf{S}^{PF}$ are the Phase-Field flux vector and Phase-Field non-conservative vector
\begin{equation}\label{Eq:PhaseField}
\begin{split}
\mathbf{F}^{PF}&=\left[\left\{(\mathbf{J}_p\rho_p)\right\}_{p=1}^N,\sum_{q=1}^N (\mathbf{J}_q\rho_q) \otimes \mathbf{u},\left\{(\mathbf{J}_p \rho_p E_p)\right\}_{p=1}^N,\left\{\mathbf{J}_p\right\}_{p=1}^N\right]^T,\\
\mathbf{S}^{PF}&=\left[\left\{0\right\}_{p=1}^N,\mathbf{0},\left\{\sum_{q=1}^N \mathbf{u} \cdot \left(
\frac{(\alpha_p \rho_p)}{\rho} (\mathbf{J}_q \rho_q) \cdot \nabla \mathbf{u}
-
\frac{(\alpha_q \rho_q)}{\rho} (\mathbf{J}_p \rho_p) \cdot \nabla \mathbf{u}
\right)\right\}_{p=1}^N,\left\{0\right\}_{p=1}^N\right]^T,
\end{split}
\end{equation}
and $\mathbf{S}^{RX}$ is the relaxation vector
\begin{equation}\label{Eq:Relaxation}
\mathbf{S}^{RX}=\left[\left\{0\right\}_{p=1}^N,\mathbf{0},\left\{-\sum_{q=1}^N \zeta_{p,q}^I P_{p,q}^I (P_p - P_q)+\sum_{q=1}^N Q_{p \leftarrow q}^I\right\}_{p=1}^N,\left\{\sum_{q=1}^N \zeta_{p,q}^I (P_p - P_q)\right\}_{p=1}^N\right]^T.
\end{equation}
Considering that each phase is a pure substance, any thermodynamical properties are uniquely determined by the phasic density and pressure from the state principle. After supplementing the equations of state $e_p=\mathsf{e}_p(P_p,\rho_p)$ and $T_p=\mathsf{T}_p(P_p,\rho_p)$ for each phase, the proposed model is complete. The phasic sound speed of Phase~$p$, denoted by $c_p$, is derived from the equations of state following
\begin{equation}\label{Eq:SoundSpeed-Phase}
c_p^2 = \frac{\frac{P_p}{\rho_p^2} - \left(\frac{\partial e_p}{\partial \rho_p}\right)_{P_p}}{\left(\frac{\partial e_p}{\partial P_p}\right)_{\rho_p}}.
\end{equation}

The proposed model is general to include an arbitrary number of phases with different material properties and equations of state and to admit any formulation of the Phase-Field mechanism under the requirement in Eq.~(\ref{Eq:SecondLaw-PhaseField}) from the second law of thermodynamics. In what follows, we further analyze the properties of the proposed model in addition to conservation and second law of thermodynamics.

\subsection{Galilean invariance}\label{Sec:Galilean}
Galilean invariance is another important property that must be satisfied by multiphase flow models.
As required by continuum mechanics \citep{Reddy2013}, stress tensor, heat flux, and heat transfer should be objective (or frame indifferent). Commonly, the Phase-Field mechanism depends on the volume fractions and their spatial derivatives, which are also frame indifferent. With these normal requirements, the proof of Galilean invariance of the proposed model is straightforward using the Galilean transformation, and is supplemented in \ref{Appendix:Galilean}, as the procedure is essentially the same as those in \citep{Huangetal2020NPMC,HuangJohnsen2022}.

As a direct consequence of Galilean invariance, the proposed model satisfies the kinematic, mechanical, and thermal equilibria at isolated interfaces (or the uniform flow condition). Specifically, the proposed model admits the following solution
\begin{equation}\label{Eq:UniformFlow}
\rho_p = \rho_{p}^{(0)},\quad
\mathbf{u} = \mathbf{u}^{(0)},\quad
P_p = P^{(0)},\quad
T_p = T^{(0)},\quad
\alpha_p = \alpha_p^{(0)}(\mathbf{x}-\mathbf{u}_0t),\quad
\forall t>0,
\end{equation}
no matter how the volume fractions are initialized, where $\mathbf{u}^{(0)}$, $\rho_{p}^{(0)}$, $P^{(0)}$, and $T^{(0)}$ are admissible uniform initial conditions.
It is obvious that Eq.~(\ref{Eq:UniformFlow}) with $\mathbf{u}_0=\mathbf{0}$ is a solution of the proposed model. Due to Galilean invariance, Eq.~(\ref{Eq:UniformFlow}) with any constant $\mathbf{u}_0$ is also a solution.
As a result, velocity, pressure, and temperature are continuous across isolated material interfaces.
Furthermore, if the interfacial heat transfer is neglected (i.e., $Q_{p \leftarrow q}^I=0$) and each phase initially has a different uniform temperature $T_p^{(0)}$, then the temperature solution in Eq.~(\ref{Eq:UniformFlow}) becomes $T_p=T_p^{(0)}$; the uniform temperature of each phase is preserved.

\subsection{Consistency of reduction}\label{Sec:Reduction}
Consistency of reduction \citep{BoyerLapuerta2006,BoyerMinjeaud2014,Dong2018,Huangetal2020N} in the local sense \citep{Huangetal2020B} is another important property that must be satisfied by multiphase flow models, so that locally absent phases are not fictitiously produced to affect the multiphase dynamics of present phases. As a result, the single-phase dynamics is recovered in bulk-phase regions.

We first consider the case where the Phase-Field mechanism is excluded, i.e., $\mathbf{J}_p=\mathbf{0}$ for $1 \leqslant p \leqslant N$. When Phase~$q$ is locally absent, i.e., $\alpha_q=0$ and $|\nabla \alpha_q| = 0$, at a specific location, we need to have $\zeta_{p,q}^I=0$ for $1 \leqslant p \leqslant N$, so that $\partial \alpha_q/\partial t=0$ and Phase~$q$ has no effect on the volume fraction equation in Eq.~(\ref{Eq:GoverningEquations}) for Phase $p \neq q$. Then, it is straightforward to show that the mass, momentum, and energy equations in Eq.~(\ref{Eq:GoverningEquations}) are not affected by Phase~$q$ either. As a result, Phase~$q$ remains locally absent, and the $N$-phase system automatically reduces to the corresponding $(N-1)$-phase system at that location. One can repeat this process until there is a single phase, and the proposed model in Eq.~(\ref{Eq:GoverningEquations}) reduces to the Navier-Stokes equations, ensuring that the single-phase dynamics is correctly recovered.
We note that, in the absence of the Phase-Field mechanism, the proposed model in Eq.~(\ref{Eq:GoverningEquations}) is equivalent to the one in \citep{PelantiShyue2019}, an $N$-phase extension of the two-phase six-equation models based on the phasic total energy \citep{PelantiShyue2014} and on the phasic internal energy \citep{Saureletal2009}, while the $N$-phase model in \citep{Petitpasetal2009,Ndanouetal2015} is reduction \textit{inconsistent} with these two two-phase models.

To preserve the consistency of reduction of the proposed model in Eq.~(\ref{Eq:GoverningEquations}) after including the Phase-Field mechanism, $\mathbf{J}_p$ for $1 \leqslant p \leqslant N$ itself should be reduction consistent, following the requirement in \citep{Dong2018,Huangetal2020B,HuangJohnsen2022}. Specifically, suppose that $\mathbf{J}_p$ depends on $\nabla^m \alpha_r$ for $1 \leqslant r \leqslant N$ and $0 \leqslant m \leqslant M$, when Phase~$q$ is locally absent, i.e., $|\nabla^m \alpha_q|=0$ for $0 \leqslant m \leqslant M$, it is required that $\mathbf{J}_q = \mathbf{0}$ and $\mathbf{J}_{p \neq q}$ reduces to the corresponding $(N-1)$-phase formulation.
Therefore, as long as the chosen Phase-Field mechanism is locally reduction consistent, the proposed model in Eq.~(\ref{Eq:GoverningEquations}) shares the same property.

\subsection{Isobaric closure via relaxation}\label{Sec:PressureRelaxation}
In this section, we analyze the isobaric closure of the proposed model in Eq.~(\ref{Eq:GoverningEquations}), achieved by setting the relaxation coefficient $\zeta_{p,q}^I$ to infinity, for $1 \leqslant p,q \leqslant N$.
We start with obtaining the phasic internal energy equation from Eq.~(\ref{Eq:GoverningEquations}), which reads
\begin{equation}\label{Eq:InternalEnergy-Phase}
\frac{\partial (\alpha_p \rho_p e_p)}{\partial t}
+
\nabla \cdot \left( (\alpha_p \mathbf{u} - \mathbf{J}_p) \rho_p e_p \right)
=
\alpha_p \boldsymbol{\sigma}_p : \nabla \mathbf{u}
-
\nabla \cdot (\alpha_p \mathbf{Q}_p)
-
\sum_{q=1}^N \zeta_{p,q}^I P_{p,q}^I (P_p - P_q)
+
\sum_{q=1}^N Q_{p \leftarrow q}^I.
\end{equation}
Then, using the equations of state, we obtain the phasic pressure equation
\begin{equation}\label{Eq:Pressure-Phase}
\begin{split}
\frac{\partial P_p}{\partial t}
=
-
\rho_p c_p^2 \nabla \cdot \mathbf{u}
+
\frac{
\alpha_p \boldsymbol{\tau}_p : \nabla \mathbf{u}
-
\nabla \cdot (\alpha_p \mathbf{Q}_p)
+
\sum_{q=1}^N Q_{p \leftarrow q}^I
}{(\alpha_p \rho_p) \left(\frac{\partial e_p}{\partial P_p}\right)_{\rho_p}}\\
-
\sum_{q=1}^N \frac{1}{\alpha_p} 
\frac{
\frac{P_{p,q}^I}{\rho_p}
-
\rho_p \left(\frac{\partial e_p}{\partial \rho_p}\right)_{P_p}
}{\left(\frac{\partial e_p}{\partial P_p}\right)_{\rho_p}} \zeta_{p,q}^I (P_p - P_q)
-
\frac{\alpha_p \mathbf{u} - \mathbf{J}_p}{\alpha_p} \cdot \nabla P_p.
\end{split}
\end{equation}
The pressure relaxation is performed on the pressure difference between Phases~$p$ and $q$ governed by
\begin{equation}\label{Eq:PressureDifference}
\begin{split}
\frac{\partial (P_p-P_q)}{\partial t}
=
-
\rho_p c_p^2 \nabla \cdot \mathbf{u}
+
\rho_q c_q^2 \nabla \cdot \mathbf{u}\\
+
\frac{
\alpha_p \boldsymbol{\tau}_p : \nabla \mathbf{u}
-
\nabla \cdot (\alpha_p \mathbf{Q}_p)
+
\sum_{q=1}^N Q_{p \leftarrow q}^I
}{(\alpha_p \rho_p) \left(\frac{\partial e_p}{\partial P_p}\right)_{\rho_p}}
-
\frac{
\alpha_q \boldsymbol{\tau}_q : \nabla \mathbf{u}
-
\nabla \cdot (\alpha_q \mathbf{Q}_q)
+
\sum_{p=1}^N Q_{q \leftarrow p}^I
}{(\alpha_q \rho_q) \left(\frac{\partial e_q}{\partial P_q}\right)_{\rho_q}}\\
-
\sum_{q=1}^N \frac{1}{\alpha_p} 
\frac{
\frac{P_{p,q}^I}{\rho_p}
-
\rho_p \left(\frac{\partial e_p}{\partial \rho_p}\right)_{P_p}
}{\left(\frac{\partial e_p}{\partial P_p}\right)_{\rho_p}} \zeta_{p,q}^I (P_p - P_q)
+
\sum_{p=1}^N \frac{1}{\alpha_q} 
\frac{
\frac{P_{q,p}^I}{\rho_q}
-
\rho_q \left(\frac{\partial e_q}{\partial \rho_q}\right)_{P_q}
}{\left(\frac{\partial e_q}{\partial P_q}\right)_{\rho_q}} \zeta_{q,p}^I (P_q - P_p)\\
-
\frac{\alpha_p \mathbf{u} - \mathbf{J}_p}{\alpha_p} \cdot \nabla P_p
+
\frac{\alpha_q \mathbf{u} - \mathbf{J}_q}{\alpha_q} \cdot \nabla P_q.
\end{split}
\end{equation}
We consider $\zeta_{p,q}^I = \frac{1}{\epsilon_{p,q}^I} \rightarrow +\infty$ for all $p$ and $q$, and any quantity can be asymptotically expended as $f=f_{p,q}^{(0)} + \sum_{n_a=1}^{\infty} (\epsilon_{p,q}^I)^{n_a} f_{p,q}^{(n_a)}$.
As a result, Eq.~(\ref{Eq:PressureDifference}) requires $P_p = P_q = P$ for all $p$ and $q$ on the leading order of the asymptotic expansion, which is the isobaric closure. Due to the second law of thermodynamics discussed in Section~\ref{Sec:SeconLaw}, $P_{p,q}^I=P$ is true for its leading order. Using the fact that $\sum_{p,q=1}^N \zeta_{p,q}^I (P_p - P_q)=0$, we finally obtain
\begin{equation}\label{Eq:PressureRelaxation}
\begin{split}
\lim_{\zeta_{p,q}^I \rightarrow +\infty, \forall p,q} \sum_{q=1}^N \zeta_{p,q}^I (P_p - P_q)
=
\alpha_p \left(\frac{\rho c^2}{\rho_p c_p^2} - 1\right) \nabla \cdot \mathbf{u}
+
\frac{1}{\rho_p c_p^2} \left(\mathbf{J}_p - \sum_{q=1}^N \frac{\alpha_p \rho c^2}{\rho_q c_q^2} \mathbf{J}_q \right) \cdot \nabla P\\
+
\frac{1}{\rho_p c_p^2} \left(
\frac{
\alpha_p \boldsymbol{\tau}_p : \nabla \mathbf{u}
-
\nabla \cdot (\alpha_p \mathbf{Q}_p)
+
\sum_{q=1}^N Q_{p \leftarrow q}^I
}{\rho_p \left(\frac{\partial e_p}{\partial P_p}\right)_{\rho_p}}
-
\sum_{q=1}^N
\frac{\alpha_p \rho c^2}{\rho_q c_q^2}
\frac{
\alpha_q \boldsymbol{\tau}_q : \nabla \mathbf{u}
-
\nabla \cdot (\alpha_q \mathbf{Q}_q)
+
\sum_{p=1}^N Q_{q \leftarrow p}^I
}{\rho_q \left(\frac{\partial e_q}{\partial P_q}\right)_{\rho_q}}
\right),
\end{split}
\end{equation}
from Eq.~(\ref{Eq:PressureDifference}), where $c$ is the Wood sound speed \citep{Wood1930} defined by $\rho c^2 = \left(\sum_{q=1}^N \frac{\alpha_q}{\rho_q c_q^2}\right)^{-1}$. We note that all the quantities on the right-hand side of Eq.~(\ref{Eq:PressureRelaxation}) is on their leading order of the asymptotic expansion.

As a result, we obtain a reduced five-equation model including the phasic mass equation in Eq.~(\ref{Eq:Mass-Phase}), the mixture momentum equation in Eq.~(\ref{Eq:Momentum}), the mixture total energy equation in Eq.~(\ref{Eq:Energy}), and the volume fraction equation in Eq.~(\ref{Eq:VolumeChange}) with $\sum_{q=1}^N \left(\frac{\mathcal{D}\alpha}{\mathcal{D}t}\right)^I_{p \leftarrow q}=\lim_{\zeta_{p,q}^I \rightarrow +\infty, \forall p,q} \sum_{q=1}^N \zeta_{p,q}^I (P_p - P_q)$ in Eq.~(\ref{Eq:PressureRelaxation}).
The first term in Eq.~(\ref{Eq:PressureRelaxation}) is the $N$-phase extension of the well-known non-conservative term of the five-equation model of Kapila et al. \citep{Kapilaetal2001}.
The third term shows the effect of the stress tensor (excluding the thermodynamic pressure), heat conduction, and heat transfer. If only the heat transfer is taken into account, i.e., $\boldsymbol{\tau}_p=\mathbf{0}$ and $\mathbf{Q}_p = \mathbf{0}$, our formulation is the same as that derived in \citep{Petitpasetal2009}.
The second term appears due to the Phase-Field mechanism. To the best of our knowledge, this term was not reported in previous five-equation models with the Phase-Field mechanism \citep{Shuklaetal2010,Tiwarietal2013,Jainetal2020,JainMoin2022,HuangJohnsen2022,HuangJohnsen2023,HuangJohnsen2024}. Therefore, the present six-equation model with the pressure relaxation should not be simply considered as an alternative of the five-equation models \citep{Shuklaetal2010,Tiwarietal2013,Jainetal2020,JainMoin2022,HuangJohnsen2022,HuangJohnsen2023,HuangJohnsen2024} with the non-conservative term of Kapila et al. \citep{Kapilaetal2001}.

\subsection{Incompressible limit}\label{Sec:Incompressible}
In this section, we discuss the behavior of the proposed model in Eq.~(\ref{Eq:GoverningEquations}) when some or all of the phases are incompressible.
Suppose that Phase~$p$ is incompressible, i.e., $\rho_p$ is a constant, its mass equation becomes
\begin{equation}\label{Eq:Incompressible-Mass}
\frac{\partial \alpha_p}{\partial t}
+
\nabla \cdot \left( \alpha_p\mathbf{u} - \mathbf{J}_p \right)
=
0,
\end{equation}
which actually governs the evolution of its volume fraction. As a result, the volume fraction equation of Phase~$p$ becomes
\begin{equation}\label{Eq:Incompressible-VolumeFraction}
\alpha_p \nabla \cdot \mathbf{u}
=
-\sum_{q=1}^N \zeta_{p,q}^I (P_p - P_q).
\end{equation}
The equation of state for incompressible substances is $\rho_p (e_p-e_p^\infty)=\rho_p C_p^V T_p$, where $e_p^\infty$ is the reference internal energy and $C_p^V$ is the specific heat capacity at constant volume. It is more common to use the internal energy equation for incompressible substances, and Eq.~(\ref{Eq:InternalEnergy-Phase}) becomes
\begin{equation}\label{Eq:Incompressible-InternalEnergy}
\frac{\partial (\alpha_p \rho_p C_p^V T_p)}{\partial t}
+
\nabla \cdot \left( (\alpha_p \mathbf{u} - \mathbf{J}_p) \rho_p C_p^V T_p \right)
=
\alpha_p \boldsymbol{\tau}_p : \nabla \mathbf{u}
+
\sum_{q=1}^N \zeta_{p,q}^I (P_p - P_{p,q}^I) (P_p - P_q)
-
\nabla \cdot (\alpha_p \mathbf{Q}_p)
+
\sum_{q=1}^N Q_{p \leftarrow q}^I.
\end{equation}

When all the phases are incompressible, summing Eq.~(\ref{Eq:Incompressible-VolumeFraction}) over $p$, we obtain
\begin{equation}\label{Eq:Incompressible-DivergenceFree}
\nabla \cdot \mathbf{u}
=
0,
\end{equation}
which is the divergence-free condition for incompressible flows.
As $\sum_{q=1}^N \mathbf{J}_q=\mathbf{0}$, required by the second law of thermodynamics in Eq.~(\ref{Eq:SecondLaw-PhaseField}), $\sum_{q=1} \alpha_q = 1$ is implied in Eq.~(\ref{Eq:Incompressible-Mass}) together with Eq.~(\ref{Eq:Incompressible-DivergenceFree}).
Combining Eq.~(\ref{Eq:Incompressible-VolumeFraction}) and Eq.~(\ref{Eq:Incompressible-DivergenceFree}), we obtain $\sum_{q=1}^N \zeta_{p,q}^I (P_p - P_q)=0$, for $1 \leqslant p \leqslant N$, implying the isobaric closure. Using the single-temperature assumption and summing Eq.~(\ref{Eq:Incompressible-InternalEnergy}) over $p$, we obtain
\begin{equation}\label{Eq:Incompressible-HeatTransfer}
\frac{\partial \left(\sum_{p=1}^N(\alpha_p \rho_p C_p^V) T\right)}{\partial t}
+
\nabla \cdot \left( \sum_{p=1}^N \left((\alpha_p \mathbf{u} - \mathbf{J}_p) \rho_p C_p^V\right) T \right)
=
\boldsymbol{\tau} : \nabla \mathbf{u}
-
\nabla \cdot \mathbf{Q},
\end{equation}
where $\boldsymbol{\tau}=\sum_{q=1}^N \alpha_q \boldsymbol{\tau}_q$. Eq.~(\ref{Eq:Incompressible-HeatTransfer}) is the heat equation commonly used in incompressible multiphase flows.
The momentum equation in Eq.~(\ref{Eq:GoverningEquations}) remains the same in the incompressible limit.
We note that the proposed model in Eq.~(\ref{Eq:GoverningEquations}) in its incompressible limit recovers the consistent and conservative Phase-Field model for \textit{incompressible} multiphase flows \citep{Huangetal2020,Huangetal2020N,Huangetal2020Solid}, where the incompressible assumption was applied at the beginning.

\section{Numerical approach}\label{Sec:NumericalScheme}
In this section, we extend the consistent and conservative numerical approach for the five-equation models with the Phase-Field mechanism \citep{HuangJohnsen2022,HuangJohnsen2023,HuangJohnsen2024} to the current six-equation model in Eq.~(\ref{Eq:GoverningEquations}).
This approach provides a unified bound-preserving framework to incorporate different formulations of the Phase-Field mechanism into compressible multiphase flows with possible high-order implementation \citep{Whiteetal2025} and adaptive mesh refinement \citep{Huangetal2025}.
Based on the physical mechanism in Eq.~(\ref{Eq:GoverningEquations}), the approach consists of the hyperbolic step, Phase-Field step, and relaxation step, which are discussed separately.
Our emphasis is on the relaxation step, which is the new component in comparison to our previous approach for the five-equation models. Specifically, we theoretically prove that there exists a unique solution that is thermodynamically admissible for the $N$-phase pressure and pressure-temperature relaxations, with the widely-used equation of state by Le M{\'e}tayer et al. \citep{LeMetayeretal2005}.
For the hyperbolic and Phase-Field steps, moderate modifications to our previous approach are needed to adapt to the present six-equation model in Eq.~(\ref{Eq:GoverningEquations}).

\subsection{Relaxation step}\label{Sec:Relaxation}
In this section, we focus on the relaxation process in Eq.~(\ref{Eq:GoverningEquations}), which is formulated as
\begin{equation}\label{Eq:RelaxationStep}
\frac{\partial \mathbf{U}}{\partial t}
=
\mathbf{S}^{RX},
\end{equation}
where the Newton's law of cooling is used for the heat transfer, i.e., $Q_{p \leftarrow q}^I=h_{p,q}^I \delta_{p,q}^I(T_q -T_p)$.
In this section, we limit our analysis to the equation of state (EOS) by Le M{\'e}tayer et al. \citep{LeMetayeretal2005}, which reads
\begin{equation}\label{Eq:EOS}
\rho_p (e_p-e_p^\infty)
=
\frac{1}{\gamma_p-1} P_p + \frac{\gamma_p P_p^\infty}{\gamma_p-1}
=
\rho_p C_p^V T_p + P_p^\infty,\quad
1 \leqslant p \leqslant N,
\end{equation}
where $\gamma_p$, $P_p^\infty$, $C_p^V$, and $e_p^\infty$ are material-dependent parameters. With an appropriate choice of the parameters, Eq.~(\ref{Eq:EOS}) recovers the ideal gas or stiffened gas equation of state. We note that Eq.~(\ref{Eq:EOS}) requires $P_p \geqslant P_p^{\min}=-P_p^\infty$ to have $T_p \geqslant 0$ and real $c_p$.

Given that $\mathbf{U}$ is thermodynamically admissible, i.e., $(\alpha_p\rho_p) \geqslant 0$, $P_p \geqslant P_p^{\min}=-P_p^\infty$, and $0 \leqslant \alpha_p \leqslant 1$, for $1 \leqslant p \leqslant N$, and $\sum_{q=1}^N \alpha_q=1$, we analyze both the instantaneous pressure relaxation and instantaneous pressure-temperature relaxation, proving that there exists after the relaxation a unique solution that is thermodynamically admissible. In the section, the state after the relaxation is denoted as $\mathbf{U}^*$.
Without loss of generality, we order the phases in such a way that $P_1^{\min} \geqslant P_2^{\min} \geqslant ... \geqslant P_N^{\min}$ (or equivalently $P_1^{\infty} \leqslant P_2^{\infty} \leqslant ... \leqslant P_N^{\infty}$). 

\subsubsection{Pressure relaxation}\label{Sec:Relaxation-P}
The instantaneous pressure relaxation has $\zeta_{p,q}^I \rightarrow +\infty$ and $h_{p,q}^I=0$ for $1 \leqslant p,q \leqslant N$, resulting in $P_p^*=P^*$ for $1 \leqslant p \leqslant N$. After integrating Eq.~(\ref{Eq:RelaxationStep}) over time, we obtain
\begin{equation}\label{Eq:Relaxation-Pressure-Integration}
(\alpha_p \rho_p)^* = (\alpha_p \rho_p),\quad
(\rho \mathbf{u})^* = (\rho \mathbf{u}),\quad
(\alpha_p \rho_p E_p)^* - (\alpha_p \rho_p E_p)
=
-P^* (\alpha_p^* - \alpha_p),\quad
\sum_{q=1}^N \alpha_q^*=\sum_{q=1}^N \alpha_q=1.
\end{equation}
Therefore, $(\alpha_p \rho_p)^*=(\alpha_p \rho_p) \geqslant 0$ is admissible.
When integrating the phasic total energy equation (the third equation in Eq.~(\ref{Eq:Relaxation-Pressure-Integration})), $P_{p,q}^I$ is treated implicitly, which is critical to achieve the thermodynamically admissible volume fraction.
Combining Eq.~(\ref{Eq:Relaxation-Pressure-Integration}) with EOS in Eq.~(\ref{Eq:EOS}), we obtain the relaxed volume fraction in terms of the relaxed pressure, which reads
\begin{equation}\label{Eq:Relaxation-Pressure-VolumeFraction}
\alpha_p^*
=
\frac{(P_p + P_p^\infty) + (\gamma_p - 1) (P^* + P_p^\infty)}{\gamma_p (P^* + P_p^\infty)} \alpha_p,
\end{equation}
After enforcing $\sum_{q=1}^N \alpha_q^*=1$, $P^*$ is the zero of $\mathcal{F}^P(\mathcal{P})$ defined as
\begin{equation}\label{Eq:Relaxation-Pressure}
\mathcal{F}^P(\mathcal{P})
=
\sum_{q=1}^N
\frac{(\mathcal{P}-P_q)}{(\mathcal{P} + P_q^\infty)} \frac{\alpha_q}{\gamma_q}.
\end{equation}
As the relaxed pressure is shared by all the phases, $P^* > \max_q(P_q^{\min})$ is needed, which results in $\alpha_p^* \geqslant 0$ for $1 \leqslant p \leqslant N$ from Eq.~(\ref{Eq:Relaxation-Pressure-VolumeFraction}) (recalling that $P_p^{\min}=-P_p^{\infty}$). Since $\sum_{q=1}^N \alpha_q^*=1$ has been enforced, $\alpha_p^*$ is admissible for $1 \leqslant p \leqslant N$.
Furthermore, to be consistent with the second law of thermodynamics, $P^*$ should not exceed $\left[\min_q(P_q),\max_q(P_q)\right]$, as discussed in Section~\ref{Sec:SeconLaw}. Therefore, the final task is to prove that $\mathcal{F}^P(\mathcal{P})=0$ has a unique solution in $\left[\min_q(P_q),\max_q(P_q)\right] \cup \left(\max_q(P_q^{\min}),\max_q(P_q)\right]$.

We first determine the distribution of the zeros of $\mathcal{F}^{P}(\mathcal{P})$, with the following theorem.
\begin{theorem}\label{Theorem:Relaxation-Pressure-Uniqueness}
$\mathcal{F}^P(\mathcal{P})=0$ has $N$ real roots in $(-\infty,P_N^{\min})$, $(P_p^{\min},P_{p+1}^{\min})$ for $1 \leqslant p \leqslant (N-1)$, and $(P_1^{\min},+\infty)$, with precisely one root per interval.
\end{theorem}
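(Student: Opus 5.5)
Since $\mathcal{F}^P$ is a sum of simple fractions in $\mathcal{P}$, the plan is to clear denominators and count sign changes. Write $\mathcal{F}^P(\mathcal{P}) = \sum_q \frac{\alpha_q}{\gamma_q}\bigl(1 - \frac{P_q+P_q^\infty}{\mathcal{P}+P_q^\infty}\bigr)$. Each term is a constant minus $b_q/(\mathcal{P}-P_q^{\min})$ with $b_q=\frac{\alpha_q}{\gamma_q}(P_q-P_q^{\min})\geqslant 0$. I will assume the generic situation $\alpha_q>0$, $P_q>P_q^{\min}$ and strictly ordered $P_q^{\min}$, so that all $b_q>0$ and the poles are distinct. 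Degenerate cases, where some $b_q=0$ or poles coincide, merely cancel a pole together with its associated root, and I would treat them as limits or remarks.

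Multiply by $D(\mathcal{P})=\prod_{q}(\mathcal{P}-P_q^{\min})$ to obtain the polynomial
$$\mathcal{G}(\mathcal{P})=D(\mathcal{P})\,\mathcal{F}^P(\mathcal{P})=\sum_q \frac{\alpha_q}{\gamma_q}(\mathcal{P}-P_q)\prod_{r\neq q}(\mathcal{P}-P_r^{\min}).$$
It has degree exactly $N$, with leading coefficient $\sum_q \alpha_q/\gamma_q>0$ because $\sum\alpha_q=1$. Away from the poles, the zeros of $\mathcal{F}^P$ are exactly the zeros of $\mathcal{G}$, so $\mathcal{F}^P$ has at most $N$ real roots. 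It therefore suffices to exhibit one root in each of the $N$ listed intervals; uniqueness per interval then follows from the degree count.

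For existence, I will use the derivative
$$\frac{d\mathcal{F}^P}{d\mathcal{P}}=\sum_q \frac{b_q}{(\mathcal{P}-P_q^{\min})^2}>0,$$
so $\mathcal{F}^P$ is strictly increasing on every interval between consecutive poles. This also gives uniqueness directly, without the polynomial argument. Near the pole $P_p^{\min}$ we have $\mathcal{F}^P\to +\infty$ as $\mathcal{P}\to (P_p^{\min})^-$ and $\mathcal{F}^P\to-\infty$ as $\mathcal{P}\to (P_p^{\min})^+$, since $b_p>0$. On each bounded interval $(P_{p+1}^{\min},P_p^{\min})$, which the statement writes with the indices in ascending order of value, $\mathcal{F}^P$ runs from $-\infty$ to $+\infty$, and the intermediate value theorem gives exactly one root. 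The two unbounded intervals are the delicate part. As $\mathcal{P}\to\pm\infty$, $\mathcal{F}^P\to \sum_q\alpha_q/\gamma_q>0$. On $(P_1^{\min},+\infty)$ the function increases from $-\infty$ to a positive limit, so it has exactly one root. On $(-\infty,P_N^{\min})$ it increases from the positive limit $\sum\alpha_q/\gamma_q$ to $+\infty$, so monotonicity alone gives no sign change there.

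That leftmost interval is the main obstacle, and the polynomial route settles it. Between consecutive poles and on the right we have found $N-1$ roots of $\mathcal{G}$. Since $\mathcal{G}$ is a real polynomial of degree $N$, its non-real roots come in conjugate pairs, so the one remaining root must be real. To locate it, compare signs: $\mathcal{G}(\mathcal{P})\sim(\sum\alpha_q/\gamma_q)\mathcal{P}^N$ as $\mathcal{P}\to-\infty$, which has sign $(-1)^N$. Evaluating $\mathcal{G}$ at the pole $P_N^{\min}$ leaves only the $q=N$ term, $\frac{\alpha_N}{\gamma_N}(P_N^{\min}-P_N)\prod_{r<N}(P_N^{\min}-P_r^{\min})$. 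This has sign $(-1)\cdot(-1)^{N-1}=(-1)^N$ as well, so the sign test there is inconclusive. I would therefore check carefully whether a root genuinely lies left of $P_N^{\min}$. Using $\mathcal{F}^P>0$ there, it would have to come from a vanishing of $D$ rather than of $\mathcal{F}^P$, which suggests the counting must be rechecked against the theorem's exact interval list. If it fails, I would reconcile the statement, for example by treating $\mathcal{P}=\pm\infty$ or a degenerate phase, before finalizing. What matters for the subsequent admissibility argument is the unique root in $(P_1^{\min},+\infty)$ established above.
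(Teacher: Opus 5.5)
Everything up to and including your treatment of the two unbounded branches is correct. It follows essentially the paper's route: limits at the poles and at $\pm\infty$, the intermediate value theorem, and the degree-$N$ bound. Your observation $d\mathcal{F}^P/d\mathcal{P}=\sum_q b_q/(\mathcal{P}-P_q^{\min})^2>0$ sharpens it by giving uniqueness on each branch directly.

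The error is in your final paragraph. You say you have found $N-1$ roots of $\mathcal{G}$. In fact the $N-1$ bounded branches $(P_{p+1}^{\min},P_p^{\min})$, $1\leqslant p\leqslant N-1$, together with $(P_1^{\min},+\infty)$, give $N$ roots. Since $\deg\mathcal{G}=N$, every root is already accounted for and there is no ``remaining root.'' The conjugate-pair argument and the sign test at $P_N^{\min}$ are therefore looking for something that does not exist; your finding that the two signs agree is consistent with zero roots there. You also already hold the direct argument. For $\mathcal{P}<P_N^{\min}$ every $\mathcal{P}-P_q^{\min}<0$, so $\mathcal{F}^P(\mathcal{P})=\sum_q\alpha_q/\gamma_q-\sum_q b_q/(\mathcal{P}-P_q^{\min})\geqslant\sum_q\alpha_q/\gamma_q>0$. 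Hence the leftmost branch contains no root.

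The proof should therefore end with a definite conclusion rather than ``check carefully \ldots\ reconcile.'' The appearance of $(-\infty,P_N^{\min})$ in the statement is a misprint. Read literally, the statement contradicts itself: $N+1$ intervals with precisely one root each would give $N+1>N$ roots, exceeding the degree bound. The paper's own proof mentions $(-\infty,P_N^{\min})$ only as one of the $N+1$ continuous branches. It concludes that the $N$ roots lie one per interval in $(P_{p+1}^{\min},P_p^{\min})$, $1\leqslant p\leqslant N-1$, and in $(P_1^{\min},+\infty)$, which is exactly what your monotonicity argument establishes.

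Your genericity assumptions are not a weakness. The paper's pole limits tacitly require $b_p>0$ and distinct $P_p^{\min}$; for example, with two ideal gases ($P_1^\infty=P_2^\infty=0$) there is only one root. You should state these as hypotheses of the theorem rather than defer them to remarks.
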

\begin{proof}\label{Proof:Relaxation-Pressure-Uniqueness}
$\mathcal{F}^P(\mathcal{P})=0$ is equivalent to root-finding for a polynomial of degree $N$ with respect to $\mathcal{P}$. Therefore, $\mathcal{F}^P(\mathcal{P})=0$ has at most $N$ real roots.
In addition, $\mathcal{F}^P(\mathcal{P})$ has $N$ poles, which are $P_p^{\min}=-P_p^\infty$ for $1 \leqslant p \leqslant N$, resulting in $(N+1)$ continuous branches, which are $(-\infty,P_N^{\min})$, $(P_{p+1}^{\min},P_{p}^{\min})$ for $1 \leqslant p \leqslant N-1$, and $(P_1^{\min},+\infty)$.
It is straightforward to show that
\begin{equation}\label{Eq:Relaxation-Pressure-Uniqueness-Limit}
\lim_{\mathcal{P}\rightarrow \pm\infty} \mathcal{F}^P(\mathcal{P})=\sum_{q=1}^N\frac{\alpha_q}{\gamma_q} > 0,\quad
\lim_{\mathcal{P}\rightarrow \left(P_p^{\min}\right)^+} \mathcal{F}^P(\mathcal{P})=-\infty,\quad
\lim_{\mathcal{P}\rightarrow \left(P_p^{\min}\right)^-} \mathcal{F}^P(\mathcal{P})=+\infty,
\end{equation}
with the fact that $P_p \geqslant P_p^{\min} = -P_p^{\infty}$.
As a result, there must be roots in $(P_{p+1}^{\min},P_{p}^{\min})$ (for $1 \leqslant p \leqslant N-1$) and $(P_1^{\min},+\infty)$, with precisely one root per interval and in total $N$ real roots.
\end{proof}

Theorem~\ref{Theorem:Relaxation-Pressure-Uniqueness} proves that there is only one root (the largest root) of $\mathcal{F}^P(\mathcal{P})=0$ that is larger than $P_1^{\min}=\max_q(P_q^{\min})$. Then, we further prove that this largest root satisfies the second law of thermodynamics with the following theorem.
\begin{theorem}\label{Theorem:Relaxation-Pressure-Admissible}
The largest root of $\mathcal{F}^P(\mathcal{P})=0$ is in $\left[\min_q(P_q),\max_q(P_q)\right] \cup \left(\max_q(P_q^{\min}),\max_q(P_q)\right]$.
\end{theorem}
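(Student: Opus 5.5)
We evaluate the sign of $\mathcal{F}^P$ at the endpoints of the target set and combine this with Theorem~\ref{Theorem:Relaxation-Pressure-Uniqueness}. That theorem says $\mathcal{F}^P$ has exactly one root in $(P_1^{\min},+\infty)$, where $P_1^{\min}=\max_q(P_q^{\min})$. It also shows that $\mathcal{F}^P$ is continuous on this branch, tends to $-\infty$ as $\mathcal{P}\rightarrow (P_1^{\min})^+$, and tends to $\sum_q \alpha_q/\gamma_q>0$ as $\mathcal{P}\rightarrow+\infty$. Because the root on this branch is unique, $\mathcal{F}^P$ is negative to the left of it and positive to the right. Consequently, for any point $\mathcal{P}_0$ on the branch, $\mathcal{F}^P(\mathcal{P}_0)\geqslant 0$ if and only if the largest root $P^*$ satisfies $P^*\leqslant \mathcal{P}_0$.

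\textbf{Upper bound.} Take $\mathcal{P}_0=\max_q(P_q)$. Since $P_q\geqslant P_q^{\min}$ for every $q$, we have $\max_q(P_q)\geqslant \max_q(P_q^{\min})=P_1^{\min}$. If this inequality is strict, $\mathcal{P}_0$ lies on the branch. Every numerator $\mathcal{P}_0-P_q\geqslant 0$ and every denominator $\mathcal{P}_0+P_q^\infty>0$, so $\mathcal{F}^P(\mathcal{P}_0)\geqslant 0$ and hence $P^*\leqslant\max_q(P_q)$. The degenerate case $\max_q(P_q)=P_1^{\min}$ forces $P_1=P_1^{\min}$ and all other $P_q\leqslant P_1^{\min}$. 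Here I would need the convention that the $q=1$ term vanishes, which amounts to $\alpha_1$-weighted cancellation of the pole. I expect this to be treated either as excluded by strict admissibility or via a limiting argument; this is the delicate point.

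\textbf{Lower bound.} Take $\mathcal{P}_0=\min_q(P_q)$ and split into two cases. If $\min_q(P_q)\leqslant P_1^{\min}$, the lower end of the admissible set is $\max_q(P_q^{\min})$, and $P^*>P_1^{\min}$ holds automatically because the root lies on the branch $(P_1^{\min},+\infty)$. This is the second set in the union. If $\min_q(P_q)>P_1^{\min}$, then $\mathcal{P}_0$ is on the branch, every numerator is $\leqslant 0$ and every denominator is positive, so $\mathcal{F}^P(\mathcal{P}_0)\leqslant 0$ and $P^*\geqslant\min_q(P_q)$. This gives $P^*\in[\min_q(P_q),\max_q(P_q)]$.

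Combining the two bounds yields the stated membership, and uniqueness follows from Theorem~\ref{Theorem:Relaxation-Pressure-Uniqueness}. The main obstacle is the boundary and degenerate situations: the case where $\max_q(P_q)$ coincides with a pole, and phases with $\alpha_q=0$, where the pole at $P_q^{\min}$ disappears. I would handle these by removing the absent phases from the sum and noting that the remaining pole structure is unchanged.
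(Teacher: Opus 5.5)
Your proof is the same as the paper's. The largest root lies on the branch $(P_1^{\min},+\infty)$ by Theorem~\ref{Theorem:Relaxation-Pressure-Uniqueness}, and $\mathcal{F}^P(\max_q(P_q))\geqslant 0$ gives the upper bound; for the lower bound, either $\min_q(P_q)\leqslant P_1^{\min}$ (the second set of the union applies) or $\mathcal{F}^P(\min_q(P_q))\leqslant 0$. The paper does not treat your degenerate case $\max_q(P_q)=P_1^{\min}$ either, since its pole limits in Theorem~\ref{Theorem:Relaxation-Pressure-Uniqueness} tacitly need $P_p>P_p^{\min}$; there, however, the $q=1$ term does not vanish but equals $\alpha_1/\gamma_1$ identically, and all other terms are nonnegative on $(P_1^{\min},+\infty)$, so $\mathcal{F}^P>0$ on the whole branch and the case has to be excluded by hypothesis rather than recovered by a limiting argument.
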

\begin{proof}\label{Proof:Relaxation-Pressure-Admissible}
From Theorem~\ref{Theorem:Relaxation-Pressure-Uniqueness}, the largest root of $\mathcal{F}^P(\mathcal{P})=0$ must be in $(P_1^{\min},+\infty)$, recalling that $P_1^{\min}=\max_q(P_q^{\min})$.
Since $\max_q(P_q) \geqslant P_p \geqslant P_p^{\min}=-P_p^{\infty}$, for $1 \leqslant p \leqslant N$, $\mathcal{F}^{P}\left(\max_q(P_{q})\right) \geqslant 0$ is true. Along with $\lim_{\mathcal{P}\rightarrow \left(P_1^{\min}\right)^+} \mathcal{F}^P(\mathcal{P})=-\infty$, the largest root of $\mathcal{F}^P(\mathcal{P})=0$ now must be in $\left(\max_q(P_q^{\min}),\max_q(P_q)\right]$.

When $\min_q(P_q) \leqslant P_1^{\min}$, the largest root of $\mathcal{F}^P(\mathcal{P})=0$ is still in $\left(\max_q(P_q^{\min}),\max_q(P_q)\right]$, so we only need to consider $\min_q(P_q) > P_1^{\min}$. In this case, we have $\min_q(P_q) - P_p \leqslant 0$ and $\min_q(P_q) + P_p^{\infty} > 0$, for $1 \leqslant p \leqslant N$, resulting in $\mathcal{F}^{P}\left(\min_q(P_q)\right) \leqslant 0$. Therefore, the largest root of $\mathcal{F}^P(\mathcal{P})=0$ must be in $\left[\min_q(P_q),\max_q(P_q)\right]$.

In conclusion, the largest root of $\mathcal{F}^P(\mathcal{P})=0$ is in $\left[\min_q(P_q),\max_q(P_q)\right] \cup \left(\max_q(P_q^{\min}),\max_q(P_q)\right]$.
\end{proof}
\begin{remark}\label{Remark:Relaxation-Pressure-Admissible}
When there are only two phases ($N=2$), $\mathcal{F}^{P}(\mathcal{P})=0$ is equivalent to the following quadratic equation with the analytical solution
\begin{equation}\label{Eq:Relaxation-Pressure-TwoPhase}
\begin{split}
a^P \mathcal{P}^2 + b^P \mathcal{P} + c^P = 0,\quad
\mathcal{P}^{\pm} = \frac{-b^P \pm \sqrt{(b^P)^2 - 4 a^P c^P}}{2a^P},\\
a^P=\frac{\alpha_1}{\gamma_1} + \frac{\alpha_2}{\gamma_2},\quad
b^P=\frac{\alpha_1 P_2^\infty - (\alpha_1 P_1)}{\gamma_1} + \frac{\alpha_2 P_1^\infty - (\alpha_2 P_2)}{\gamma_2},\quad
c^P= -\left(\frac{(\alpha_1 P_1)}{\gamma_1} P_2^\infty + \frac{(\alpha_2 P_2)}{\gamma_2} P_1^\infty\right).
\end{split}
\end{equation}
From Theorem~\ref{Theorem:Relaxation-Pressure-Admissible}, $P^*=\mathcal{P}^+$ is thermodynamically admissible. It is also straightforward to show that $P^- \leqslant \min(P_1,P_2)$, which is inadmissible.
\end{remark}

\subsubsection{Pressure-Temperature relaxation}\label{Sec:Relaxation-PT}
The instantaneous pressure-temperature relaxation has $\zeta_{p,q}^I \rightarrow +\infty$ and $h_{p,q}^I \rightarrow +\infty$ for $1 \leqslant p,q \leqslant N$, resulting in $P_p^*=P^*$ and $T_p^*=T^*$, respectively, for $1 \leqslant p \leqslant N$. After integrating Eq.~(\ref{Eq:RelaxationStep}) over time, we obtain
\begin{equation}\label{Eq:Relaxation-PressureTemperature-Integration}
(\alpha_p \rho_p)^* = (\alpha_p \rho_p),\quad
(\rho \mathbf{u})^* = (\rho \mathbf{u}),\quad
\sum_{q=1}^N (\alpha_q \rho_q E_q)^*=\sum_{q=1}^N (\alpha_q \rho_q E_q)=(\rho E),\quad
\sum_{q=1}^N \alpha_q^*=\sum_{q=1}^N \alpha_q=1.
\end{equation}
Therefore, $(\alpha_p \rho_p)^*=(\alpha_p \rho_p) \geqslant 0$ is admissible.
Combining Eq.~(\ref{Eq:Relaxation-PressureTemperature-Integration}) with EOS in Eq.~(\ref{Eq:EOS}), we obtain the relaxed temperature and volume fraction in terms of the relaxed pressure, which reads
\begin{equation}\label{Eq:Relaxation-PressureTemperature-TemperatureVolumeFraction}
T^* = \frac{1}{\sum_{q=1}^N \frac{C_q^V (\gamma_q-1) (\alpha_q \rho_q)}{(P^* + P_q^\infty)}},\quad
\alpha_p^*
=
\frac{C_p^V (\gamma_p-1) (\alpha_p \rho_p)}{(P^* + P_p^\infty)} T^*
=
\frac{\frac{C_p^V (\gamma_p-1) (\alpha_p \rho_p)}{(P^* + P_p^\infty)}}{\sum_{q=1}^N \frac{C_q^V (\gamma_q-1) (\alpha_q \rho_q)}{(P^* + P_q^\infty)}}.
\end{equation}
Finally, $P^*$ is determined by energy conservation (the third equation in Eq.~(\ref{Eq:Relaxation-PressureTemperature-Integration})), which reads
\begin{equation}\label{Eq:Relaxation-PressureTemperature-Energy}
\sum_{q=1}^N \frac{P^*+\gamma_q P_q^\infty}{\gamma_q-1} \alpha_q^*
=
\sum_{q=1}^N \frac{P_q+\gamma_q P_q^\infty}{\gamma_q-1} \alpha_q
=
(\rho E)-\frac{1}{2} \frac{(\rho \mathbf{u})\cdot (\rho \mathbf{u})}{\rho} - \sum_{q=1}^N (\alpha_q \rho_q) e_q^\infty.
\end{equation}
Combining Eq.~(\ref{Eq:Relaxation-PressureTemperature-TemperatureVolumeFraction}) and Eq.~(\ref{Eq:Relaxation-PressureTemperature-Energy}), $P^*$ is the zero of $\mathcal{F}^{PT}(\mathcal{P})$ defined as
\begin{equation}\label{Eq:Relaxation-PressureTemperature}
\begin{split}
\mathcal{F}^{PT}(\mathcal{P})
=
\frac{\mathcal{N}^{PT}(\mathcal{P})}{\mathcal{G}^{PT}(\mathcal{P})}
-
\sum_{q=1}^N \frac{(P_q+P_q^\infty) + (\gamma_q-1) P_q^\infty}{\gamma_q-1} \alpha_q,\\
\mathcal{N}^{PT}(\mathcal{P})
=
\sum_{q=1}^N ((\mathcal{P}+P_q^\infty) + (\gamma_q-1) P_q^\infty) C_q^V (\alpha_q \rho_q) \Pi_{k=1, k \neq q}^N (\mathcal{P} + P_k^\infty),\\
\mathcal{G}^{PT}(\mathcal{P})
=
\sum_{q=1}^N C_q^V (\gamma_q-1) (\alpha_q \rho_q) \Pi_{k=1,k \neq q}^N (\mathcal{P} + P_k^\infty).
\end{split}
\end{equation}
As the relaxed pressure is shared by all the phases, $P^* > \max_q(P_q^{\min})$ is needed, which results in $T^* > 0$ and $0 \leqslant \alpha_p^* \leqslant 1$ for $1 \leqslant p \leqslant N$ from Eq.~(\ref{Eq:Relaxation-PressureTemperature-TemperatureVolumeFraction}) (recalling that $P_p^{\min}=-P_p^{\infty}$). As $\sum_{q=1}^N \alpha_q^*=1$ has been enforced, $\alpha_p^*$ is admissible for $1 \leqslant p \leqslant N$. Therefore, the remaining task is to prove that $\mathcal{F}^{PT}(\mathcal{P})=0$ has a unique solution in $\left(\max_q(P_q^{\min}),+\infty\right)$.
We note that, for pressure-temperature relaxation, $P^*$ does not need to be in $\left[\min_q(P_q),\max_q(P_q)\right]$ to satisfy the second law, because there is extra entropy production due to heat transfer. A straightforward two-phase example is that $P_1=P_2=P^{(0)}$ but $T_1 \neq T_2$, and $P^*$ after the pressure-temperature relaxation in general does not remain the value of $P^{(0)}$.

To begin our analysis, the following lemmas prove the important properties of $\mathcal{G}^{PT}(\mathcal{P})$ and $\mathcal{N}^{PT}(\mathcal{P})$. 
\begin{lemma}\label{Lemma:Relaxation-PressureTemperature-Pole}
$\mathcal{G}^{PT}(\mathcal{P})=0$ has $(N-1)$ real roots in $[P_{p+1}^{\min},P_p^{\min}]$, for $1 \leqslant p \leqslant (N-1)$, with precisely one root per interval.
\end{lemma}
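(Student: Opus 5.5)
The plan is to treat $\mathcal{G}^{PT}$ as a polynomial and combine a sign-alternation argument on the poles with a degree count. I write $w_q = C_q^V(\gamma_q-1)(\alpha_q\rho_q) \geqslant 0$, so that $\mathcal{G}^{PT}(\mathcal{P}) = \sum_{q=1}^N w_q \Pi_{k\neq q}(\mathcal{P}+P_k^\infty)$. This is a polynomial in $\mathcal{P}$ of degree at most $N-1$, with leading coefficient $\sum_q w_q$. I assume, as is implicit for an admissible state, that $\sum_q w_q>0$, i.e.\ the mixture carries positive mass. Then $\mathcal{G}^{PT}$ has degree exactly $N-1$ and hence at most $N-1$ real roots. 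It therefore suffices to exhibit at least one root in each of the $N-1$ closed intervals $[P_{p+1}^{\min},P_p^{\min}]$, and then to argue that no root is counted twice.

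\textbf{Step 1 (generic case).} Assume first that all $w_p>0$ and that the $P_p^\infty$ are pairwise distinct, so $P_1^\infty<P_2^\infty<\dots<P_N^\infty$. Evaluate $\mathcal{G}^{PT}$ at the nodes $\mathcal{P}=P_p^{\min}=-P_p^\infty$. Every term with $q\neq p$ contains the factor $(\mathcal{P}+P_p^\infty)=0$, so only the $q=p$ term survives:
\[
\mathcal{G}^{PT}(P_p^{\min}) = w_p \,\Pi_{k\neq p}\,(P_k^\infty - P_p^\infty).
\]
Exactly $p-1$ of these factors (those with $k<p$) are negative, so $\operatorname{sign}\mathcal{G}^{PT}(P_p^{\min})=(-1)^{p-1}$. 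Consecutive nodes therefore carry opposite signs. By the intermediate value theorem, each open interval $(P_{p+1}^{\min},P_p^{\min})$ contains a root, giving $N-1$ roots in total. Since the degree is $N-1$, there is precisely one root per interval.

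An equivalent and perhaps cleaner route is available here, mirroring the proof of Theorem~\ref{Theorem:Relaxation-Pressure-Uniqueness}. Away from the nodes, write $\mathcal{G}^{PT}=\Pi_k(\mathcal{P}+P_k^\infty)\cdot\sum_q w_q/(\mathcal{P}+P_q^\infty)$. The sum is strictly decreasing on each branch between poles and runs from $+\infty$ down to $-\infty$, so it has exactly one zero on each inner branch.

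\textbf{Step 2 (degenerate cases).} I expect this to be the main obstacle. Two kinds of degeneracy arise: some $w_p=0$ (phase $p$ absent), or coinciding stiffness values $P_p^\infty=P_{p+1}^\infty$. In either case a node value vanishes, so $\mathcal{G}^{PT}$ may have a root exactly at an endpoint, and an interval may even collapse to a single point. My first approach is a continuity argument. Perturb the data to $w_p^\varepsilon>0$ and strictly ordered $P_p^{\infty,\varepsilon}$, apply Step 1, and let $\varepsilon\to0$. Because the leading coefficient stays bounded away from zero, the roots of a polynomial depend continuously on its coefficients. Hence the $p$-th root converges to a point of the closed interval $[P_{p+1}^{\min},P_p^{\min}]$, and the limiting roots, counted with multiplicity, are exactly the $N-1$ roots of $\mathcal{G}^{PT}$.

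One could alternatively factor out the common linear factors directly. For example, if $P_p^\infty=P_{p+1}^\infty$, then $(\mathcal{P}+P_p^\infty)$ divides every term, so it is a root lying in the degenerate interval $[P_{p+1}^{\min},P_p^{\min}]=\{P_p^{\min}\}$. The remaining polynomial then reduces to the same structure with merged weights $w_p+w_{p+1}$, and induction on $N$ finishes. In either approach the precise meaning of ``precisely one root per interval'' should be understood with multiplicity when adjacent closed intervals share an endpoint root. I will state this interpretation explicitly, since it is what is needed later to locate the poles of $\mathcal{F}^{PT}$ outside $(\max_q P_q^{\min},+\infty)$.
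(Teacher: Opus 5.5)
Your Step 1 is the paper's argument: it also uses that $\mathcal{G}^{PT}$ has degree $N-1$, evaluates $\mathcal{G}^{PT}$ at consecutive nodes $P_p^{\min}, P_{p+1}^{\min}$ (where only the $q=p$, resp.\ $q=p+1$, term survives), shows their product is $\leqslant 0$, and concludes by the intermediate value theorem and root counting. Your Step 2 adds care the paper omits: the paper's non-strict inequality on closed intervals does not by itself stop one endpoint root from serving two adjacent intervals, and with some $\alpha_p\rho_p=0$ and $N\geqslant 3$ an interval can even contain two distinct roots, so your perturbation argument with the ``$p$-th root lies in the $p$-th interval'' labeling is the right way to make the degenerate cases rigorous.
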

\begin{proof}\label{Proof:Relaxation-PressureTemperature-Pole}
$\mathcal{G}^{PT}(\mathcal{P})$ is a polynomial of degree $(N-1)$ with respect to $\mathcal{P}$, and thus, there must be at most $(N-1)$ real roots.
Furthermore, algebraic calculations show that
\begin{eqnarray}\label{Eq:Relaxation-PressureTemperature-Pole}
\mathcal{G}^{PT}(P_p^{\min}) \times \mathcal{G}^{PT}(P_{p+1}^{\min})
=
\underbrace{C_p^V C_{p+1}^V (\gamma_p-1) (\gamma_{p+1}-1) (\alpha_p \rho_p) (\alpha_{p+1} \rho_{p+1})}_{\geqslant 0}
\times
\Pi_{k=1}^{p-1} \underbrace{\left(\underbrace{(P_k^\infty-P_p^{\infty})}_{\leqslant 0}\underbrace{(P_k^\infty-P_{p+1}^{\infty})}_{\leqslant 0}\right)}_{\geqslant 0}\\
\nonumber
\times
\Pi_{k=p+1}^N \underbrace{(P_k^\infty-P_p^{\infty})}_{\geqslant 0}
\times
\Pi_{k=p+2}^N \underbrace{(P_k^\infty-P_{p+1}^{\infty})}_{\geqslant 0}
\times
\underbrace{(P_p^\infty-P_{p+1}^{\infty})}_{\leqslant 0} \leqslant 0.
\end{eqnarray}
Therefore, there must be roots in $[P_{p+1}^{\min},P_p^{\min}]$, for $1 \leqslant p \leqslant (N-1)$, with precisely one root per interval and in total $(N-1)$ real roots.
\end{proof}

We denote the roots of $\mathcal{G}^{PT}(\mathcal{P})=0$ as $P_{p,p+1}^c \in [P_{p+1}^{\min},P_p^{\min}]$, for $1 \leqslant p \leqslant (N-1)$, which are actually the poles of $\mathcal{F}^{PT}(\mathcal{P})$.
We can further determine the sign of $\mathcal{G}^{PT}(\mathcal{P})$ as it approaches $P_{p,p+1}^c$.
\begin{lemma}\label{Lemma:Relaxation-PressureTemperature-Denominator-Sign}
The signs of $\lim_{\mathcal{P} \rightarrow \left(P_{p,p+1}^c\right)^{\pm}} \mathcal{G}^{PT}(\mathcal{P})$ are $(-1)^{p-1}$ and $(-1)^{p}$, respectively, for $1 \leqslant p \leqslant (N-1)$.
\end{lemma}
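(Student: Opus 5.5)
The plan is to use only the facts that $\mathcal{G}^{PT}(\mathcal{P})$ is a real polynomial of degree $(N-1)$ with a positive leading coefficient, and that, by Lemma~\ref{Lemma:Relaxation-PressureTemperature-Pole}, all of its roots are real and ordered. The sign just to the right or left of a given root then follows by counting how many roots lie to its right.

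First, I will identify the leading coefficient. Each product $\Pi_{k \neq q}(\mathcal{P}+P_k^\infty)$ is monic of degree $(N-1)$. Hence the coefficient of $\mathcal{P}^{N-1}$ in $\mathcal{G}^{PT}$ is $\sum_{q=1}^N C_q^V(\gamma_q-1)(\alpha_q\rho_q)$. This is strictly positive, since $C_q^V>0$, $\gamma_q>1$, $(\alpha_q\rho_q)\geqslant 0$, and at least one phase carries positive mass (otherwise $\rho=0$ and the relaxation is void). Consequently $\mathcal{G}^{PT}(\mathcal{P})>0$ for all $\mathcal{P}$ larger than its largest root.

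Second, I will order the roots and argue that they are simple. By Lemma~\ref{Lemma:Relaxation-PressureTemperature-Pole}, the $(N-1)$ real roots are $P_{p,p+1}^c\in[P_{p+1}^{\min},P_p^{\min}]$. Because $P_1^{\min}\geqslant P_2^{\min}\geqslant\dots\geqslant P_N^{\min}$, they satisfy $P_{N-1,N}^c\leqslant\dots\leqslant P_{2,3}^c\leqslant P_{1,2}^c$. A degree-$(N-1)$ polynomial with $(N-1)$ roots, precisely one per interval, has all roots simple as long as they are distinct. In that case $\mathcal{G}^{PT}$ changes sign exactly at each root. Starting from $\mathcal{G}^{PT}>0$ on $(P_{1,2}^c,+\infty)$ and moving leftward, the sign flips once per root crossed. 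Exactly $(p-1)$ roots, namely $P_{1,2}^c,\dots,P_{p-1,p}^c$, lie strictly to the right of $P_{p,p+1}^c$. Hence the sign on the interval $(P_{p,p+1}^c,P_{p-1,p}^c)$ is $(-1)^{p-1}$, and after crossing $P_{p,p+1}^c$ it becomes $(-1)^p$. This gives the two one-sided limits claimed, with the limits understood as the sign of $\mathcal{G}^{PT}$ in a punctured neighbourhood, since $\mathcal{G}^{PT}(P_{p,p+1}^c)=0$ itself.

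The main obstacle is the degenerate case where some $P_p^{\min}$ coincide. Then adjacent intervals share endpoints, and two of the $P^c$ might collapse into a multiple root, which would break the simple sign-flip argument. My first attempt would be to show directly that a root lying at a common endpoint $P_p^{\min}=P_{p+1}^{\min}$ is still simple. If $r$ phases share the value $P^\infty$, then $(\mathcal{P}+P^\infty)^{r-1}$ factors out of $\mathcal{G}^{PT}$, and the cofactor does not vanish at $-P^\infty$, because its value there is a positive combination of masses times nonzero products. In that case the corresponding $P^c$'s coincide, and one must interpret the statement through multiplicities or exclude this case. I would therefore state the lemma under strict ordering $P_1^\infty<\dots<P_N^\infty$ with all $(\alpha_q\rho_q)>0$. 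The general case would then follow by that factorization, together with the observation that the factor $(\mathcal{P}+P^\infty)^{r-1}$ also appears in $\mathcal{N}^{PT}$ and cancels in $\mathcal{F}^{PT}$.
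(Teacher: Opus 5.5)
Your argument is correct and is essentially the paper's: the paper writes $\mathcal{G}^{PT}(\mathcal{P})=\mathcal{G}_0^{PT}\Pi_{q=1}^{N-1}(\mathcal{P}-P_{q,q+1}^c)$ with $\mathcal{G}_0^{PT}>0$ and reads off the sign near $P_{p,p+1}^c$ from the $(p-1)$ negative factors contributed by the larger roots, which is your leading-coefficient-plus-sign-flip count in explicit form. Your caveat about coinciding $P_q^\infty$ (or vanishing masses) is well taken, because the paper's definite signs for the factors $P_{p,p+1}^c+\varepsilon-P_{q,q+1}^c$ tacitly assume distinct, hence simple, roots, and your nondegeneracy hypothesis makes that assumption explicit.
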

\begin{proof}\label{Proof:Relaxation-PressureTemperature-Denominator-Sign}
Based on Lemma~\ref{Lemma:Relaxation-PressureTemperature-Pole}, $\mathcal{G}^{PT}(\mathcal{P})$ can be alternatively written as
\begin{equation}\label{Eq:Relaxation-PressureTemperature-Denominator-Alternative}
\mathcal{G}^{PT}(\mathcal{P})
=
\mathcal{G}_{0}^{PT} \Pi_{q=1}^{N-1} (\mathcal{P} - P_{q,q+1}^c),
\end{equation}
where prefactor $\mathcal{G}^{PT}_0$ must be positive from Eq.~(\ref{Eq:Relaxation-PressureTemperature}). As a result,
\begin{equation}\label{Eq:Relaxation-PressureTemperature-Denominator-Limit}
\lim_{\mathcal{P} \rightarrow \left(P_{p,p+1}^c\right)^{\pm}} \mathcal{G}^{PT}(\mathcal{P})
=
\lim_{\varepsilon \rightarrow 0^{\pm}} \mathcal{G}^{PT}(P_{p,p+1}^c + \varepsilon)
=
\lim_{\varepsilon \rightarrow 0^{\pm}} \mathcal{G}_{0}^{PT} \times \varepsilon \times \Pi_{q=1}^{p-1} \underbrace{(P_{p,p+1}^c + \varepsilon - P_{q,q+1}^c)}_{\leqslant 0} \times \Pi_{q=p+1}^{N-1} \underbrace{(P_{p,p+1}^c + \varepsilon - P_{q,q+1}^c)}_{\geqslant 0}.
\end{equation}
Therefore, the signs of $\lim_{\mathcal{P} \rightarrow \left(P_{p,p+1}^c\right)^{\pm}} \mathcal{G}^{PT}(\mathcal{P})$ are $(-1)^{p-1}$ and $(-1)^{p}$, respectively.
\end{proof}

Now, we need to determine the sign of $\mathcal{N}^{PT}(\mathcal{P})$ at $P_{p,p+1}^c$.
\begin{lemma}\label{Lemma:Relaxation-PressureTemperature-Numerator-Sign}
The sign of $\mathcal{N}^{PT}(P_{p,p+1}^c)$ is $(-1)^p$, for $1 \leqslant p \leqslant (N-1)$.
\end{lemma}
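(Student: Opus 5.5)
The plan is to evaluate $\mathcal{N}^{PT}$ at $P_{p,p+1}^c$ in closed form, using $\mathcal{G}^{PT}(P_{p,p+1}^c)=0$ to remove the awkward terms. Set $a_q = C_q^V(\gamma_q-1)(\alpha_q\rho_q)\geqslant 0$ and $\Pi_{\neq q}(\mathcal{P})=\Pi_{k=1,k\neq q}^N(\mathcal{P}+P_k^\infty)$, so that $\mathcal{G}^{PT}(\mathcal{P})=\sum_q a_q \Pi_{\neq q}(\mathcal{P})$. Since $(\mathcal{P}+P_q^\infty)+(\gamma_q-1)P_q^\infty=\mathcal{P}+\gamma_q P_q^\infty$, the numerator reads
\begin{equation*}
\mathcal{N}^{PT}(\mathcal{P})=\sum_{q=1}^N a_q\,\frac{\mathcal{P}+P_q^\infty}{\gamma_q-1}\,\Pi_{\neq q}(\mathcal{P})+\sum_{q=1}^N a_q P_q^\infty\, \Pi_{\neq q}(\mathcal{P}).
\end{equation*}
In the first sum, each term equals $C_q^V(\alpha_q\rho_q)\,\Pi(\mathcal{P})$, where $\Pi(\mathcal{P})=\Pi_{k=1}^N(\mathcal{P}+P_k^\infty)$. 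So the first sum is $\Pi(\mathcal{P})\sum_q C_q^V(\alpha_q\rho_q)$.

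For the second sum, evaluate at $\mathcal{P}=P_{p,p+1}^c$ and subtract $(-P_{p,p+1}^c)\,\mathcal{G}^{PT}(P_{p,p+1}^c)=0$. This gives $\sum_q a_q(P_{p,p+1}^c+P_q^\infty)\Pi_{\neq q}(P_{p,p+1}^c)=\Pi(P_{p,p+1}^c)\sum_q a_q$. Altogether,
\begin{equation*}
\mathcal{N}^{PT}(P_{p,p+1}^c)=\Pi(P_{p,p+1}^c)\sum_{q=1}^N C_q^V\gamma_q(\alpha_q\rho_q),
\end{equation*}
and the sum is positive. It therefore remains to find the sign of $\Pi(P_{p,p+1}^c)$.

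By Lemma~\ref{Lemma:Relaxation-PressureTemperature-Pole}, $P_{p+1}^{\min}\leqslant P_{p,p+1}^c\leqslant P_p^{\min}$. With the ordering $P_1^\infty\leqslant\dots\leqslant P_N^\infty$, the factors behave as follows:
\begin{itemize}
\item for $k\leqslant p$, $P_{p,p+1}^c+P_k^\infty\leqslant -P_p^\infty+P_k^\infty\leqslant 0$;
\item for $k\geqslant p+1$, $P_{p,p+1}^c+P_k^\infty\geqslant -P_{p+1}^\infty+P_k^\infty\geqslant 0$.
\end{itemize}
Hence $\Pi(P_{p,p+1}^c)$ has $p$ nonpositive factors and sign $(-1)^p$, which is the claim.

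The main obstacle is strictness. The argument needs every factor $P_{p,p+1}^c+P_k^\infty$ to be nonzero, i.e.\ $P_{p,p+1}^c$ must lie strictly inside $(P_{p+1}^{\min},P_p^{\min})$. I would first check this from the explicit product in the proof of Lemma~\ref{Lemma:Relaxation-PressureTemperature-Pole}, which is strictly negative when the $P_q^\infty$ are distinct and all $(\alpha_q\rho_q)>0$, so the root cannot sit at an endpoint. The degenerate cases, namely coinciding $P_q^\infty$ or a vanishing phasic mass, would be handled by merging the coincident phases into one effective phase or dropping the absent phase. I would state these as standing nondegeneracy assumptions, consistent with how Lemma~\ref{Lemma:Relaxation-PressureTemperature-Denominator-Sign} is used.
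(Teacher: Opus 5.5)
Your proof is correct, and it differs from the paper's in which multiple of $\mathcal{G}^{PT}(P_{p,p+1}^c)=0$ you use to rewrite the numerator.

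The paper eliminates the $q=p$ term, which amounts to subtracting $P_p^\infty\,\mathcal{G}^{PT}(P_{p,p+1}^c)$. This leaves $\mathcal{N}^{PT}(P_{p,p+1}^c)$ as the product $\Pi_{k=1}^N(P_{p,p+1}^c+P_k^\infty)\sum_q C_q^V(\alpha_q\rho_q)$ plus two sums, over $q<p$ and $q>p$, weighted by $(P_q^\infty-P_p^\infty)$. The paper then checks term by term that every summand has sign $(-1)^p$. It uses the ordering of the $P_q^\infty$ twice: once in the weights and once in the factor products.

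You instead add $P_{p,p+1}^c\,\mathcal{G}^{PT}(P_{p,p+1}^c)$. This collapses everything into the identity $\mathcal{N}^{PT}(\mathcal{P})=\Pi_{k=1}^N(\mathcal{P}+P_k^\infty)\sum_q C_q^V\gamma_q(\alpha_q\rho_q)-\mathcal{P}\,\mathcal{G}^{PT}(\mathcal{P})$, valid for every $\mathcal{P}$, so only one product needs to be signed.

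Your route is shorter and says more. It shows that $\mathcal{N}^{PT}(P_{p,p+1}^c)$ vanishes exactly when $P_{p,p+1}^c$ coincides with some $-P_k^\infty$, i.e.\ when numerator and denominator share a root and the pole cancels. That makes your nondegeneracy caveat precise.

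The paper's argument needs the same condition, since its leading product must be nonzero for a strict sign. It passes over this by writing only non-strict bounds on the factors, so your strictness discussion is an improvement rather than a gap.

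As a side benefit, your identity gives $\mathcal{F}^{PT}(\mathcal{P})=\Pi_{k=1}^N(\mathcal{P}+P_k^\infty)\sum_q C_q^V\gamma_q(\alpha_q\rho_q)/\mathcal{G}^{PT}(\mathcal{P})-\mathcal{P}-\mathrm{const}$. This makes the pole and $\pm\infty$ limits in the subsequent theorem immediate. The paper's termwise version avoids the algebraic collapse, at the cost of tracking three groups of signed products.
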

\begin{proof}\label{Proof:Relaxation-PressureTemperature-Numerator-Sign}
From Eq.~(\ref{Eq:Relaxation-PressureTemperature}), $\mathcal{G}^{PT}(P_{p,p+1}^c)=0$ results in
\begin{equation}\label{Eq:Relaxation-PressureTemperature-Denominator-Pole}
C_p^V (\gamma_p-1) (\alpha_p \rho_p) \Pi_{k=1,k \neq p}^N (P_{p,p+1}^c + P_k^\infty)
=
-
\sum_{q=1,q \neq p}^N C_q^V (\gamma_q-1) (\alpha_q \rho_q) \Pi_{k=1,k \neq q}^N (P_{p,p+1}^c + P_k^\infty).
\end{equation}
Then, $\mathcal{N}^{PT}(P_{p,p+1}^c)$ becomes
\begin{eqnarray}\label{Eq:Relaxation-PressureTemperature-Numerator-Pole}
\mathcal{N}^{PT}(P_{p,p+1}^c)
=
\left(\Pi_{k=1}^{p} \underbrace{(P_{p,p+1}^c + P_k^\infty)}_{\leqslant 0} \times \Pi_{k=p+1}^N \underbrace{(P_{p,p+1}^c + P_k^\infty)}_{\geqslant 0}\right)\underbrace{\left(\sum_{q=1}^N C_q^V (\alpha_q \rho_q)\right)}_{\geqslant 0}\\
\nonumber
+
\sum_{q=1}^{p-1} \underbrace{C_q^V (\gamma_q-1) (\alpha_q \rho_q)}_{\geqslant 0} \underbrace{(P_q^\infty-P_p^\infty)}_{\leqslant 0} \left(\Pi_{k=1, k \neq q}^p \underbrace{(P_{p,p+1}^c + P_k^\infty)}_{\leqslant 0} \times \Pi_{k=p+1}^N \underbrace{(P_{p,p+1}^c + P_k^\infty)}_{\geqslant 0}\right)\\
\nonumber
+
\sum_{q=p+1}^N \underbrace{C_q^V (\gamma_q-1) (\alpha_q \rho_q)}_{\geqslant 0} \underbrace{(P_q^\infty-P_p^\infty)}_{\geqslant 0} \left(\Pi_{k=1}^p \underbrace{(P_{p,p+1}^c + P_k^\infty)}_{\leqslant 0} \times \Pi_{k=p+1, k \neq q}^N \underbrace{(P_{p,p+1}^c + P_k^\infty)}_{\geqslant 0}\right).
\end{eqnarray}
Therefore, the sign of $\mathcal{N}^{PT}(P_{p,p+1}^c)$ is $(-1)^p$.
\end{proof}

With these lemmas, we can show the distribution of the roots of $\mathcal{F}^{PT}(\mathcal{P})=0$ from the following theorem.
\begin{theorem}\label{Theorem:Relaxation-PressureTemperature-Uniqueness}
$\mathcal{F}^{PT}(\mathcal{P})=0$ has $N$ real roots in $(-\infty,P_{N-1,N}^{c})$, $(P_{p+1,p+2}^{c},P_{p,p+1}^{c})$ for $1 \leqslant p \leqslant (N-2)$, and $(P_{1,2}^{c},+\infty)$, with precisely one root per interval.
\end{theorem}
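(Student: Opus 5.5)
Our plan follows the pattern of the proof of Theorem~\ref{Theorem:Relaxation-PressureTemperature-Uniqueness}'s predecessor, Theorem~\ref{Theorem:Relaxation-Pressure-Uniqueness}. We count the roots with a degree bound and then exhibit a sign change on each of the $N$ branches.

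\emph{Degree bound.} Write $\mathcal{F}^{PT}(\mathcal{P})=\mathcal{N}^{PT}(\mathcal{P})/\mathcal{G}^{PT}(\mathcal{P})-K$, where $K=\sum_{q}\frac{P_q+\gamma_q P_q^\infty}{\gamma_q-1}\alpha_q$. Clearing the denominator, $\mathcal{F}^{PT}=0$ becomes $\mathcal{N}^{PT}(\mathcal{P})-K\,\mathcal{G}^{PT}(\mathcal{P})=0$. Here $\mathcal{N}^{PT}$ has degree $N$ with leading coefficient $\sum_q C_q^V(\alpha_q\rho_q)>0$, and $\mathcal{G}^{PT}$ has degree $N-1$. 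So the equation is a polynomial equation of exact degree $N$ and has at most $N$ real roots. The poles of $\mathcal{F}^{PT}$ are exactly the $N-1$ roots $P_{p,p+1}^c$ from Lemma~\ref{Lemma:Relaxation-PressureTemperature-Pole}. Since $P_{p,p+1}^c\in[P_{p+1}^{\min},P_p^{\min}]$, they are ordered $P_{N-1,N}^c\leqslant\dots\leqslant P_{1,2}^c$, which gives the $N$ continuous branches listed in the statement. To rule out a common root of $\mathcal{N}^{PT}$ and $\mathcal{G}^{PT}$, we use Lemma~\ref{Lemma:Relaxation-PressureTemperature-Numerator-Sign}: it gives $\mathcal{N}^{PT}(P^c_{p,p+1})\neq0$, with strict sign $(-1)^p$.

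\emph{Limits at infinity.} As $\mathcal{P}\to\pm\infty$, $\mathcal{N}^{PT}/\mathcal{G}^{PT}$ behaves like $\mathcal{P}\sum_q C_q^V\alpha_q\rho_q/\sum_q C_q^V(\gamma_q-1)\alpha_q\rho_q$. Hence $\mathcal{F}^{PT}\to\pm\infty$, with the same sign as $\mathcal{P}$.

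\emph{Limits at the poles.} Combining Lemma~\ref{Lemma:Relaxation-PressureTemperature-Denominator-Sign} with Lemma~\ref{Lemma:Relaxation-PressureTemperature-Numerator-Sign}, the ratio $\mathcal{N}^{PT}/\mathcal{G}^{PT}$ near $P_{p,p+1}^c$ has sign $(-1)^p(-1)^{p-1}=-1$ from the right and $(-1)^p(-1)^p=+1$ from the left. Therefore $\mathcal{F}^{PT}\to-\infty$ as $\mathcal{P}\to(P_{p,p+1}^c)^+$ and $\mathcal{F}^{PT}\to+\infty$ as $\mathcal{P}\to(P_{p,p+1}^c)^-$; the constant $K$ does not affect these limits.

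\emph{Sign changes on each branch.} On $(-\infty,P_{N-1,N}^c)$ the function goes from $-\infty$ to $+\infty$. On each interior branch $(P_{p+1,p+2}^c,P_{p,p+1}^c)$ it goes from $-\infty$ at the left end to $+\infty$ at the right end. On $(P_{1,2}^c,+\infty)$ it goes from $-\infty$ to $+\infty$. By the intermediate value theorem, each of the $N$ branches contains at least one root. Together with the bound of at most $N$ roots, each branch contains exactly one.

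\emph{Main obstacle.} The delicate point is degeneracy. If some $\alpha_q\rho_q=0$, or some $P_q^\infty$ coincide, a pole $P^c_{p,p+1}$ may merge with a $P_k^{\min}$, or $\mathcal{G}^{PT}$ and $\mathcal{N}^{PT}$ may share a factor. The strict sign asserted in Lemma~\ref{Lemma:Relaxation-PressureTemperature-Numerator-Sign} and the simplicity of the roots in Lemma~\ref{Lemma:Relaxation-PressureTemperature-Denominator-Sign} then need care. We would first treat the generic case, in which all $P_q^\infty$ are distinct and all masses are positive, where the lemmas hold strictly. We would then either pass to the limit or cancel the common factor, which reduces the problem to fewer effective phases.
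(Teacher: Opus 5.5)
Your proposal is correct and follows the paper's proof essentially step for step: the degree-$N$ bound after clearing $\mathcal{G}^{PT}$, the $N-1$ poles $P^c_{p,p+1}$ from Lemma~\ref{Lemma:Relaxation-PressureTemperature-Pole}, the limits $\mp\infty$ at $(P^c_{p,p+1})^{\pm}$ from combining Lemmas~\ref{Lemma:Relaxation-PressureTemperature-Denominator-Sign} and~\ref{Lemma:Relaxation-PressureTemperature-Numerator-Sign}, the limits $\mathcal{F}^{PT}\to\pm\infty$ as $\mathcal{P}\to\pm\infty$, and the intermediate value theorem on each of the $N$ branches. Your degeneracy caveat goes beyond the paper, whose lemmas only establish non-strict sign bounds, and it matters: if, e.g., all phases are ideal gases ($P_q^\infty=0$), then $\mathcal{N}^{PT}$ and $\mathcal{G}^{PT}$ share the factor $\mathcal{P}^{N-1}$ and $\mathcal{F}^{PT}$ is affine with a single root, so the statement as written genuinely needs your generic hypothesis (distinct $P_q^\infty$, positive phasic masses), and degenerate data can only be handled in the reduced-phase form you describe, not by a limiting argument that recovers $N$ roots.
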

\begin{proof}\label{Proof:Relaxation-PressureTemperature-Uniqueness}
$\mathcal{F}^{PT}(\mathcal{P})=0$ is equivalent to root-finding for a polynomial of degree $N$ with respect to $\mathcal{P}$. Therefore, $\mathcal{F}^{PT}(\mathcal{P})=0$ has at most $N$ real roots.
In addition, from Lemma~\ref{Lemma:Relaxation-PressureTemperature-Pole}, $\mathcal{F}^{PT}(\mathcal{P})$ has $(N-1)$ poles, which are $P_{p,p+1}^c \in [P_{p+1}^{\min},P_{p}^{\min}]$, for $1 \leqslant p \leqslant (N-1)$, resulting in $N$ continuous branches, which are $(-\infty,P_{N-1,N}^{c})$, $(P_{p+1,p+2}^{c},P_{p,p+1}^{c})$ for $1 \leqslant p \leqslant (N-2)$, and $(P_{1,2}^{c},+\infty)$.
It is straightforward to show that
\begin{equation}\label{Eq:Relaxation-PressureTemperature-Uniqueness-Limit}
\lim_{\mathcal{P} \rightarrow -\infty} \mathcal{F}^{PT}(\mathcal{P}) = -\infty,\quad
\lim_{\mathcal{P} \rightarrow \left(P_{p,p+1}^c\right)^{-}} \mathcal{F}^{PT}(\mathcal{P}) = +\infty,\quad
\lim_{\mathcal{P} \rightarrow \left(P_{p,p+1}^c\right)^{+}} \mathcal{F}^{PT}(\mathcal{P}) = -\infty,\quad
\lim_{\mathcal{P} \rightarrow +\infty} \mathcal{F}^{PT}(\mathcal{P}) = +\infty,
\end{equation}
with the help of Lemma~\ref{Lemma:Relaxation-PressureTemperature-Denominator-Sign} and Lemma~\ref{Lemma:Relaxation-PressureTemperature-Numerator-Sign}.
As a result, there must be roots in $(-\infty,P_{N-1,N}^{c})$, $(P_{p+1,p+2}^{c},P_{p,p+1}^{c})$ for $1 \leqslant p \leqslant (N-2)$, and $(P_{1,2}^{c},+\infty)$, with precisely one root per interval and in total $N$ real roots.
\end{proof}

Theorem~\ref{Theorem:Relaxation-PressureTemperature-Uniqueness} proves that there is only one root (the largest root) of $\mathcal{F}^{PT}(\mathcal{P})=0$ that is larger than $P_{1,2}^c \leqslant P_1^{\min}=\max_q(P_q^{\min})$. Then, we further prove that this largest root is thermodynamically admissible with the following theorem.
\begin{theorem}\label{Theorem:Relaxation-PressureTemperature-Admissible}
The largest root of $\mathcal{F}^{PT}(\mathcal{P})=0$ is in $\left(\max_q(P_q^{\min}),+\infty\right)$.
\end{theorem}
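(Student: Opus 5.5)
The plan is to combine Theorem~\ref{Theorem:Relaxation-PressureTemperature-Uniqueness} with a sign evaluation of $\mathcal{F}^{PT}$ at the single point $\mathcal{P}=P_1^{\min}=-P_1^\infty$.

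\textbf{Step 1: locate the root on the branch.} By Theorem~\ref{Theorem:Relaxation-PressureTemperature-Uniqueness}, the largest root is the unique zero of $\mathcal{F}^{PT}$ on the branch $(P_{1,2}^c,+\infty)$. By Lemma~\ref{Lemma:Relaxation-PressureTemperature-Pole}, this branch contains $P_1^{\min}$, since $P_{1,2}^c\leqslant P_1^{\min}$. From Eq.~(\ref{Eq:Relaxation-PressureTemperature-Uniqueness-Limit}), $\mathcal{F}^{PT}\to-\infty$ as $\mathcal{P}\to(P_{1,2}^c)^+$ and $\mathcal{F}^{PT}\to+\infty$ as $\mathcal{P}\to+\infty$. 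Because the branch carries exactly one zero, $\mathcal{F}^{PT}$ is negative to the left of the root and positive to its right. It therefore suffices to show $\mathcal{F}^{PT}(P_1^{\min})<0$, which places the root in $(P_1^{\min},+\infty)$.

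\textbf{Step 2: evaluate at $P_1^{\min}$.} When $\mathcal{P}=-P_1^\infty$, every product $\Pi_{k\neq q}(\mathcal{P}+P_k^\infty)$ with $q\neq 1$ contains the vanishing factor $k=1$. Only the $q=1$ terms survive in $\mathcal{N}^{PT}$ and $\mathcal{G}^{PT}$, which gives
\begin{equation*}
\mathcal{G}^{PT}(P_1^{\min})=C_1^V(\gamma_1-1)(\alpha_1\rho_1)\Pi_{k=2}^N(P_k^\infty-P_1^\infty),\qquad
\mathcal{N}^{PT}(P_1^{\min})=(\gamma_1-1)P_1^\infty\, C_1^V(\alpha_1\rho_1)\Pi_{k=2}^N(P_k^\infty-P_1^\infty).
\end{equation*}
Hence, whenever $\mathcal{G}^{PT}(P_1^{\min})\neq0$, we get $\mathcal{N}^{PT}/\mathcal{G}^{PT}=P_1^\infty$. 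Therefore
\begin{equation*}
\mathcal{F}^{PT}(P_1^{\min})=P_1^\infty-\sum_{q=1}^N\alpha_qP_q^\infty-\sum_{q=1}^N\frac{(P_q+P_q^\infty)\alpha_q}{\gamma_q-1}
=-\sum_{q=1}^N\alpha_q(P_q^\infty-P_1^\infty)-\sum_{q=1}^N\frac{(P_q+P_q^\infty)\alpha_q}{\gamma_q-1}\leqslant 0 .
\end{equation*}
The first equality uses $\sum_q\alpha_q=1$. The sign follows from the ordering $P_q^\infty\geqslant P_1^\infty$, from $P_q\geqslant P_q^{\min}$, and from $\gamma_q>1$.

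\textbf{Step 3: degenerate cases (the main obstacle).} Two situations need care. First, equality in Step 2 would put the root exactly at $P_1^{\min}$. Equality requires every present phase to satisfy $P_q=P_q^{\min}$ and $P_q^\infty=P_1^\infty$, i.e.\ zero temperature and internal energy $e_q^\infty$ in every phase. Second, $\mathcal{G}^{PT}(P_1^{\min})=0$ occurs if $\alpha_1\rho_1=0$ or some $P_k^\infty=P_1^\infty$, so that $P_{1,2}^c$ coincides with $P_1^{\min}$.

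For the first situation, I would argue strict inequality under strictly admissible data ($P_q>P_q^{\min}$ for some present phase, so positive internal thermal energy). Otherwise I would note that the limiting state is only attained as a boundary case.

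For the second situation, I would remove absent phases, which is legitimate by consistency of reduction in Section~\ref{Sec:Reduction}, and merge phases that share the same $P^\infty$. Merging should work because both $\mathcal{N}^{PT}$ and $\mathcal{G}^{PT}$ then share the common factor $(\mathcal{P}+P_1^\infty)$, which cancels in $\mathcal{F}^{PT}$. After cancellation, the Step 2 computation is redone for the effective phase $1$, whose coefficient becomes the sum of the coefficients of the merged phases. This again gives $\mathcal{N}^{PT}/\mathcal{G}^{PT}=P_1^\infty$ at $P_1^{\min}$ and the same nonpositive value of $\mathcal{F}^{PT}$.

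Alternatively, one can take the one-sided limit $\mathcal{P}\to(P_1^{\min})^+$ directly. If $P_{1,2}^c=P_1^{\min}$, this limit is $-\infty$, and the conclusion follows immediately from Theorem~\ref{Theorem:Relaxation-PressureTemperature-Uniqueness}.

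Finally, once $P^*>P_1^{\min}$ is established, admissibility of $T^*$ and $\alpha_p^*$ follows from Eq.~(\ref{Eq:Relaxation-PressureTemperature-TemperatureVolumeFraction}), as already discussed before the theorem.
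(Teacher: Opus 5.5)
Your Steps 1--2 are the paper's proof: the paper also places the largest root on $(P_{1,2}^c,+\infty)$ and evaluates $\mathcal{F}^{PT}(P_1^{\min})=-\sum_q\bigl((P_q^\infty-P_1^\infty)+\frac{P_q+P_q^\infty}{\gamma_q-1}\bigr)\alpha_q\leqslant 0$, and your Step 3 reasonably handles two degenerate cases the paper passes over, namely $P_{1,2}^c=P_1^{\min}$ and equality in that bound. Drop the ``alternative'' one-sided limit, though: your Step 2 formulas give $\mathcal{N}^{PT}(P_1^{\min})=P_1^\infty\,\mathcal{G}^{PT}(P_1^{\min})$, so whenever $P_{1,2}^c=P_1^{\min}$ the numerator vanishes too, the singularity is removable, and the limit is finite rather than $-\infty$ (the limits in Theorem~\ref{Theorem:Relaxation-PressureTemperature-Uniqueness} require nondegenerate data); likewise, when phase $1$ is absent your reduction proves $P^*>\max_{q:\,\alpha_q>0}P_q^{\min}$, which is the form in which the claim actually holds.
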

\begin{proof}\label{Proof:Relaxation-PressureTemperature-Admissible}
From Theorem~\ref{Theorem:Relaxation-PressureTemperature-Uniqueness}, the largest root of $\mathcal{F}^{PT}(\mathcal{P})=0$ must be in $(P_{1,2}^{c},+\infty)$, where $P_{1,2}^c \in [P_2^{\min},P_1^{\min}]$. Therefore, $P_1^{\min}=\max_q(P_q^{\min}) \in (P_{1,2}^{c},+\infty)$, and it is straightforward to show
\begin{equation}\label{Eq:Relaxation-PressureTemperature-Admissible-P}
\mathcal{F}^{PT}\left(\max_q(P_q^{\min})\right)
=
\mathcal{F}^{PT}(P_1^{\min})
=
-
\sum_{q=1}^N \left(
(P_q^\infty - P_1^\infty)
+
\frac{(P_q+P_q^\infty)}{\gamma_q-1}
\right) \alpha_q
\leqslant
0,
\end{equation}
with the fact that $P_p \geqslant P_p^{\min}=-P_p^\infty$ for $1 \leqslant p \leqslant N$ and $\sum_{q=1}^N \alpha_q=1$.
Therefore, the largest root of $\mathcal{F}^{PT}(\mathcal{P})=0$ is in $\left(\max_q(P_q^{\min}),+\infty\right)$.
\end{proof}
\begin{remark}\label{Remark:Relaxation-PressureTemperature-Admissible}
When there are only two phases ($N=2$), $\mathcal{F}^{PT}(\mathcal{P})=0$ is equivalent to the following quadratic equation with the analytical solution
\begin{equation}\label{Eq:Relaxation-PressureTemperature-TwoPhase}
\begin{split}
a^{PT} \mathcal{P}^2 + b^{PT} \mathcal{P} + c^{PT} = 0,\quad
\mathcal{P}^{\pm} = \frac{-b^{PT} \pm \sqrt{(b^{PT})^2 - 4 a^{PT} c^{PT}}}{2a^{PT}},\\
a^{PT}=C_1^V (\alpha_1 \rho_1) + C_2^V (\alpha_2 \rho_2),\\
b^{PT}=\left(
C_1^V (\alpha_1 \rho_1) (
P_2^\infty
+
\gamma_1 P_1^\infty
)
+
C_2^V (\alpha_2 \rho_2) (
P_1^\infty
+
\gamma_2 P_2^\infty
)
\right)\\
\nonumber
-
\left( (\rho E)-\frac{1}{2} \frac{(\rho \mathbf{u})\cdot (\rho \mathbf{u})}{\rho} - \sum_{q=1}^{2} (\alpha_q \rho_q) e_q^\infty \right) \left(C_1^V (\gamma_1-1) (\alpha_1 \rho_1) + C_2^V (\gamma_2-1) (\alpha_2 \rho_2) \right),\\
c^{PT}= \left(
C_1^V \gamma_1 (\alpha_1 \rho_1)
+
C_2^V \gamma_2 (\alpha_2 \rho_2)
\right) P_1^\infty P_2^\infty\\
\nonumber
-
\left(
(\rho E)-\frac{1}{2} \frac{(\rho \mathbf{u})\cdot (\rho \mathbf{u})}{\rho} - \sum_{q=1}^{2} (\alpha_q \rho_q) e_q^\infty \right) \left(C_1^V (\gamma_1-1) (\alpha_1 \rho_1) P_2^\infty  + C_2^V (\gamma_2-1) (\alpha_2 \rho_2) P_1^\infty
\right).
\end{split}
\end{equation}
From Theorem~\ref{Theorem:Relaxation-PressureTemperature-Admissible}, $P^*=\mathcal{P}^+$ is thermodynamically admissible.
\end{remark}

\subsection{Hyperbolic step}\label{Sec:Hyperbolic}
In the hyperbolic step, a finite-volume approach is applied to evolve
\begin{equation}\label{Eq:HyperbolicStep}
\frac{\partial \overline{\mathbf{U}}}{\partial t}
+
\nabla \cdot \hat{\mathbf{F}}^{HB}\left(\mathbf{U}^L,\mathbf{U}^R\right)
=
\hat{\mathbf{S}}^{HB}\left(\mathbf{U}^L,\mathbf{U}^R,\overline{\mathbf{U}}\right),
\end{equation}
where $\overline{\mathbf{U}}$ is the cell-averaged data, $\mathbf{U}^{L,R}$ are the reconstructed data at two sides of a grid cell face, and $\hat{\mathbf{F}}^{HB}$ and $\hat{\mathbf{S}}^{HB}$ are the numerical flux and non-conservative vectors, respectively, in the hyperbolic step. As a result, the conservation of mass, momentum, and energy is satisfied. The calculation of the non-conservative vector is tightly connected to that of the flux vector, which reads
\begin{equation}\label{Eq:Hyperbolic-NonConservative}
\begin{split}
\hat{\mathbf{S}}^{HB}=\left[\left\{0\right\}_{p=1}^N,\mathbf{0},\right.\\
\left.
\left\{\sum_{q=1}^N \left(
\frac{\overline{(\alpha_q \rho_q)}}{\overline{\rho}} \left(\nabla \cdot \widehat{(\alpha_p P_p \mathbf{u})} - \overline{(\alpha_p P_p)} \nabla \cdot \hat{\mathbf{u}} \right)
-
\frac{\overline{(\alpha_p \rho_p)}}{\overline{\rho}} \left( \nabla \cdot \widehat{(\alpha_q P_q \mathbf{u})} - \overline{(\alpha_q P_q)} \nabla \cdot \hat{\mathbf{u}} \right)
\right)\right\}_{p=1}^N,\left\{\overline{(\alpha_p)} \nabla \cdot \hat{\mathbf{u}}\right\}_{p=1}^N\right]^T,
\end{split}
\end{equation}
where $\hat{\mathbf{u}}=\hat{\mathbf{F}}^{HB,\alpha_p}|_{\alpha_p^{L,R}=1}$ and $\widehat{(\alpha_p P_p \mathbf{u})}=\hat{\mathbf{F}}^{HB,(\alpha_p\rho_pE_p)}|_{E_p^{L,R}=0}$ are directly obtained from the numerical flux vector, and $\hat{\mathbf{F}}^{HB,\xi}$ denotes the numerical flux for quantity $\xi$.
This choice of $\hat{\mathbf{u}}$ is for satisfying the consistency of reduction, which prevents the production of fictitious phases, local voids, or overfilling, and the analysis is identical to that for the five-equation models in \citep{HuangJohnsen2022,HuangJohnsen2023,HuangJohnsen2024,Whiteetal2025}.
The diffusive flux vector and diffusive non-conservative vector in Eq.~(\ref{Eq:Diffusion}) may be included here to account for the viscosity and heat condition, but these terms are skipped for inertia-dominant problems investigated in the present study.

The HLLC approximate Riemann solver \citep{Toroetal1994,Toro2009} with the wave speeds in \citep{Einfeldtetal1991} is used as the numerical hyperbolic flux vector, and we only present the $x$-component, which reads
\begin{equation}\label{Eq:HLLC}
\begin{split}
\hat{\mathbf{F}}_{x}^{HLLC} = \frac{1 + \mathrm{sign}(S^*)}{2} \left(\mathbf{F}_{x}^{HB}\left(\mathbf{U}^L\right) + S^{-} \left(\mathbf{U}^{L*} - \mathbf{U}^{L}\right)\right)
                        + \frac{1 - \mathrm{sign}(S^*)}{2} \left(\mathbf{F}_{x}^{HB}\left(\mathbf{U}^R\right) + S^{+} \left(\mathbf{U}^{R*} - \mathbf{U}^{R}\right)\right)\\
S^{L} = \min\left(u^L - c^L, u^R - c^R\right),\quad
S^{R} = \max\left(u^L + c^L, u^R + c^R\right),\quad
S^{-} = \min\left(0, S^L\right),\quad
S^{+} = \max\left(0, S^R\right),\\
S^{*} = \frac{\sum_{q=1}^N (\alpha_q P_q)^R - \sum_{q=1}^N (\alpha_q P_q)^L + \rho^R u^R (u^R - S^R) - \rho^L u^L (u^L - S^L)}{\rho^R (u^R - S^R) - \rho^L (u^L - S^L)},\quad
c = \sqrt{\frac{\sum_{q=1}^N \alpha_q \rho_q c_q^2}{\rho}}.
\end{split}
\end{equation}
We modify the star states $\mathbf{U}^{K*}$ ($K=L,R$) derived in \citep{Saureletal2009}, where a two-phase six-equation model based on the phasic internal energy was investigated, so that the star states adapt to the present $N$-phase model based on the phasic total energy, and they read
\begin{equation}\label{Eq:HLLC-StarState}
\begin{split}
\mathbf{U}^{K*}=
\begin{bmatrix}
\left\{(\alpha_p \rho_p)^{K*}\right\}_{p=1}^N\\
(\rho u)^{K*}\\
(\rho v)^{K*}\\
(\rho w)^{K*}\\
\left\{(\alpha_p\rho_p E_p)^{K*}\right\}_{p=1}^N\\
\left\{\alpha_p^{K*}\right\}_{p=1}^N
\end{bmatrix}
=
\frac{u^K-S^K}{S^{*}-S^K}
\begin{bmatrix}
\left\{(\alpha_p \rho_p)^K\right\}_{p=1}^N\\
(\rho u)^K\\
(\rho v)^K\\
(\rho w)^K\\
\left\{(\alpha_p \rho_p E_p)^K\right\}_{p=1}^N\\
\left\{\alpha_p^K\right\}_{p=1}^N
\end{bmatrix}
+
\begin{bmatrix}
\left\{0\right\}_{p=1}^N\\
\frac{u^K-S^K}{S^{*}-S^K} \rho^K (S^{*} - u^K)\\
0\\
0\\
\left\{\frac{u^K (\alpha_p P_p)^K - S^{*} (\alpha_p P_p)^{K*}}{S^{*} - S^K}\right\}_{p=1}^N\\
\frac{S^{*} - u^K}{S^{*}-S^K} \left\{\alpha_p^K\right\}_{p=1}^N
\end{bmatrix},\\
(\alpha_p P_p)^{K*} = (\alpha_p P_p)^K + (u^K - S^K) (\alpha_p \rho_p)^K u^K - (S^* - S^K) (\alpha_p \rho_p)^{K*} S^{*}.
\end{split}
\end{equation}
As a result, the corresponding $\hat{\mathbf{u}}^{HLLC}$ and $\widehat{(\alpha_p P_p \mathbf{u})}^{HLLC}$ are
\begin{equation}\label{Eq:HLLC-Nonconservative}
\begin{split}
\hat{\mathbf{u}}^{HLLC} = \frac{1 + \mathrm{sign}(S^{*})}{2} \mathbf{u}^L + \frac{1 - \mathrm{sign}(S^*)}{2} \mathbf{u}^R,\\
\widehat{(\alpha_p P_p u)}^{HLLC} = \frac{1 + \mathrm{sign}(S^*)}{2} \left( (\alpha_p P_p)^L u^L + \frac{u^L (\alpha_p P_p)^L - S^* (\alpha_p P_p)^{L*}}{S^* - S^L} S^{-} \right)\\
                                  + \frac{1 - \mathrm{sign}(S^*)}{2} \left( (\alpha_p P_p)^R u^R + \frac{u^R (\alpha_p P_p)^R - S^* (\alpha_p P_p)^{R*}}{S^* - S^R} S^{+} \right),\\
\widehat{(\alpha_p P_p v)}^{HLLC} = \frac{1 + \mathrm{sign}(S^*)}{2} (\alpha_p P_p)^L v^L
                                  + \frac{1 - \mathrm{sign}(S^*)}{2} (\alpha_p P_p)^R v^R,\\
\widehat{(\alpha_p P_p w)}^{HLLC} = \frac{1 + \mathrm{sign}(S^*)}{2} (\alpha_p P_p)^L w^L
                                  + \frac{1 - \mathrm{sign}(S^*)}{2} (\alpha_p P_p)^R w^R.
\end{split}
\end{equation}

To obtain $\mathbf{U}^{L,R}$, we follow the strategy that the primitive variables are reconstructed \citep{Abgrall1996,SaurelAbgrall1999-Gamma,JohnsenColonius2006,CoralicColonius2014,BeigJohnsen2015}, which avoids velocity errors across isolated interfaces.
The primitive variables for the present six-equation model are $\mathbf{V}=\left[\{(\alpha_p\rho_p\}_{p=1}^N,\mathbf{u},\{(\alpha_p P_p)\}_{p=1}^N,\{\alpha_p\}_{p=1}^N\right]^T$.
In our previous studies for the five-equation models \citep{HuangJohnsen2022,HuangJohnsen2023,HuangJohnsen2024}, we proposed the \textit{consistent reconstruction} of mass and volume fraction, which requires that $(\alpha_p \rho_p)^{L,R}=\rho_p \times \alpha_p^{L,R}$ for constant $\rho_p$. This consistency requirement guarantees that the interface labeled by $(\alpha_p\rho_p)$ is advected consistently with that by $\alpha_p$. Otherwise, mass spikes are produced near interfaces, as demonstrated in \citep{HuangJohnsen2022}, which may cause simulation failure in large-density-ratio problems. We further prove that the consistency requirement must be satisfied to achieve pressure and temperature equilibria at isolated interfaces for general equations of state \citep{HuangJohnsen2024}.
This consistency requirement is now extended to adapt to the present six-equation model such that $(\alpha_p P_p)^{L,R}=P_p \times \alpha_p^{L,R}$ for constant $P_p$, so that the pressure and temperature equilibria at isolated interfaces is preserved.

With the above HLLC flux and the consistency requirement for reconstruction, it is straightforward to show that the velocity, pressure, and temperature equilibria at isolated interfaces are satisfied, following a procedure similar to \citep{Abgrall1996,SaurelAbgrall1999-Gamma,JohnsenColonius2006,Saureletal2009,CoralicColonius2014,BeigJohnsen2015,HuangJohnsen2023,HuangJohnsen2024}. Therefore, we only highlight the key steps. Given $\rho_p=\rho_p^{(0)}$, $\mathbf{u}=\mathbf{u}^{(0)}$, $P_p=P^{(0)}$, $T_p=T_p^{(0)}=\mathsf{T}_p\left(\rho_p^{(0)},P^{(0)}\right)$ at the beginning, we have $(\alpha_p\rho_p)^{L,R}=\rho_p^{(0)}\times \alpha_p^{L,R}$, $\mathbf{u}^{L,R}=\mathbf{u}^{(0)}$, $(\alpha_p P_p)^{L,R}=P^{(0)}\times \alpha_p^{L,R}$, due to the consistent reconstruction on the primitive variables. Then, we obtain $S^*=u^{(0)}$ from Eq.~(\ref{Eq:HLLC}) with the fact that $\sum_{q=1}^N \alpha_q^{L,R}=1$, and $\mathbf{U}^{K*}=\mathbf{U}^{K}$ ($K=L,R$) from Eq.~(\ref{Eq:HLLC-StarState}). Without loss of generality, we assume $u^{(0)} \geqslant 0$, resulting in $\hat{\mathbf{F}}_x^{HLLC}=\mathbf{F}_x^{HB}(\mathbf{U}^L)$ from Eq.~(\ref{Eq:HLLC}), and $\hat{\mathbf{u}}^{HLLC}=\mathbf{u}^{(0)}$ and $\widehat{(\alpha_p P_p \mathbf{u})}^{HLLC}=\alpha_p^{L} P^{(0)}\mathbf{u}^{(0)}$ from Eq.~(\ref{Eq:HLLC-Nonconservative}). With these simplifications, we obtain $\rho_p=\rho_p^{(0)}$ after comparing the mass equation with the volume fraction equation in Eq.~(\ref{Eq:HyperbolicStep}). Then, the momentum and energy equations in Eq.~(\ref{Eq:HyperbolicStep}) result in $\mathbf{u}=\mathbf{u}^{(0)}$ and $e_p=e_p^{(0)}=\mathsf{e}_p(\rho_p^{(0)},P^{(0)})$, respectively. Finally, from the state principle in thermodynamics, $P_p=P^{(0)}$ and $T_p=T_p^{(0)}$ are obtained due to $\rho_p=\rho_p^{(0)}$ and $e_p=e_p^{(0)}$, which is valid for any equation of state.

\subsection{Phase-Field step}\label{Sec:PhaseField}
In the Phase-Field step, we continuously use the \textit{multiphase reduction-consistent formulation} \citep{HuangJohnsen2022,HuangJohnsen2023,HuangJohnsen2024,Whiteetal2025,Huangetal2025}, which robustly couples different formulations of the Phase-Field mechanism to compressible multiphase flows.
This approach first introduces the auxiliary variables $\{Q_p\}_{p=1}^N$ from
\begin{equation}\label{Eq:Q}
\nabla \cdot \left( \phi_p (1-\phi_p)  \nabla Q_p \right)=\left(\nabla \cdot \mathbf{J}_p\right),
\end{equation}
where the right-hand side is any (discretized) user-selected formulation of the Phase-Field mechanism, and $\phi_p$ is a spread of $\alpha_p$ to avoid rank deficiency during the inversion of the variable-coefficient Laplacian on the left-hand side.
Following \citep{Huangetal2025}, we use $\phi_p=\mathcal{M}_{2\eta}\left(\mathcal{M}_{\eta}^{-1}(\alpha_p)\right)$ in the present study, where $\mathcal{M}_{\eta}(x)=\frac{1}{2}\left(1+\tanh{\left(\frac{x}{\sqrt{2}\eta}\right)}\right)$, $\mathcal{M}_{\eta}^{-1}(x)$ is the inverse function, and $\eta$ controls the interface thickness.
Eq.~(\ref{Eq:Q}) serves as a bridge to connect the selected formulation of the Phase-Field mechanism to the unified calculation of the Phase-Field flux vector that is then applied to evolve
\begin{equation}\label{Eq:PhaseFieldStep}
\frac{\partial \overline{\mathbf{U}}}{\partial t}
=
\nabla \cdot \hat{\mathbf{F}}^{PF}\left(\mathbf{U}^L,\mathbf{U}^R;\{\nabla Q_p\}_{p=1}^N\right)
+
\hat{\mathbf{S}}^{PF}\left(\mathbf{U}^L,\mathbf{U}^R,\overline{\mathbf{U}}\right),
\end{equation}
where $\hat{\mathbf{F}}^{PF}$ and $\hat{\mathbf{S}}^{PF}$ are the numerical flux and non-conservative vectors, respectively, in the Phase-Field step. As a result, the conservation of mass, momentum, and energy is preserved.
Like in the hyperbolic step, $\hat{\mathbf{S}}^{PF}$ has a tight connection to $\hat{\mathbf{F}}^{PF}$, which reads
\begin{equation}\label{Eq:PhaseField-NonConservative}
\begin{split}
\hat{\mathbf{S}}^{PF}=\left[\left\{0\right\}_{p=1}^N,\mathbf{0},\right.\\
\left.
\left\{\sum_{q=1}^N \left(
\frac{\overline{(\alpha_p \rho_p)}}{\overline{\rho}} \left( \nabla \cdot \widehat{\left(\mathbf{J}_q \rho_q \frac{\mathbf{u} \cdot \mathbf{u}}{2}\right)} - \frac{\overline{\mathbf{u}} \cdot \overline{\mathbf{u}}}{2} \nabla \cdot \widehat{(\mathbf{J}_q \rho_q)} \right)
-
\frac{\overline{(\alpha_q \rho_q)}}{\overline{\rho}} \left( \nabla \cdot \widehat{\left(\mathbf{J}_p \rho_p \frac{\mathbf{u} \cdot \mathbf{u}}{2}\right)} - \frac{\overline{\mathbf{u}} \cdot \overline{\mathbf{u}}}{2} \nabla \cdot \widehat{(\mathbf{J}_p \rho_p)} \right)
\right)\right\}_{p=1}^N,\left\{0\right\}_{p=1}^N\right]^T,
\end{split}
\end{equation}
where $\widehat{(\mathbf{J}_p \rho_p)}=\hat{\mathbf{F}}^{PF,(\alpha_p\rho_p)}$ and $\widehat{\left(\mathbf{J}_p \rho_p \frac{\mathbf{u} \cdot \mathbf{u}}{2}\right)}=\hat{\mathbf{F}}^{PF,(\alpha_p\rho_pE_p)}|_{e_p^{L,R}=0}$.

Following the approach in \citep{HuangJohnsen2024}, $\hat{\mathbf{F}}^{PF}$ for the six-equation model is obtained from
\begin{equation}\label{Eq:PhaseField-Flux}
\begin{split}
\hat{\mathbf{F}}^{PF}\left(\mathbf{U}^L,\mathbf{U}^R;\{\nabla Q_p\}_{p=1}^N\right)
=
\psi_{\tilde{\mathbf{U}} \rightarrow \mathbf{U}}\left(
\hat{\tilde{\mathbf{F}}}^{PF}\left(\psi_{\mathbf{U} \rightarrow \tilde{\mathbf{U}}}(\mathbf{U}^L),\psi_{\mathbf{U} \rightarrow \tilde{\mathbf{U}}}(\mathbf{U}^R);\{\nabla Q_p\}_{p=1}^N\right)
\right),\\
\psi_{\mathbf{U} \rightarrow \tilde{\mathbf{U}}}(\mathbf{U})
=
\left[
\left\{(\alpha_p\rho_p)\right\}_{p=1}^N,
\left\{(\alpha_p\rho_p)\frac{(\rho\mathbf{u})}{\sum_{q=1}^N (\alpha_q \rho_q)}\right\}_{p=1}^N,
\left\{(\alpha_p\rho_pE_p)\right\}_{p=1}^N,\left\{\alpha_p\right\}_{p=1}^N
\right]^T,\\
\psi_{\tilde{\mathbf{U}} \rightarrow \mathbf{U}}\left(\tilde{\mathbf{U}}\right)
=
\left[
\left\{(\alpha_p\rho_p)\right\}_{p=1}^N,
\left\{\sum_{q=1}^N (\alpha_q\rho_q\mathbf{u}_q)\right\}_{p=1}^N,
\left\{(\alpha_p\rho_pE_p)\right\}_{p=1}^N,
\left\{\alpha_p\right\}_{p=1}^N\right]^T,
\end{split}
\end{equation}
where $\hat{\tilde{\mathbf{F}}}^{PF}\left(\tilde{\mathbf{U}}^L,\tilde{\mathbf{U}}^R;\{\nabla Q_p\}_{p=1}^N\right)$ is a numerical approximation to 
\[\tilde{\mathbf{F}}^{PF}=\left[
\left\{(\mathbf{J}_p\rho_p)\right\}_{p=1}^N,
\{(\mathbf{J}_p\rho_p) \otimes \mathbf{u}_p\}_{p=1}^N,
\left\{(\mathbf{J}_p \rho_p E_p)\right\}_{p=1}^N,
\left\{\mathbf{J}_p\right\}_{p=1}^N\right]^T,\] for the full sate vector
\[
\tilde{\mathbf{U}}
=
\left[
\left\{(\alpha_p\rho_p)\right\}_{p=1}^N,
\left\{(\alpha_p\rho_p\mathbf{u}_p)\right\}_{p=1}^N,
\left\{(\alpha_p\rho_p E_p)\right\}_{p=1}^N,
\left\{\alpha_p\right\}_{p=1}^N
\right]^T,\] and $\psi_{\mathbf{U} \rightarrow \tilde{\mathbf{U}}}(\mathbf{U}):\mathbb{R}^{3N+d} \to \mathbb{R}^{(3+d)N}$ and $\psi_{\tilde{\mathbf{U}} \rightarrow \mathbf{U}}(\tilde{\mathbf{U}}):\mathbb{R}^{(3+d)N} \to \mathbb{R}^{3N+d}$ are the mappings between $\mathbf{U}$ and $\tilde{\mathbf{U}}$ ($d=1,2,3$ is the problem dimension).
There are multiple options of $\hat{\tilde{\mathbf{F}}}^{PF}$ developed and analyzed in \citep{HuangJohnsen2024}, which are capable of preserving the volume fraction boundedness and mass positivity, and we use the upwind-downwind (UD) flux therein for its ease of implementation.

To preserve the velocity, pressure, and temperature equilibria at isolated interfaces in the Phase-Field step, the same consistency requirement for reconstruction discussed in the hyperbolic step is needed. As shown in \citep{HuangJohnsen2024}, given $\rho_p=\rho_p^{(0)}$, $\mathbf{u}=\mathbf{u}^{(0)}$, $P_p=P^{(0)}$, $T_p=T_p^{(0)}=\mathsf{T}_p\left(\rho_p^{(0)},P^{(0)}\right)$, a key property of $\hat{\tilde{\mathbf{F}}}^{PF}$ therein resulting from the consistency requirement for reconstruction is $\widehat{(\xi_p \mathbf{J}_p)}=\xi_p^{(0)} \times \hat{J}_p$ ($\xi=\rho$, $(\rho \mathbf{u})$, and $(\rho E)$), where $\hat{J}_p=\hat{\tilde{\mathbf{F}}}^{PF,\alpha_p}$. As a result, the non-conservative terms in Eq.~(\ref{Eq:PhaseField-NonConservative}) vanish. Then, combining Eq.~(\ref{Eq:PhaseFieldStep}) and Eq.~(\ref{Eq:PhaseField-Flux}), and using the state principle in thermodynamics, it is straightforward to show that $\rho_p=\rho_p^{(0)}$, $\mathbf{u}=\mathbf{u}^{(0)}$, $P_p=P^{(0)}$, and $T_p=T_p^{(0)}$ are true for any equation of state in the Phase-Field step, following a procedure similar to \citep{HuangJohnsen2022,HuangJohnsen2023,HuangJohnsen2024}.

\subsection{Implementation}\label{Sec:Implementation}
We limit the present implementation of the proposed consistent and conservative numerical approach to compressible two-phase flows ($N=2$), so that both the pressure and pressure-temperature relaxations have analytical solutions that are thermodynamically admissible (Remarks~\ref{Remark:Relaxation-Pressure-Admissible}~and~\ref{Remark:Relaxation-PressureTemperature-Admissible}). As a result, only $\alpha_1$ is stored and $\alpha_2$ is obtained from $(1-\alpha_1)$ whenever needed. The MUSCL reconstruction \citep{LeVeque2002} is used in both the hyperbolic and Phase-Field steps, which satisfies the consistency requirement for reconstruction without extra effort. The third-order TVD Runge-Kutta time integration \citep{Shu1988,GottliebShu1998,Gottliebetal2001} is used to integrate Eq.~(\ref{Eq:HyperbolicStep}) and Eq.~(\ref{Eq:PhaseFieldStep}).

In the present study, we use the conservative Phase-Field model \citep{ChiuLin2011,Mirjalilietal2020}
\begin{equation}\label{Eq:CDI}
\mathbf{J}_p = M \left(\nabla \alpha_p - \frac{\sqrt{2}}{\eta} \alpha_p(1-\alpha_p) \mathbf{n}_p\right),
\end{equation}
as the Phase-Field mechanism, where $M=0.5\eta \max|\mathbf{u}|$ is the mobility and $n_p=\nabla \alpha_p/|\nabla \alpha_p|$ is the normal vector at the interface. It is straightforward to verify that Eq.~(\ref{Eq:CDI}) satisfies $\sum_{q=1}^2 \mathbf{J}_q=\mathbf{0}$, as required by the second law of thermodynamics in Eq.~(\ref{Eq:SecondLaw-PhaseField}). We discretize Eq.~(\ref{Eq:CDI}) with second-order central difference. Moreover, as it is a flux-based formulation, we can explicitly express $\nabla Q_p=\mathbf{J}_p/\left(\phi_p(1-\phi_p)\right)$ from Eq.~(\ref{Eq:Q}) without the need of inverting the variable-coefficient Laplacian \citep{Huangetal2024}.

Overall, starting at time $t$, the solution at $t+\Delta t$ is obtained from:
\begin{enumerate}
\item Perform the hyperbolic step in Section~\ref{Sec:Hyperbolic}, which evolves Eq.~(\ref{Eq:HyperbolicStep}) from $t$ to $t+\Delta t$. The relaxation step in Section~\ref{Sec:Relaxation} is performed after every RK time stage.
    \item Perform the Phase-Field step in Section~\ref{Sec:PhaseField}, which first obtains $\mathbf{J}_p$ and $\nabla Q_p$ from Eq.~(\ref{Eq:CDI}) and Eq.~(\ref{Eq:Q}), respectively, and then evolves Eq.~(\ref{Eq:PhaseFieldStep}) from $\tau=0$ to $\tau=\Delta t$. The relaxation step in Section~\ref{Sec:Relaxation} is performed after every RK time stage.
\end{enumerate}
Here, $\Delta t$ is determined from the CFL condition $\Delta t = \mathrm{CFL} \frac{\Delta x}{\max(|\mathbf{u}|+c)}$, while $\Delta \tau$ in the Phase-Field step is determined by $\Delta \tau = \min\left(\Delta t,\mathrm{CFL}^{PF} \frac{\Delta x}{\max|\nabla Q_p|}\right)$ \citep{HuangJohnsen2022,HuangJohnsen2023,HuangJohnsen2024}. Our preliminary tests indicate that it is more accurate to perform the relaxation step after each RK time stage instead of after an entire RK time step.

\section{Results}\label{Sec:Results}
In this section, we demonstrate the proposed model and numerical approach with various benchmark problems of compressible two-phase flows. The interface thickness is evaluated by
\begin{equation}\label{Eq:InterfaceThickness}
N_I
=
\frac{\int_\Omega \mathcal{H}(\alpha_1) d\Omega}{\Delta x \int_\Omega |\nabla \alpha_1| d\Omega},
\quad
\mathcal{H}(\alpha_1)=\left\{
\begin{array}{cc}
     1,&  (\alpha_1)_{\min} \leqslant \alpha_1 \leqslant (\alpha_1)_{\max},\\
     0,& \mathrm{else},
\end{array}
\right.
\end{equation}
which measures the average number of grid cells across an interface. Following \citep{HuangJohnsen2022,HuangJohnsen2023,HuangJohnsen2024,Whiteetal2025,Huangetal2025}, we set $(\alpha_1)_{\min}=0.05$ and $(\alpha_1)_{\max}=0.95$, which cover 90\% of the diffuse interface, and, unless otherwise specified, use $\eta =\Delta x$ and $\mathrm{CFL}=\mathrm{CFL}^{PF}=0.4$.

\subsection{Advection of an air square in water}\label{Sec:Advection}
We first verify the kinematic, mechanical, and thermal equilibria at isolated interfaces and the conservation of mass, momentum, and energy via a two-phase advection problem.
An air square (Phase~$1$: $\rho_1=1.204\times10^{-3}$, $\gamma_1=1.4$, $P_1^\infty=0$, $C_1^V=0.1024$, and $e_1^\infty=0$) initially at $(x_s,y_s)=(0,0)$ with a side length of $L_s=0.4$ is surrounded by water (Phase~$2$: $\rho_2=1$, $\gamma_2=6.12$, $P_2^\infty=0.1631$, $C_2^V=0.5973$, and $e_2^{\infty}=0$), and both the air and water share the same velocity $\mathbf{u}^{(0)}=(1,1)$ and pressure $P^{(0)}=4.819\times10^{-5}$. The domain is $[-0.5,0.5]\times[-0.5,0.5]$, discretized by $128\times128$ grid cells, and the periodic boundary conditions are applied.

Fig.~\ref{Fig:Advection} shows the air volume fraction at $t=1$ and the time history of the interface thickness, with the pressure relaxation. Both results without and with the Phase-Field mechanism are shown. It is clear that the interface thickness grows unboundedly over time when the Phase-Field mechanism is not activated. On the other hand, the interface thickness is saturated at about $5$ grid cells after activating the Phase-Field mechanism, which is similar to \citep{HuangJohnsen2022,HuangJohnsen2023,HuangJohnsen2024,Huangetal2025}.
\begin{figure}[!t]
	\centering
    \includegraphics[scale=.4]{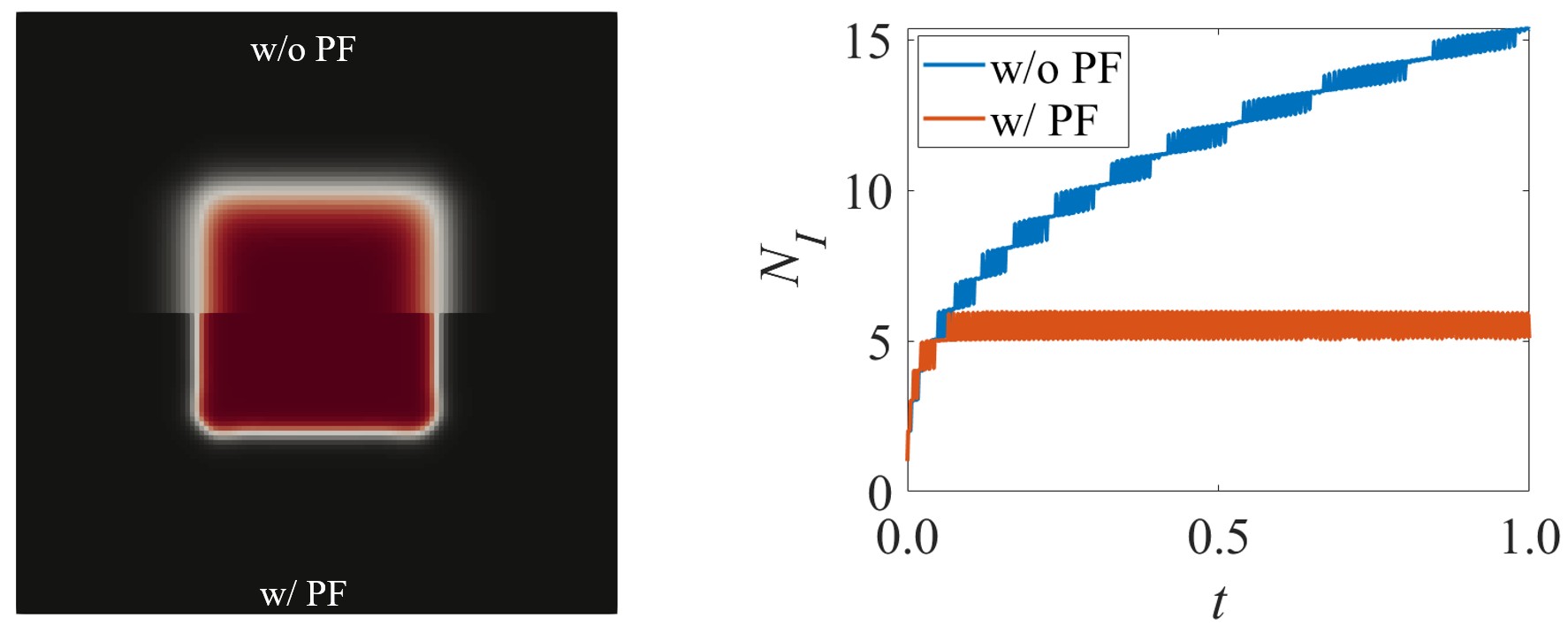}
	\caption{Air volume fraction at $t=1$ (left) and time history of the interface thickness (right) of the advection of an air square in water with the pressure relaxation.}\label{Fig:Advection}
\end{figure}

Fig.~\ref{Fig:Advection-Equilibrium-P} shows the $L^\infty$ errors of $x$-velocity, pressure, and Phase~$2$ temperature versus time with or without the pressure relaxation and with or without the Phase-Field mechanism. The $L^\infty$ errors of $y$-velocity and Phase~$1$ temperature behave similarly to their correspondences in Fig.~\ref{Fig:Advection-Equilibrium-P}, and therefore are not shown for a clear presentation. All the errors are at the round-off level, satisfying the kinematic and mechanical equilibria at isolated interfaces even without the pressure relaxation. However, the thermal equilibrium is not satisfied under the present setup, but individual phases maintain their own initial temperature. As a result, our analysis in Section~\ref{Sec:Hyperbolic} and Section~\ref{Sec:PhaseField} is verified.
\begin{figure}[!t]
	\centering
    \includegraphics[scale=.38]{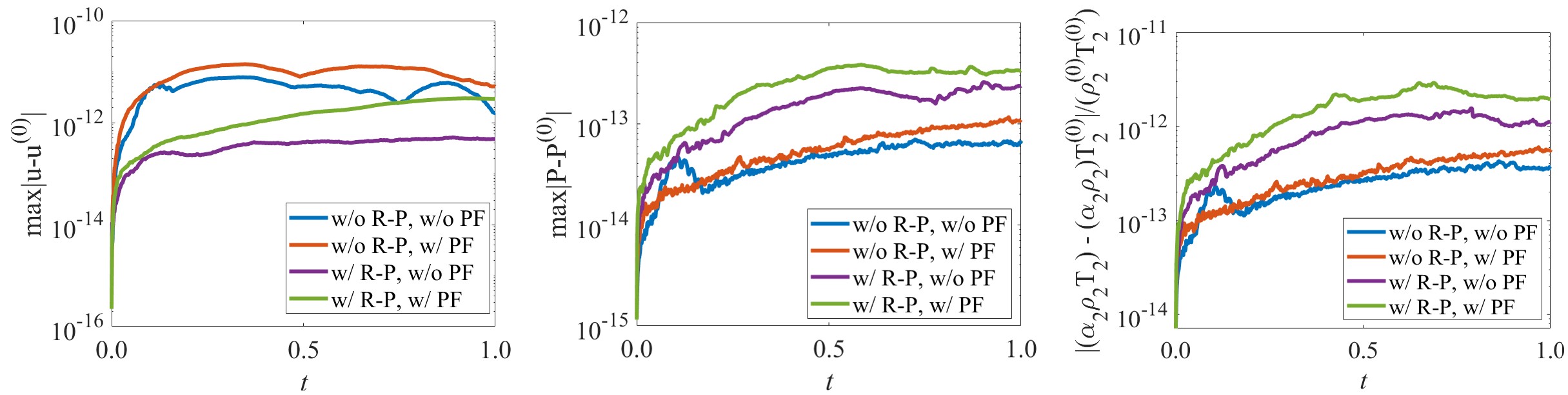}
	\caption{Time histories of the $L^\infty$ errors of $x$-velocity (left), pressure (middle), and Phase~$2$ temperature (right) of the advection of an air square in water with or without the pressure relaxation.}\label{Fig:Advection-Equilibrium-P}
\end{figure}

Fig.~\ref{Fig:Advection-Conservation-P} shows the changes of Phase~$2$ mass, $y$-momentum, and total energy versus time with or without the pressure relaxation and with or without the Phase-Field mechanism. The changes of Phase~$1$ mass and $x$-momentum behave similarly to their correspondences in Fig.~\ref{Fig:Advection-Conservation-P}, and therefore are not shown for a clear presentation. All the changes are at the round-off level, satisfying the conservation of mass, momentum, and energy even with the pressure relaxation. Moreover, the phasic total energies and volume fraction are also conserved under the present setup.
\begin{figure}[!t]
	\centering
    \includegraphics[scale=.38]{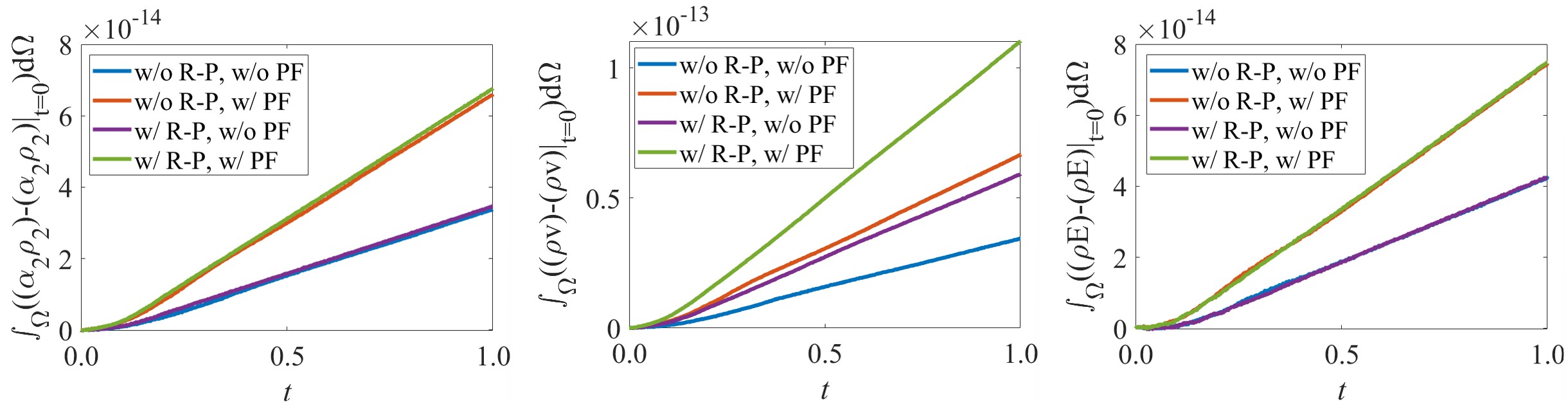}
	\caption{Time histories of the changes of Phase~$2$ mass (left), $y$-momentum (middle), and total energy (right) of the advection of an air square in water with or without the pressure relaxation.}\label{Fig:Advection-Conservation-P}
\end{figure}

Fig.~\ref{Fig:Advection-Equilibrium-PT} shows the $L^\infty$ errors of $x$-velocity, pressure, and temperature versus time with or without the pressure-temperature relaxation and with or without the Phase-Field mechanism, and all the errors are again at the round-off level even without the pressure-temperature relaxation. To achieve the thermal (temperature) equilibrium, we assign $C_p^V = (P^{(0)} + P_p^\infty)/((\gamma_p - 1)\rho_p T^{(0)})$ for $p=1,2$, where $T^{(0)}=1$.
\begin{figure}[!t]
	\centering
    \includegraphics[scale=.38]{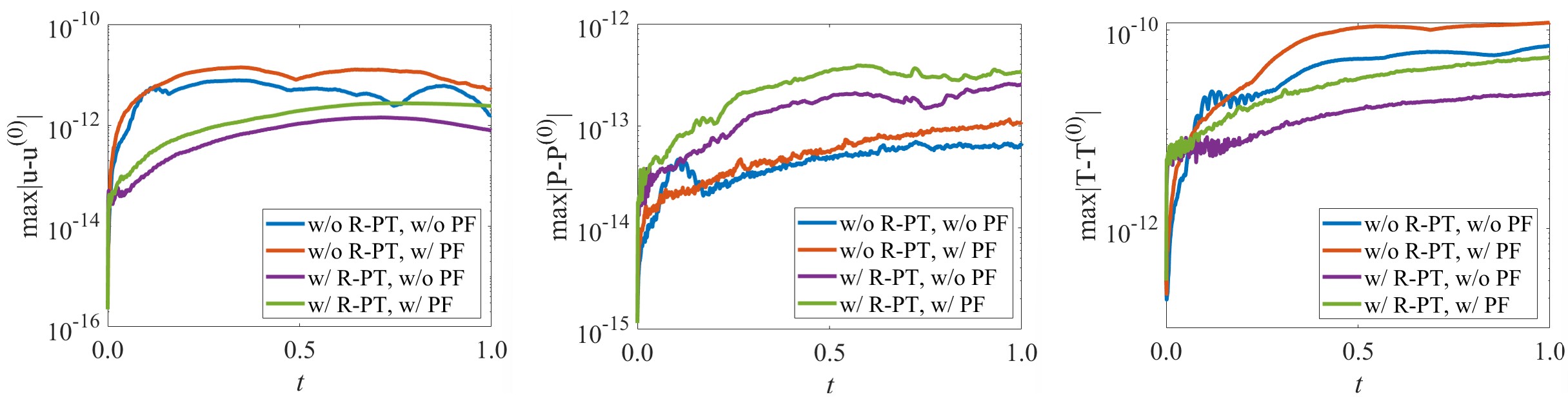}
	\caption{Time histories of the $L^\infty$ errors of $x$-velocity (left), pressure (middle), and temperature (right) of the advection of an air square in water with or without the pressure-temperature relaxation.}\label{Fig:Advection-Equilibrium-PT}
\end{figure}
Fig.~\ref{Fig:Advection-Conservation-PT} shows the changes of Phase~$2$ mass, $y$-momentum, and total energy versus time, which are at the round-off level, verifying the conservation of mass, momentum, and energy even with the pressure-temperature relaxation.
\begin{figure}[!t]
	\centering
    \includegraphics[scale=.38]{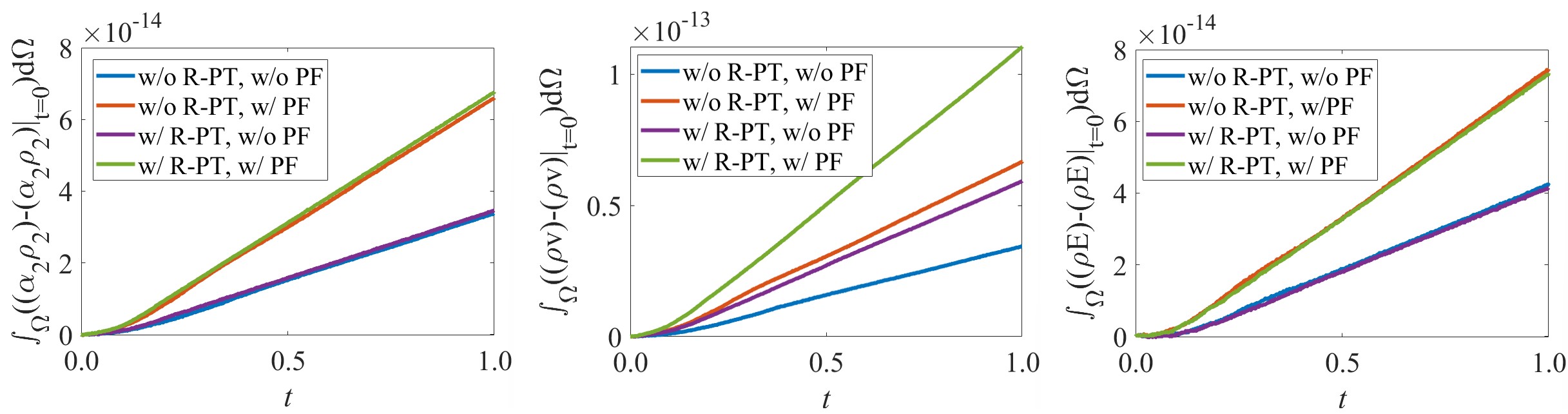}
	\caption{Time histories of the changes of Phase~$2$ mass (left), $y$-momentum (middle), and total energy (right) of the advection of an air square in water with or without the pressure-temperature relaxation.}\label{Fig:Advection-Conservation-PT}
\end{figure}

\subsection{Air-helium shock tube}\label{Sec:AirHelium}
We then consider the air-helium shock tube problem \citep{AbgrallKarni2001,Tiwarietal2013,HuangJohnsen2022} to include shock-interface interactions. Both helium (Phase~$1$: $\rho_1=0.125$, $\gamma_1=1.6$, and $C_1^V = 3115.72$) and air (Phase~$2$: $\rho_2=1$, $\gamma_2=1.4$, and $C_2^V = 717.5$) are modeled as ideal gases, and the initial velocity, pressure, and volume fraction are
\begin{equation}\label{Eq:IC-AirHelium}
\left(u,P,\alpha_1 \right)=\left \{
\begin{array}{cc}
     (0, 1.0, 0 ),&  0 \leqslant x < 1,\\
     (0, 0.1, 1 ),&  1 \leqslant x \leqslant 2.
\end{array}
\right.
\end{equation}
We use $200$ grid cells to discretize the domain with the outflow boundary conditions.

Fig.\ref{Fig:AirHelium-P} shows the mixture density, velocity, pressure, and helium volume fraction at $t=0.4$ with the pressure relaxation. The present result with or without the Phase-Field mechanism agrees well with the exact solution; the shock, rarefaction, and interface are accurately captured.
\begin{figure}[!t]
	\centering
	\includegraphics[scale=0.35]{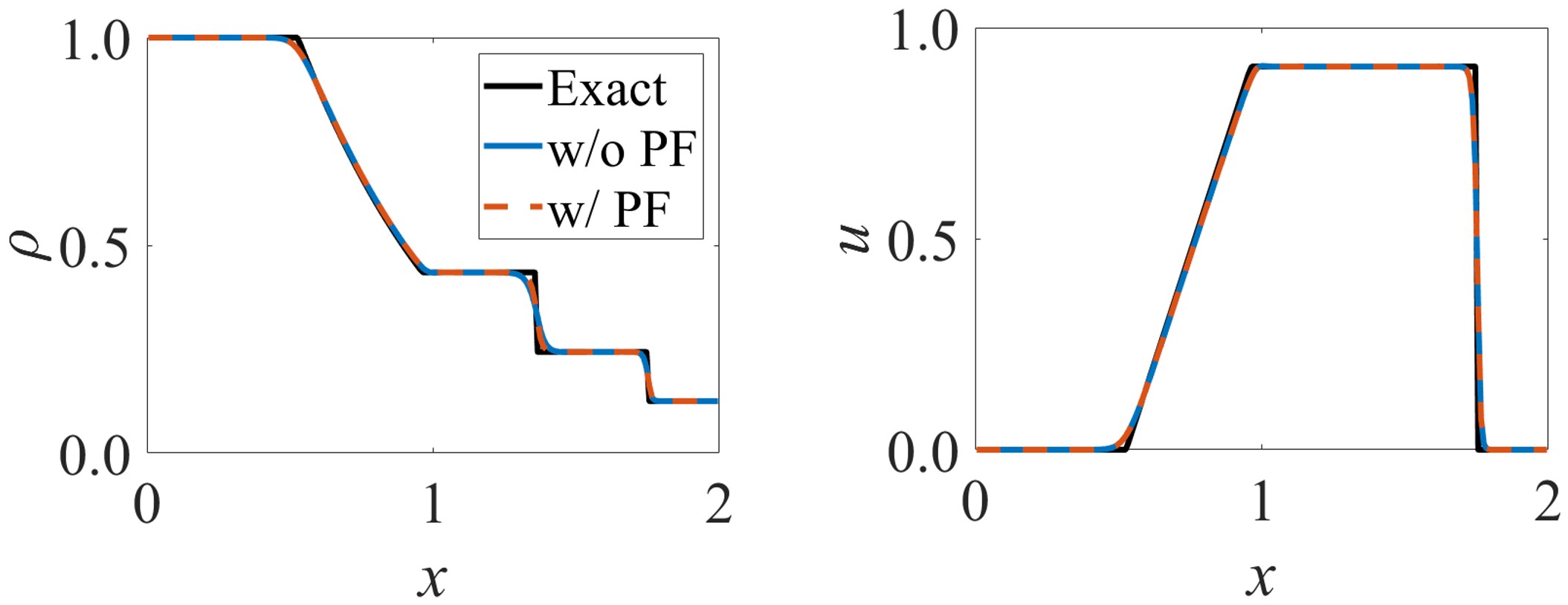}\\
    \includegraphics[scale=0.35]{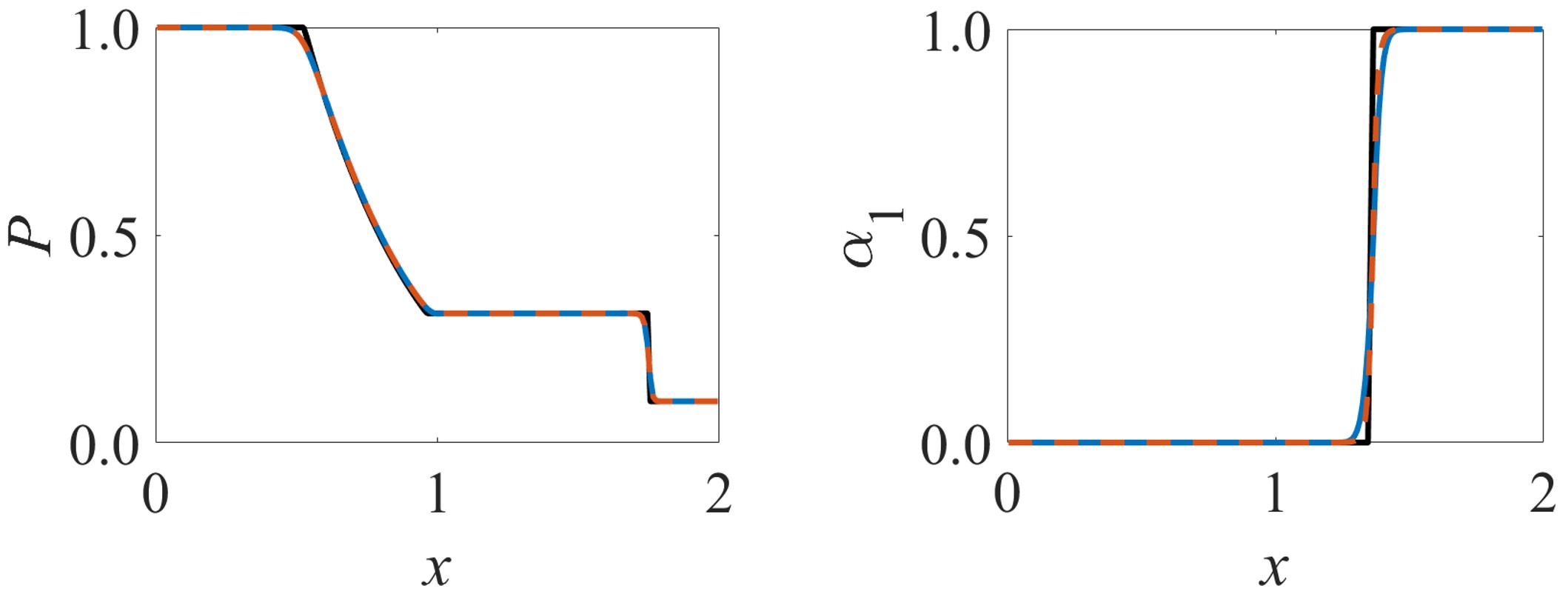}
	\caption{Mixture density (top left), velocity (top right), pressure (bottom left), and helium volume fraction (bottom right) of the air-helium shock tube problem at $t=0.4$ with the pressure relaxation.}\label{Fig:AirHelium-P}
\end{figure}

Fig.~\ref{Fig:AirHelium-PT} shows the same quantities with the pressure-temperature relaxation. Although there is no available exact solution with the pressure-temperature relaxation, we include the one with the pressure relaxation for comparison. The pressure-temperature relaxation slightly slows down the interface, while produces indistinguishable result away from the interface from that with the pressure relaxation.
\begin{figure}[!t]
	\centering
	\includegraphics[scale=0.35]{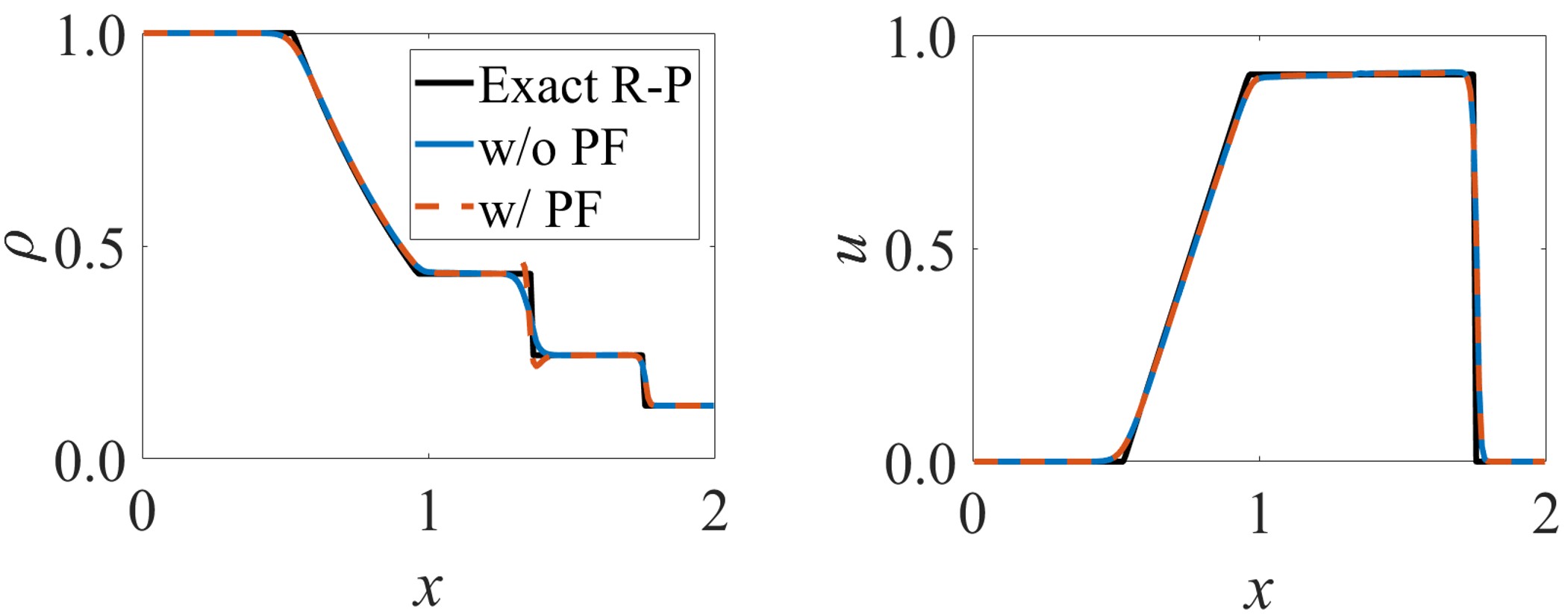}\\
    \includegraphics[scale=0.35]{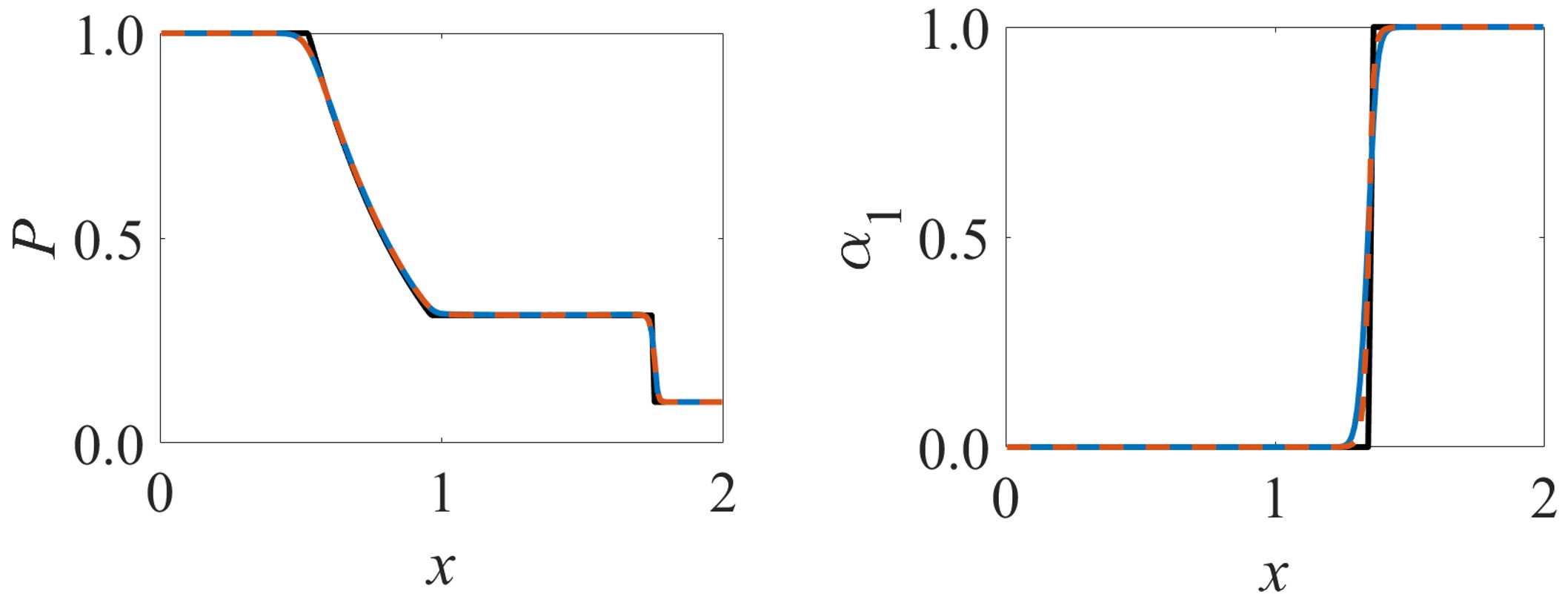}
	\caption{Mixture density (top left), velocity (top right), pressure (bottom left), and helium volume fraction (bottom right) of the air-helium shock tube problem at $t=0.4$ with the pressure-temperature relaxation.}\label{Fig:AirHelium-PT}
\end{figure}

Fig.~\ref{Fig:AirHelium-Thickness} shows the time histories of the interface thickness. Instead of increasing over time, the interface thickness maintains around $6$ grid cells with the Phase-Field mechanism, which demonstrates the effectiveness of the Phase-Field mechanism on preventing interface thickening.
\begin{figure}[!t]
	\centering
	\includegraphics[scale=.35]{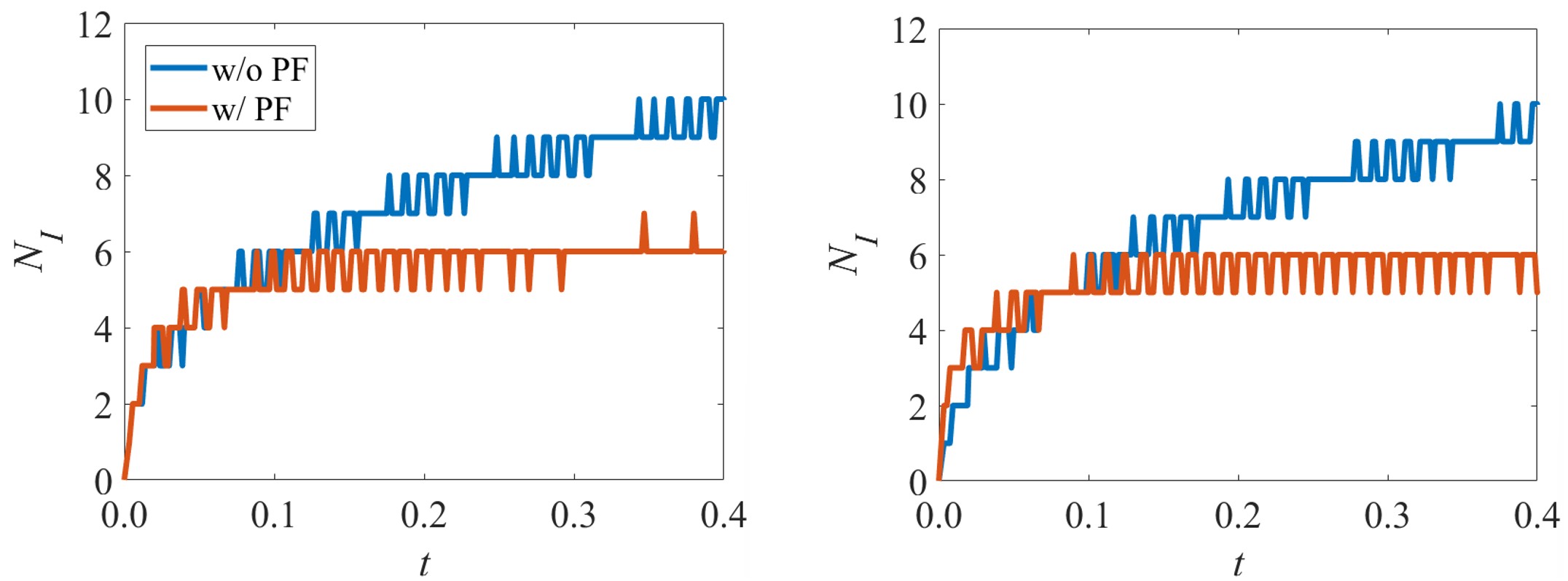}
	\caption{Time histories of the interface thickness of the air-helium shock tube problem with the pressure relaxation (left) and pressure-temperature relaxation (right).}\label{Fig:AirHelium-Thickness}
\end{figure}

\subsection{Cavitation}\label{Sec:Cavitation}
The cavitation problem \citep{Saureletal2009} is considered, which includes interface creation with a large density ratio.
A homogeneous mixture of air (Phase~$1$: $\rho_1 = 1$, $\gamma_1 = 1.4$, $P_1^\infty = 0$, and $C_1^V = 717.5$) and water (Phase~$2$: $\rho_2=1000$, $\gamma_2 = 4.4$, $P_2^\infty = 6\times 10^8$, and $C_2^V = 4186$) fills a unit-length tube in a pressure of $P=1 \times 10^5$. The amount of the air is small; its volume fraction is $\alpha_1 = 1\times10^{-2}$. The velocity is $u=100$ at $x > 0.5$, while $u=-100$ elsewhere, initially. The tube is discretized by $2000$ grid cells with the outflow boundary conditions. 

Fig.~\ref{Fig:Cavitation} shows the mixture density, velocity, pressure, and air volume fraction at $t = 1.85\times10^{-3}$, and both the results with the pressure relaxation and pressure-temperature relaxation are included. Since the air-water interface is not sharp in this problem, the Phase-Field mechanism is not activated \citep{HuangJohnsen2024}. The result with the pressure relaxation agrees well with the exact solution \citep{Petitpasetal2007} of the five-equation model of Kapila et al. \citep{Kapilaetal2001}; the air is produced in the middle of the tube with a density and pressure approaching to zero due to the strong symmetry rarefaction waves traveling from the middle to the two ends of the tube. In comparison to the numerical solution of the five-equation model of Kapila et al. in \citep{Schmidmayeretal2020,HuangJohnsen2024}, the present solution of the six-equation model does not produce pressure bumps at the edge of the rarefactions.

The result with the pressure-temperature relaxation has no significant difference from that with the pressure relaxation, except that the pressure near the two ends of the tube is much smaller. As explained in Section~\ref{Sec:Relaxation-PT}, even though the two phases have the same pressure value, the relaxed pressure of the pressure-temperature relaxation generally takes a different value, due to the additional requirement of matching the temperature of the two phases. In this problem, we have $P_1=P_2=1 \times 10^5$, $T_1=348.4321$, and $T_2=42.1644$, at the beginning, while $T^*=42.1597$ and $P^*=1.2143 \times 10^{4}$ after the pressure-temperature relaxation with a tiny increase of the water volume fraction to ensure the energy conservation.
\begin{figure}[!t]
	\centering
	\includegraphics[scale=.35]{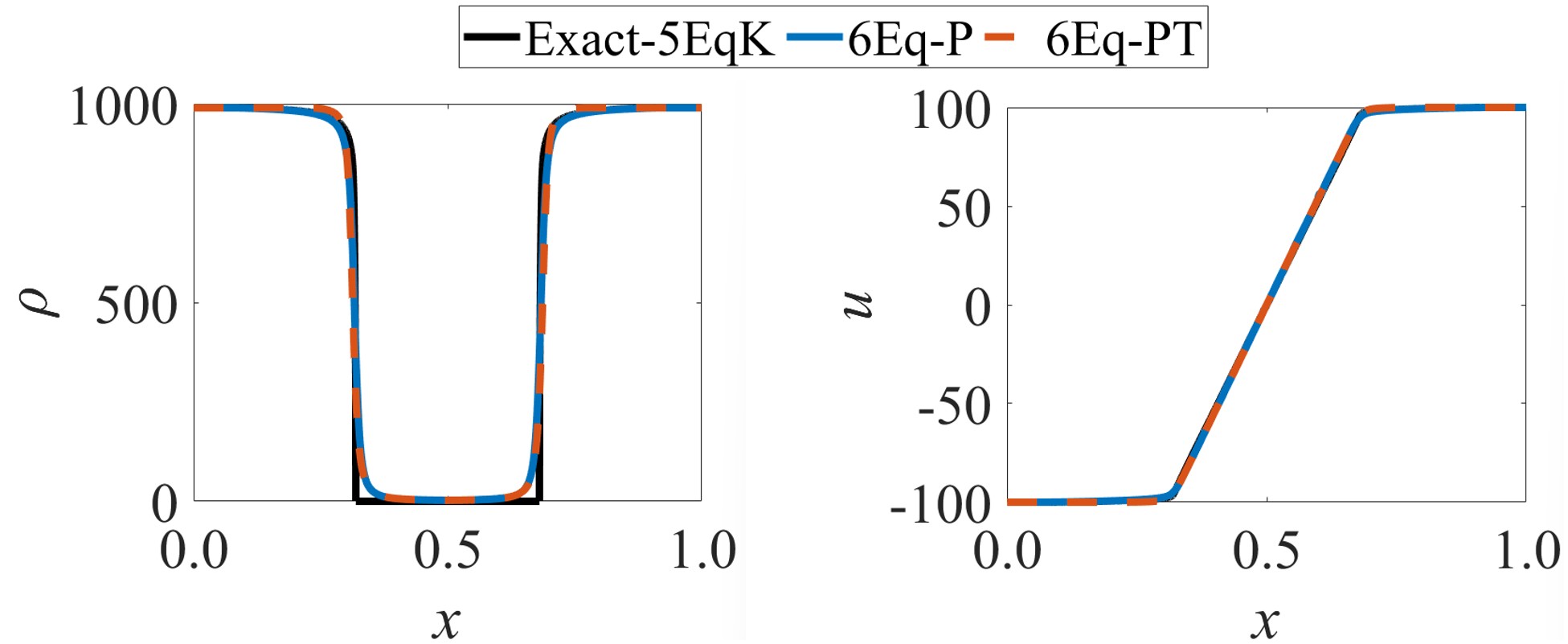}\\
    \includegraphics[scale=.35]{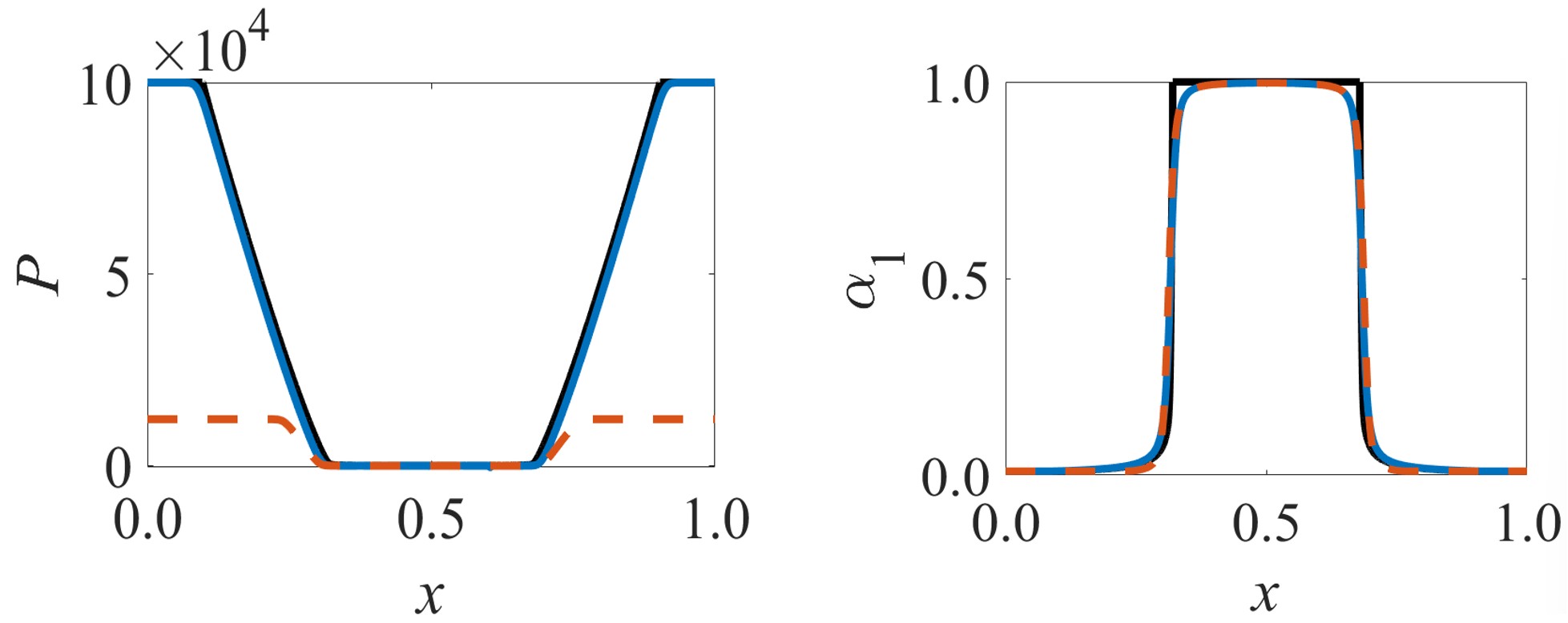}
	\caption{Mixture density (top left), velocity (top right), pressure (bottom left), and air volume fraction (bottom right) of the cavitation problem at $t=1.85\times10^{-3}$ with the pressure relaxation and pressure-temperature relaxation.}\label{Fig:Cavitation}
\end{figure}

\subsection{Water–air shock tube}\label{Sec:WaterAirShockTube}
The water-air shock tube problem \citep{Saureletal2009} is considered to discuss the effect of interface thickness.
In a unit-length tube, air (Phase~$1$: $\rho_1 = 1$, $\gamma_1 = 1.4$, $P_1^\infty  = 0$, and $C_1^v = 717.5$) fills the right chamber ($x > 0.75$) in a pressure of $P = 0.1 \times 10^{6}$, while water (Phase~$2$: $\rho_2 = 1000$, $\gamma_2 = 4.4$, $P_2^\infty = 6 \times 10^{8}$, and $C_2^V = 4186$) fills the rest of the tube in a pressure of $P = 1 \times 10^{9}$. The air and water are not pure; a tiny amount of water is dissolved in the right chamber with a volume fraction $\check{\alpha}=10^{-6}$, and the same volume fraction of air is dissolved in the left chamber. Both the air and water are stationary initially. The domain is discretized by $1000$ grid cells with the outflow boundary conditions.

Fig.~\ref{Fig:WaterAir-P} shows the mixture density, velocity, pressure, and air volume fraction at $t = 240 \times 10^{-6}$ with the pressure relaxation. Both results without and with the Phase-Field mechanism are included, and, overall, they agree well with the exact solution. Without the Phase-Field mechanism, a velocity error near the air-water interface is observed, while this error disappears when we use $\eta/\Delta x=0.1$ in the Phase-Field mechanism. 
\begin{figure}[!t]
	\centering
	\includegraphics[scale=.35]{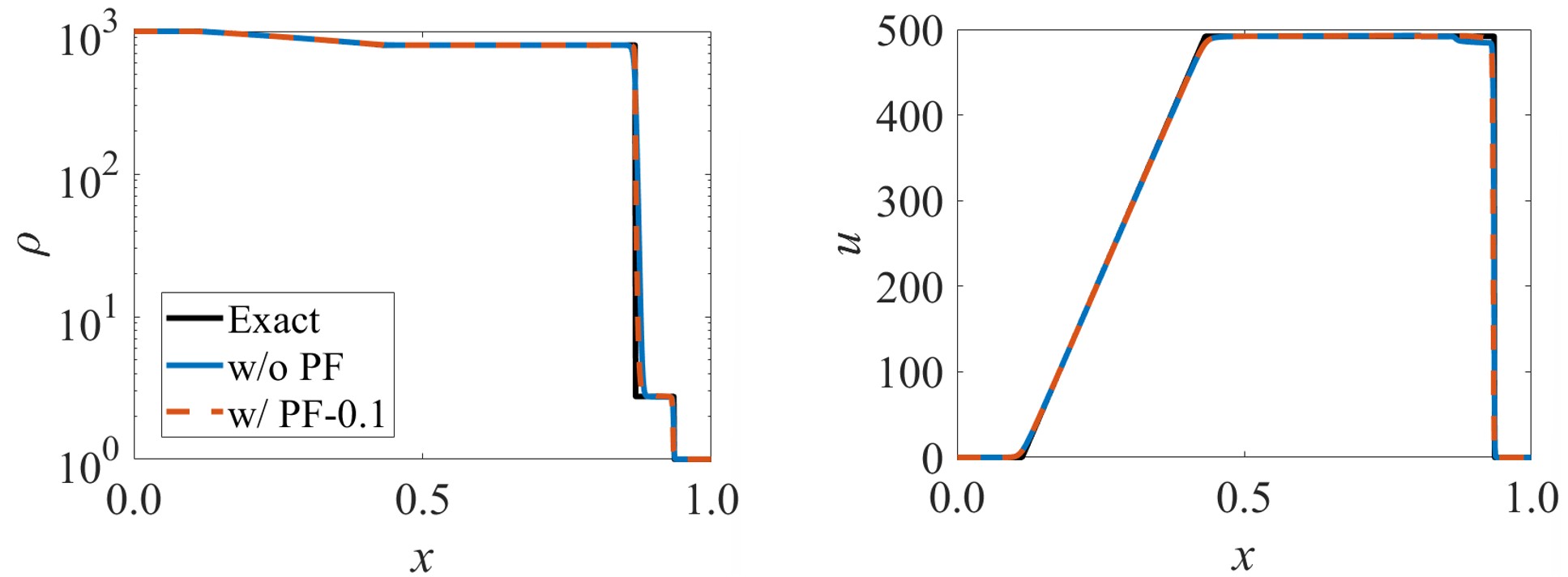}\\
    \includegraphics[scale=.35]{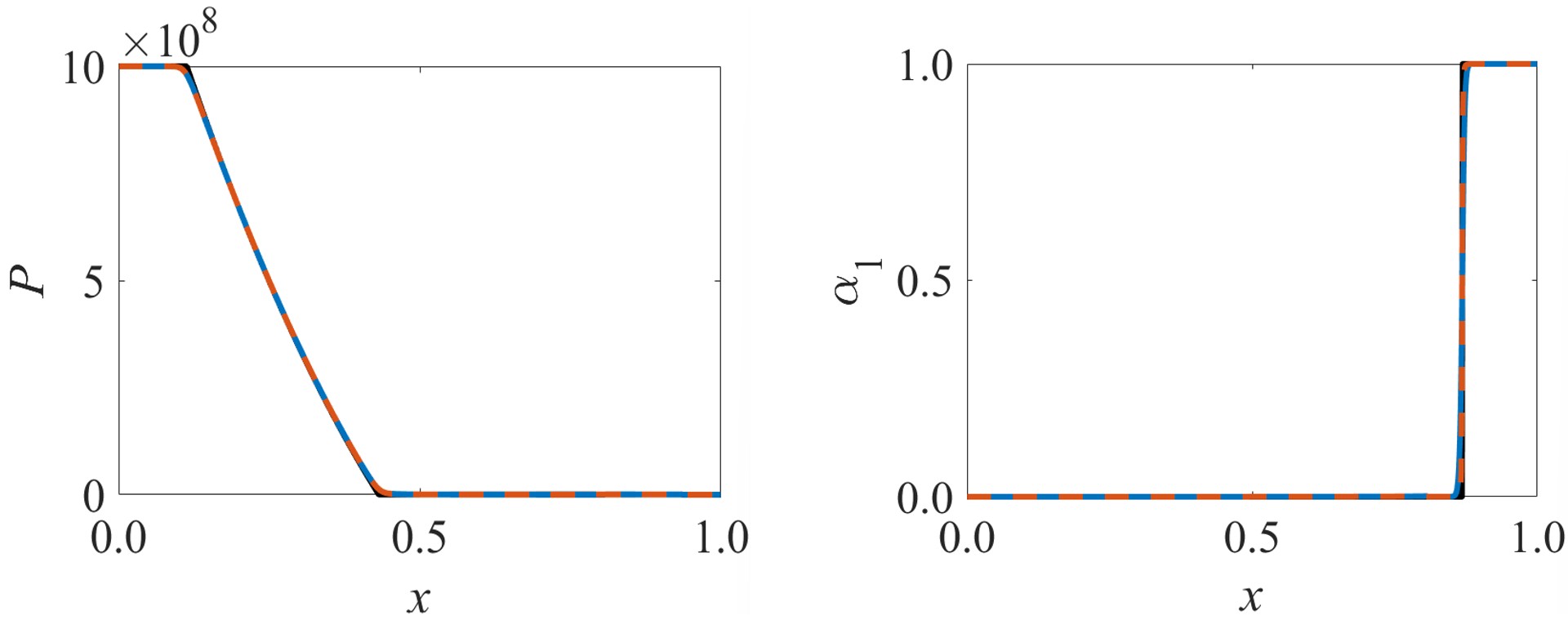}
	\caption{Mixture density (top left), velocity (top right), pressure (bottom left), and air volume fraction (bottom right) of the water-air shock tube problem at $t = 240 \times 10^{-6}$ with the pressure relaxation.}\label{Fig:WaterAir-P}
\end{figure}

This problem poses considerable numerical challenges, as the substantial density and pressure ratios render the solution highly susceptible to the interface thickness.
Fig.~\ref{Fig:WaterAir-P-Compare} compares the velocity at $t = 240 \times 10^{-6}$ and the time history of interface thickness with different $\eta/\Delta x$. As we reduce the interface thickness, i.e., use a smaller value of $\eta/\Delta x$, the velocity error decreases as well even under the same grid resolution, demonstrating the existence of sharp-interface limit.
With $\eta/\Delta x=1$ (the default value), although it stops growing and becomes thinner than that without the Phase-Field mechanism at later time, the interface thickness grows faster in the early stage. As the waves first initiate at the interface, they actually propagate in a thicker interface when the Phase-Field mechanism with $\eta/\Delta x=1$ is activated, resulting in a larger error. In addition to oscillations, the transmitted shock travels faster than it should.
The early-stage growth of the interface thickness with $\eta/\Delta x=0.5$ is almost identical to that without the Phase-Field mechanism, and thus these two results behave similarly. When we use $\eta/\Delta x=0.1$, the interface thickness is always below that without the Phase-Field mechanism, and the best result is obtained.
We test the HLL Riemann solver \citep{Toro2009} and the HLLC Riemann solver in \citep{DeLorenzoetal2018} for the six-equation model and the numerical methods \citep{Tiwarietal2013,Schmidmayeretal2020,HuangJohnsen2022} for the five-equation model of Kapila et al. \citep{Kapilaetal2001}, but none of them successfully solves this problem. A recent study discusses about the effect of different Riemann solvers on the six-equation model based on the phasic total energy, and interested readers can refer to \citep{Orlandoetal2026}.
Moreover, the approach in \citep{Mirjalilietal2020,Jainetal2020} requires $\eta/\Delta x \geqslant (2+\sqrt{2})/2 \approx 1.707$ under the present setup to preserve the volume fraction boundedness, while a much smaller value of $\eta/\Delta x$ is enabled by the present approach without violating bound preservation.
\begin{figure}[!t]
	\centering
	\includegraphics[scale=.35]{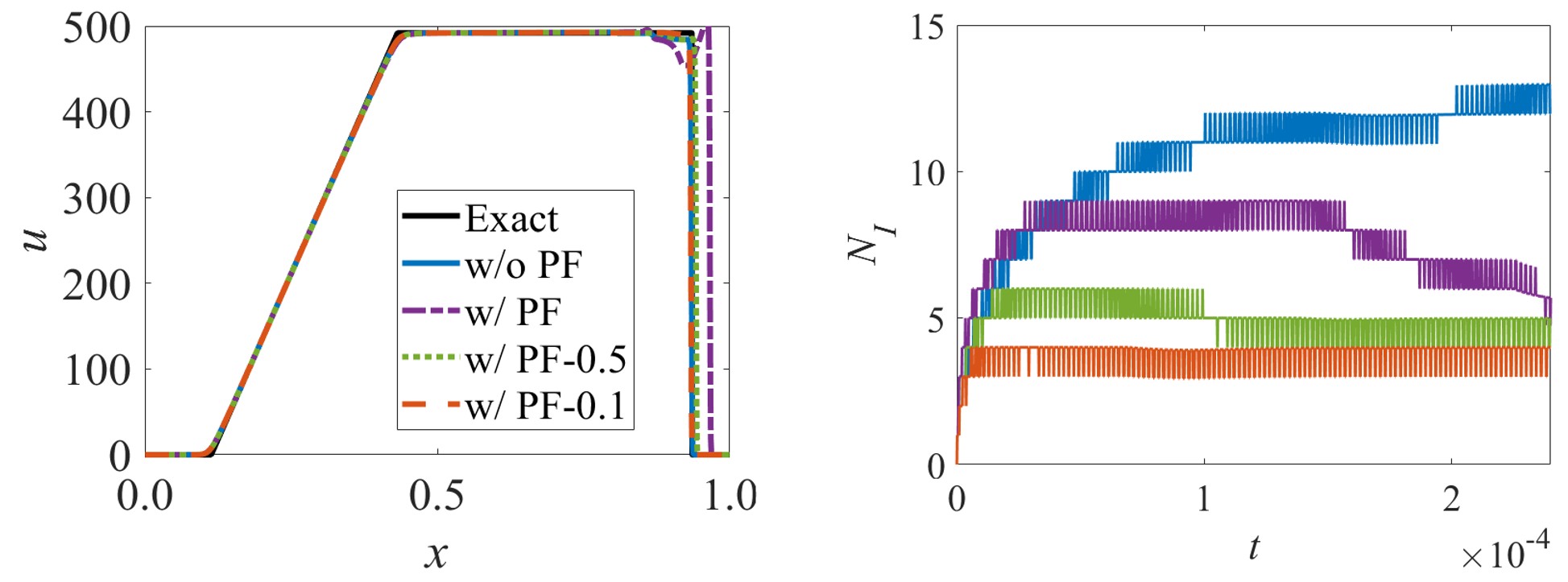}
	\caption{Effect of interface thickness on the solution of the water-air shock tube problem. Left: velocity at $t = 240 \times 10^{-6}$; Right: time history of interface thickness, with different $\eta/\Delta x$.}\label{Fig:WaterAir-P-Compare}
\end{figure}

Fig.~\ref{Fig:WaterAir-PT} shows the mixture density, velocity, pressure, and air volume fraction at $t = 240 \times 10^{-6}$ with the pressure-temperature relaxation. Both results without and with the Phase-Field mechanism are included, along with the exact solution of the one with the pressure relaxation for comparison. On the water-rich side, the difference between the two relaxations is negligible, while the transmitted shock on the air-rich side is not observed with the pressure-temperature relaxation. We test different values of $\check{\alpha}$ down to $10^{-14}$, and all the results overlap.
\begin{figure}[!t]
	\centering
	\includegraphics[scale=.35]{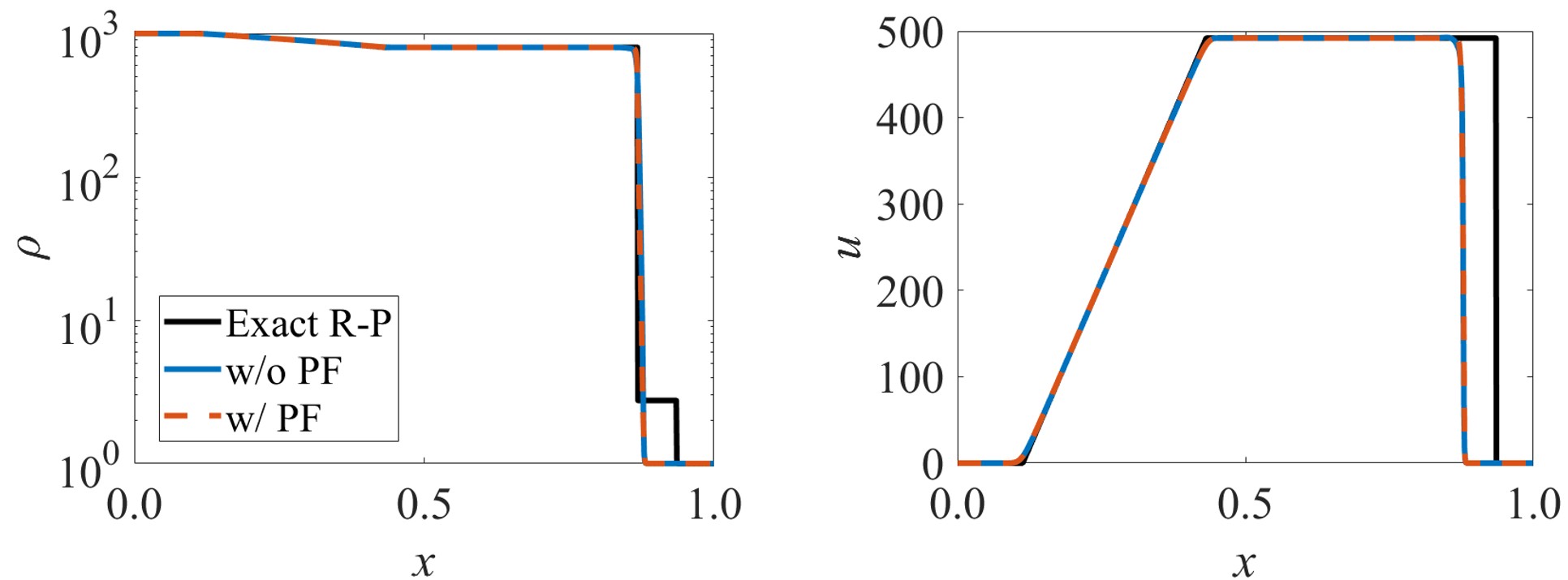}\\
    \includegraphics[scale=.35]{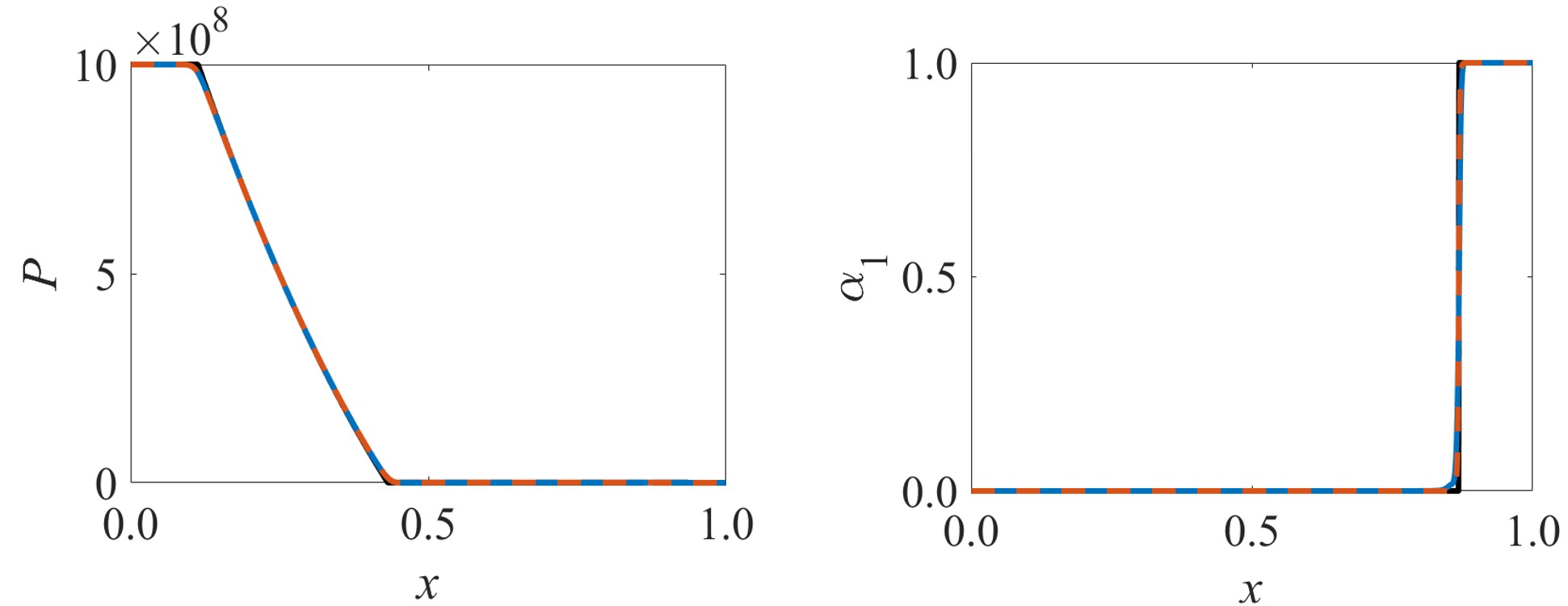}
	\caption{Mixture density (top left), velocity (top right), pressure (bottom left), and air volume fraction (bottom right) of the water-air shock tube problem at $t = 240 \times 10^{-6}$ with the pressure-temperature relaxation.}\label{Fig:WaterAir-PT}
\end{figure}
Fig.~\ref{Fig:WaterAir-PT-Thickness} shows the interface thickness versus time with the pressure-temperature relaxation, and the interface with the Phase-Field mechanism is always sharper than the one without the Phase-Field mechanism even when using the default value $\eta/\Delta x=1$.
\begin{figure}[!t]
	\centering
	\includegraphics[scale=.5]{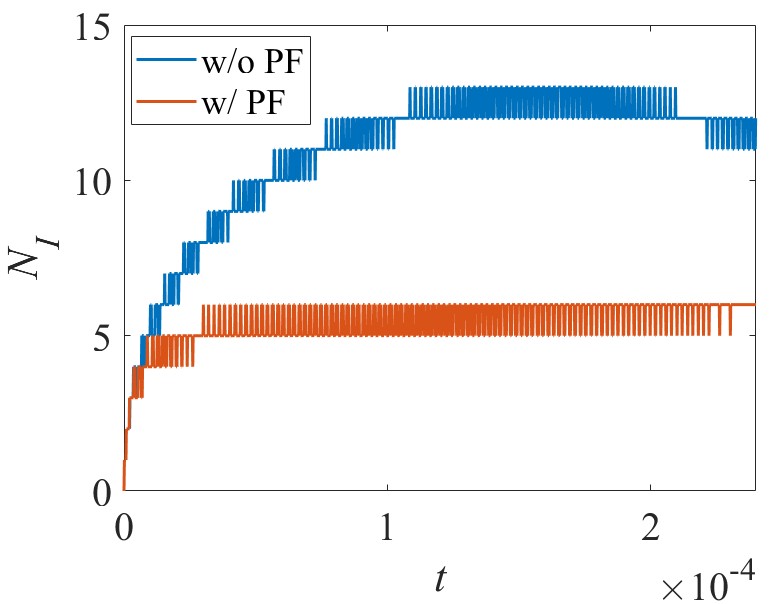}
	\caption{Time history of interface thickness of the water-air shock tube problem with the pressure-temperature relaxation.}\label{Fig:WaterAir-PT-Thickness}
\end{figure}

\subsection{Shock-bubble interaction}\label{Sec:ShockBubble}
We consider the shock-bubble interaction problem \citep{HaasSturtevant1987} to include a large interface deformation. In a channel filled with air (Phase~$1$: $\rho_1=1.204$, $\gamma_1=1.4$, $C_1^V = 717.5$), a stationary cylindrical helium bubble (Phase~$2$: $\rho_2= 0.166$, $\gamma_2=1.667$, and $C_2^V = 3115.72$) at the centerline of the channel is impacted by a $Ma$ $1.22$ shock. Both the unshocked air and helium have a pressure of $P=101325$. The channel height is $2L=0.089$, and the bubble radius is $r=0.025$. Due to symmetry, we only compute the upper half of the domain, which is $[-L,4L]\times[0,L]$, and the shock is initially at $x=-0.03$. The domain is discretized by $960 \times 192$ grid cells, with the outflow boundaries at the left and right, free-slip boundary at the top, and symmetry boundary at the bottom.

Fig.~\ref{Fig:ShockBubble-P} and Fig.~\ref{Fig:ShockBubble-PT} show the mixture density with the pressure relaxation and pressure-temperature relaxation, respectively, at selected moments corresponding to those in \citep{HaasSturtevant1987}. The evolution of both the bubble shape and wave patten is consistent with previous experimental \citep{HaasSturtevant1987} and computational \citep{QuirkKarni1996,MarquinaMulet2003,JohnsenColonius2006,CoralicColonius2014} studies, and the difference between the pressure relaxation and pressure-temperature relaxation is hardly observed in this problem.
The helium bubble starts moving to the right after impacted by the shock. The transmitted shock in the helium bubble and the reflected rarefaction in the air are well captured. The rarefaction then reflects off the channel wall and subsequently interacts with the bubble. A high-speed jet is formed in the middle of the domain, which deforms and finally breaks the bubble.
The conservative Phase-Field formulation \citep{ChiuLin2011,Mirjalilietal2020} of the Phase-Field mechanism used in the present study breaks filaments that are too thin to be resolved by the mesh size into multiple tiny bubbles, as in \citep{Jainetal2023}. This behavior is different from the conservative Allen-Cahn formulation \citep{BrasselBretin2011,Huangetal2020B}, which tends to dissolve these small structures \citep{HuangJohnsen2022,Huangetal2025}. More discussion of the behavior of different Phase-Field formulations in compressible multiphase flows is available in \citep{Huangetal2024}.
\begin{figure}[!t]
	\centering
	\includegraphics[scale=.39]{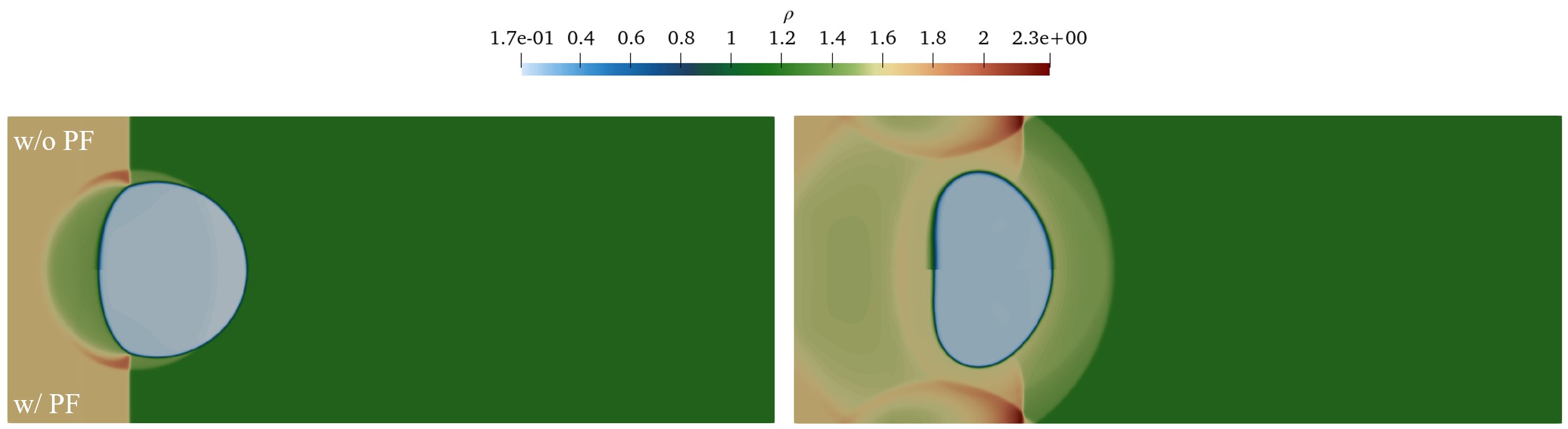}\\
	\includegraphics[scale=.39]{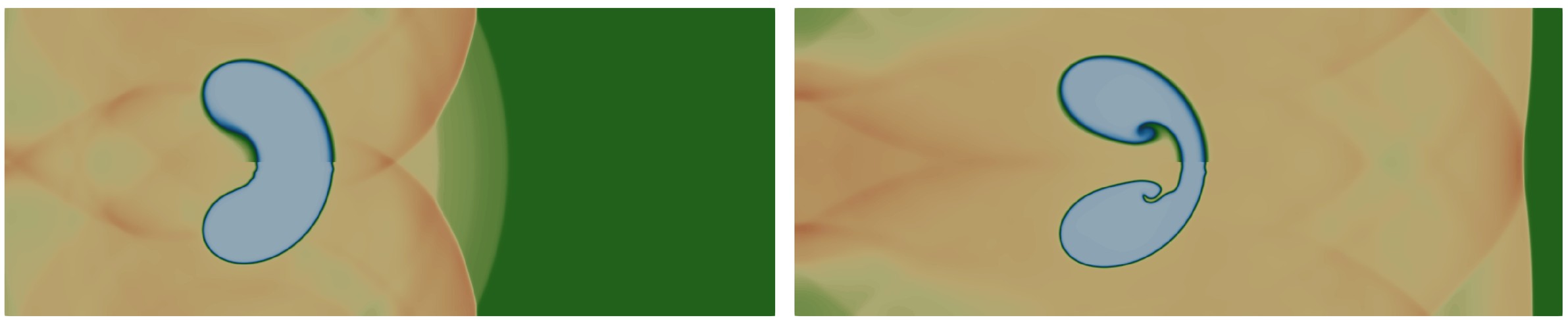}\\
	\includegraphics[scale=.39]{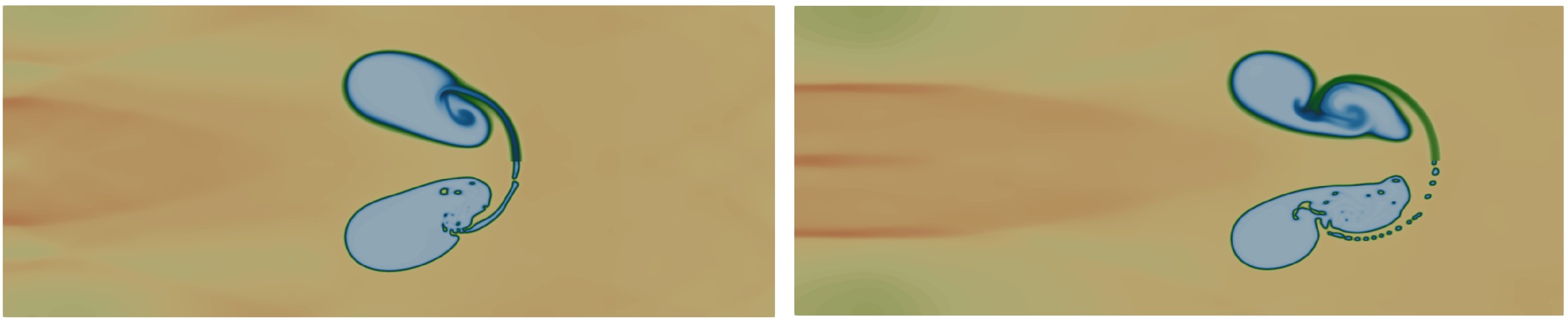}
	\caption{Mixture density of the shock-bubble interaction problem at $t=50$, $120$, $260$, $440$, $640$, and $890 \times 10^{-6}$ (from left to right and top to bottom) with the pressure relaxation.}\label{Fig:ShockBubble-P}
\end{figure}
\begin{figure}[!t]
	\centering
	\includegraphics[scale=.39]{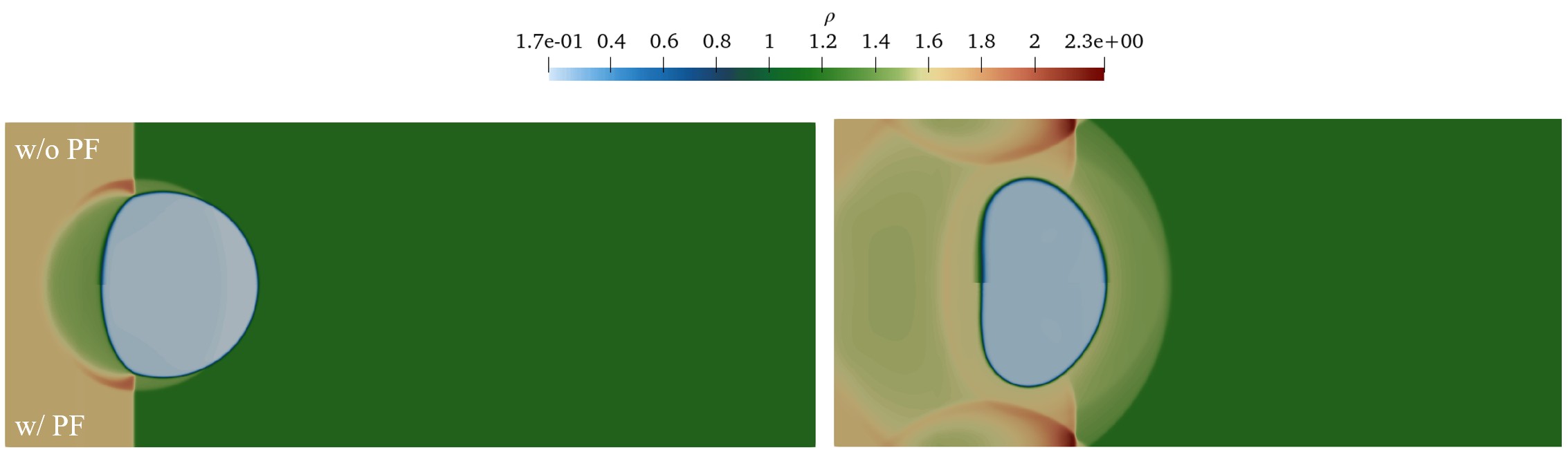}\\
	\includegraphics[scale=.39]{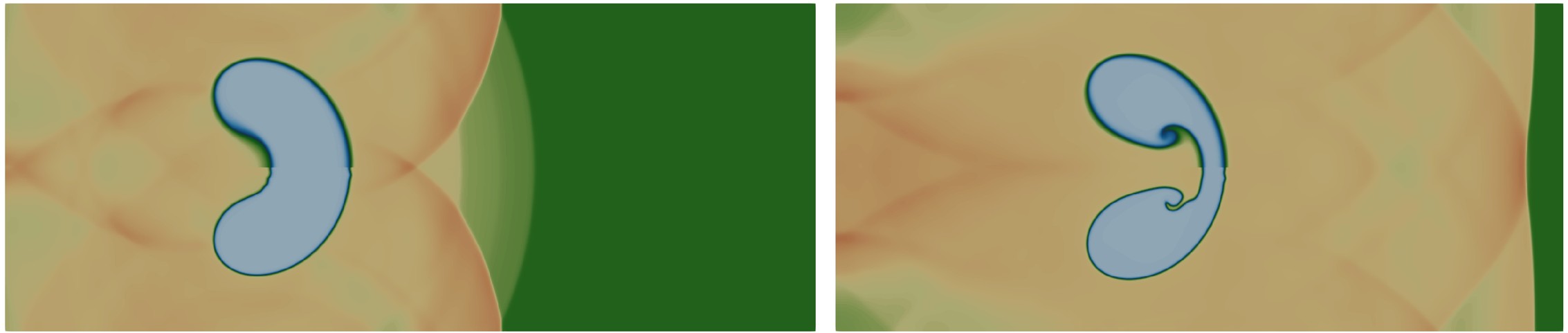}\\
	\includegraphics[scale=.39]{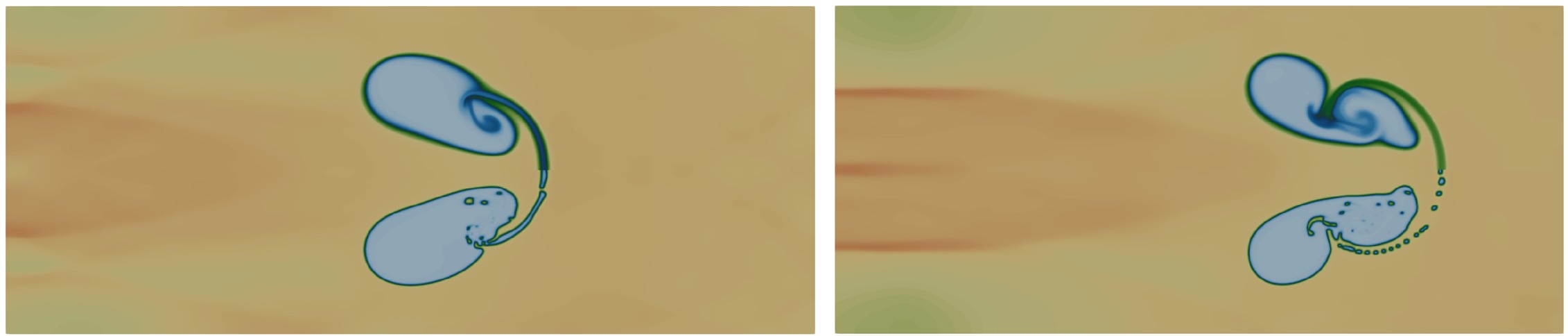}
	\caption{Mixture density of the shock-bubble interaction problem at $t=50$, $120$, $260$, $440$, $640$, and $890 \times 10^{-6}$ (from left to right and top to bottom) with the pressure-temperature relaxation.}\label{Fig:ShockBubble-PT}
\end{figure}

As shown in Fig.~\ref{Fig:ShockBubble-Thickness}, the interface thickness increases quickly without the Phase-Field mechanism, while it remains nearly unchanged with the Phase-Field mechanism even including shock impacts and interface breakups in this problem.
\begin{figure}[!t]
	\centering
	\includegraphics[scale=.39]{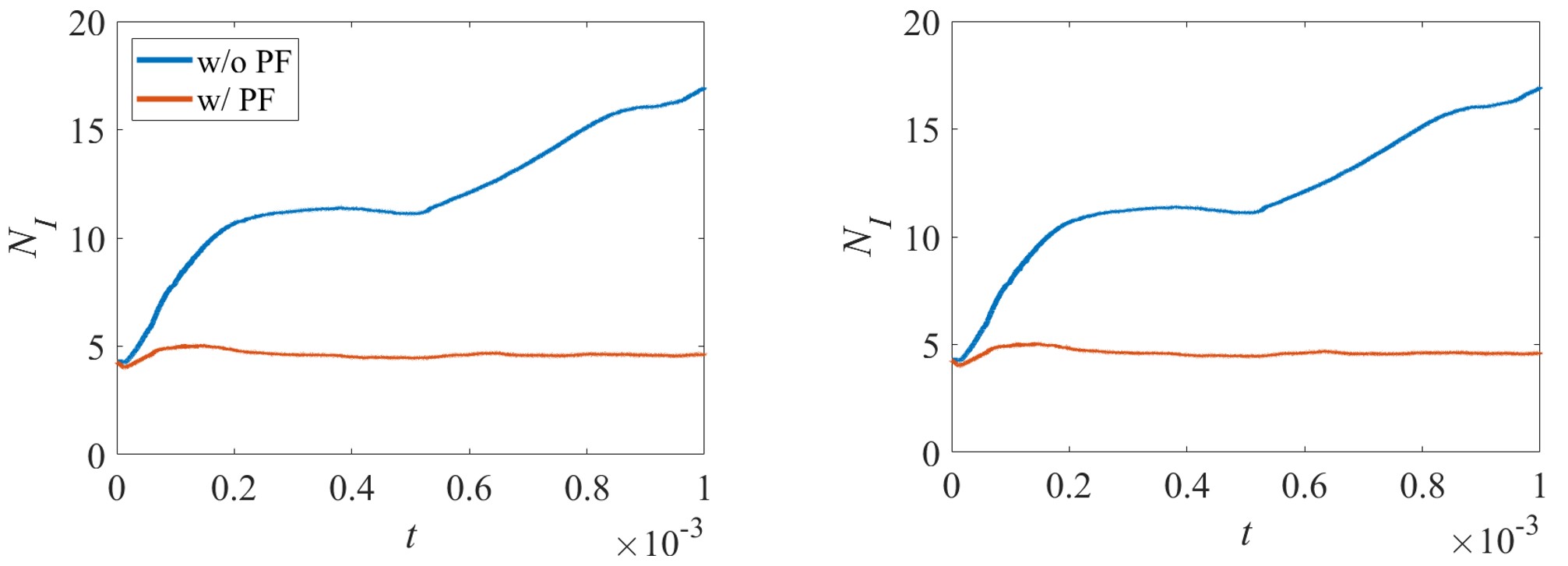}
	\caption{Time histories of the interface thickness of the shock-bubble interaction problem with the pressure relaxation (left) and pressure-temperature relaxation (right).}\label{Fig:ShockBubble-Thickness}
\end{figure}

\subsection{Spherical bubble collapse}\label{Sec:BubbleCollapse}
We finally consider the spherical bubble collapse problem \citep{Tiwarietal2013,Schmidmayeretal2020,Huangetal2025} to further demonstrate the present approach in three dimensions.
A gas bubble (Phase~$1$: $\rho_1 = 1$, $\gamma_1 = 1.4$, $P_1^\infty  = 0$, and $C_1^V = 717.5$) has a uniform pressure $P^B = 1 \times 10^{4}$, while water (Phase~$2$: $\rho_2 = 1000$, $\gamma_2 = 2.35$, $P_2^\infty  = 1 \times 10^{9}$, and $C_2^V = 4186$) surrounding the bubble has a pressure following the Rayleigh-Plesset equation
\begin{equation}\label{Eq:SphericalBubbleCollapse}
P = P^L + \frac{R^B}{r}\left(P^B-P^L\right),
\end{equation}
where $P^L = 10P^B$ is the pressure at infinity, $R^B = 0.0005$ is the bubble radius, and $r$ is the distance from the bubble center.
Due to symmetry, only an octant of the problem is computed, and the bubble center coincides with the coordinate origin. To avoid boundary effects, the domain is $[0,80R^B]\times[0,80R^B]\times[0,80R^B]$, with the symmetry boundary conditions at the inner boundaries and the outflow boundary conditions at the outer boundaries. For this large domain in comparison to the bubble radius, it is computationally unfeasible to use a uniform mesh. A stretching mesh was used in \citep{Schmidmayeretal2020}, while adaptive mesh refinement was used in \citep{Tiwarietal2013,Huangetal2025}. In the present study, we adaptively refine the mesh based on the initial conditions. Specifically, on a $64 \times 64 \times 64$ base mesh, six levels of mesh refinement are performed when the volume fraction is greater than $0.01$, the difference of pressure from its nearest neighbors is greater than $0.5$, or $(x,y,z) \in \left(0,1.5R^B\right) \times \left(0,1.5R^B\right) \times \left(0,1.5R^B\right)$, as shown in Fig.~\ref{Fig:SphericalBubbleCallapse-Mesh}. Once the mesh is generated, it is fixed during the entire computation.
\begin{figure}[!t]
	\centering
	\includegraphics[scale=.35]{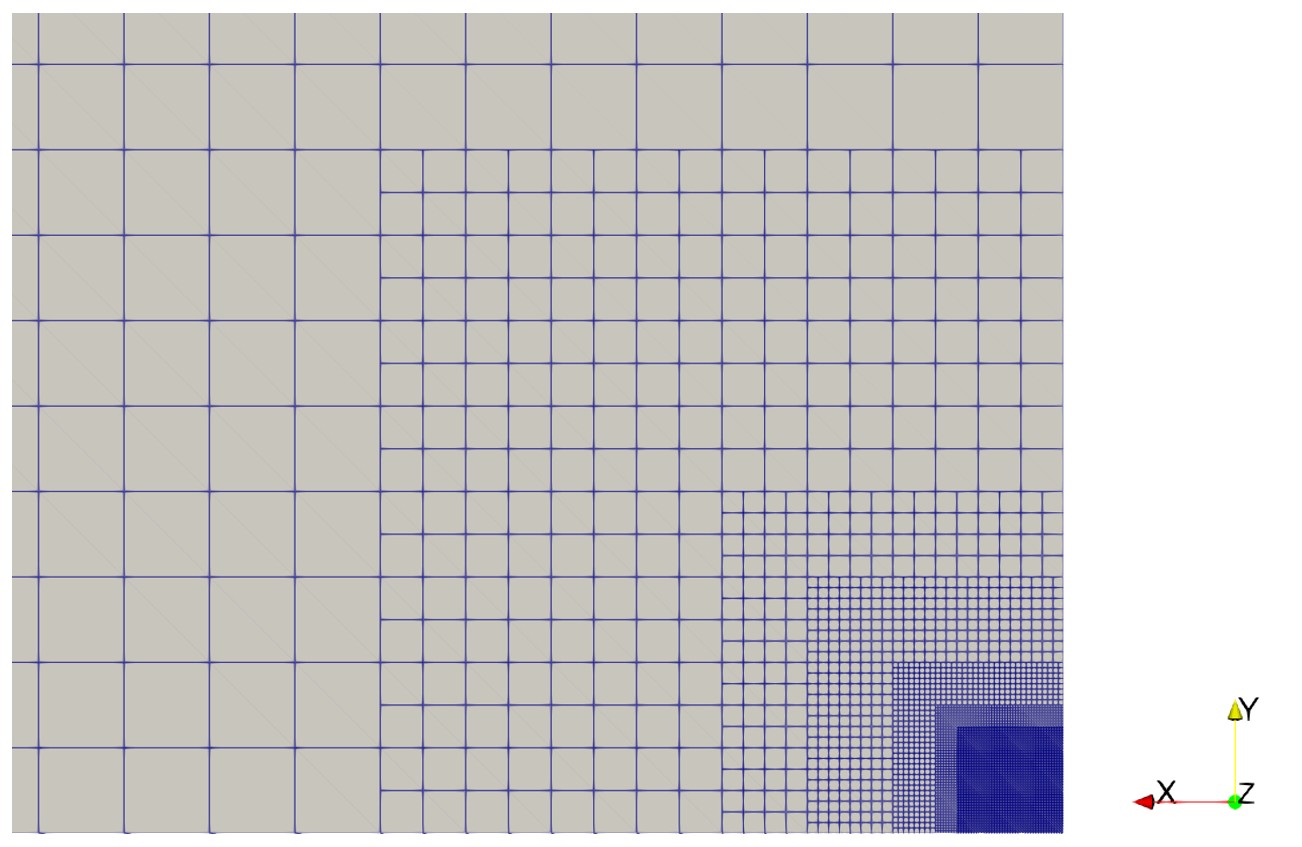}
	\caption{Schematic of the $64 \times 64 \times 64$ base mesh with six levels of mesh refinement near the bubble for the spherical bubble collapse problem.}\label{Fig:SphericalBubbleCallapse-Mesh}
\end{figure}

Fig.~\ref{Fig:SphericalBubbleCallapse-Radius} shows the bubble radius versus time. Here, the bubble radius $R=\left(\frac{3}{4 \pi} \mathcal{V}^B\right)^{1/3}$ is obtained from its volume $\mathcal{V}^B = \int_{\Omega}\alpha_1d\Omega$, and $t^c=0.915 R^B \sqrt{\rho^L/P^L}$ is the Rayleigh collapse time.
The results with the pressure relaxation agree well with the Keller-Miksis model \citep{KellerMiksis1980}, regardless of whether the Phase-Field mechanism is included, although a very minor delay of bubble rebound is observed; the same behavior was reported in \citep{Schmidmayeretal2020}.
When the pressure-temperature relaxation is performed, the bubble is less compressed and rebounds slightly later than the Keller-Miksis model \citep{KellerMiksis1980}, resulting in a larger bubble size after the rebound.
\begin{figure}[!t]
	\centering
	\includegraphics[scale=.4]{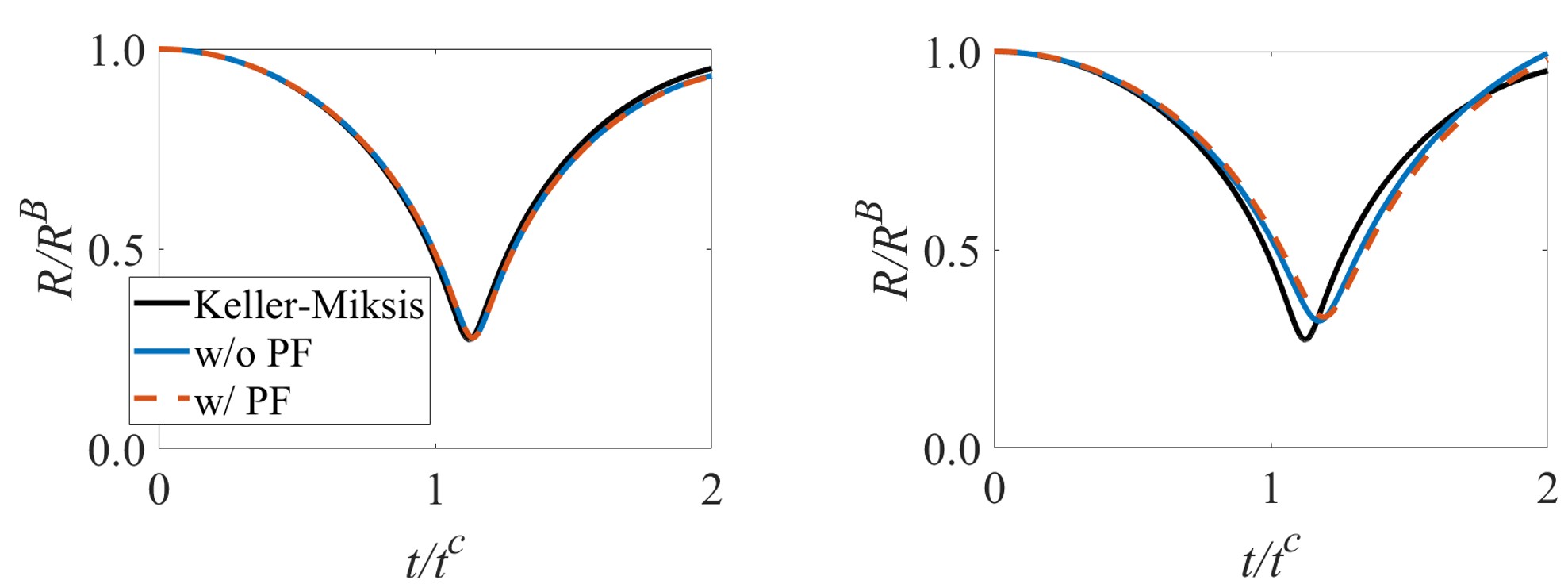}
	\caption{Time histories of bubble radius of the spherical bubble collapse problem with the pressure relaxation (left) and the pressure-temperature relaxation (right).}\label{Fig:SphericalBubbleCallapse-Radius}
\end{figure}

Fig.~\ref{Fig:SphericalBubbleCallapse-Thickness} shows the interface thickness versus time. For both the pressure and pressure-temperature relaxations, the interface thickness with the Phase-Field mechanism has a faster early-stage growth, while maintaining a constant value afterwards. The fast compression and expansion of the bubble prevent the growth of interface thickness when the Phase-Field mechanism is not activated; the interface thickness is always below 6 grid cells, compared to more than 10 grid cells in the cases investigated above. Furthermore, the pressure-temperature relaxation results in a sharper interface than that with the pressure relaxation.
\begin{figure}[!t]
	\centering
	\includegraphics[scale=.4]{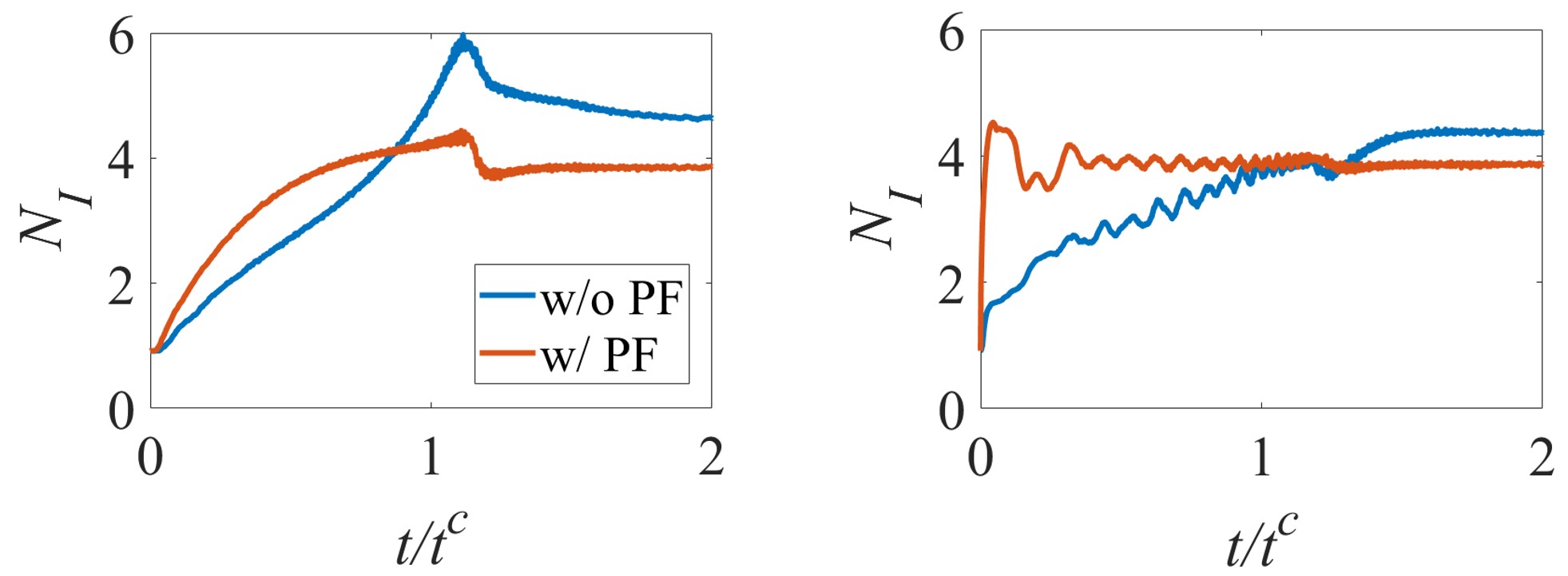}
	\caption{Time histories of interface thickness of the spherical bubble collapse problem with the pressure relaxation (left) and the pressure-temperature relaxation (right).}\label{Fig:SphericalBubbleCallapse-Thickness}
\end{figure}

\section{Conclusion}\label{Sec:Conclusions}
In the present study, the consistent and conservative Phase-Field method for compressible multiphase flows \citep{HuangJohnsen2022,HuangJohnsen2023,HuangJohnsen2024}, originally developed based on the five-equation models, is extended to the six-equation model, with rigorous derivation of the mathematical model, moderate modifications to the numerical approach, and theoretical analysis of the relaxations.

The proposed six-equation model with the Phase-Field mechanism is derived from the consistency conditions \citep{Huangetal2020,Huangetal2020N,Huangetal2020NPMC,Huangetal2020Solid,Huang2021}, which determine the transport of mass, momentum, and energy in the conservation laws, and the second law of thermodynamics, which finalizes the work between phases and the volume fraction equation. Our derivation is general, which does not rely on a specific formulation of the Phase-Field mechanism or have a limit on the number of different phases.
In addition to mass, momentum, and energy conservation as well as volume fraction summation to unity, the resulting model satisfies Galilean invariance, which leads to the kinematic, mechanical, and thermal equilibria at isolated interfaces, and local consistency of reduction \citep{BoyerMinjeaud2014,Dong2018,Huangetal2020N,Huangetal2020B}, which prevents the production of fictitious phases, local voids, or overfilling from modifying the multiphase dynamics. Furthermore, the model automatically recovers the previous two-phase six-equitation models \citep{Saureletal2009,PelantiShyue2014} in the two-phase regions (in the absence of the Phase-Field mechanism) and the Navier-Stokes equations in the single-phase regions.
We also analyze the isobaric closure achieved by the pressure relaxation, which results in a reduced multiphase five-equation model, and discuss the incompressible limit, which recovers the consistent and conservative Phase-Field model for incompressible multiphase flows \citep{Huangetal2020,Huangetal2020N,Huangetal2020Solid}.
Our derivation identifies the contributions of the Phase-Field mechanism not only to the transport of mass, momentum, and energy but also to the work between phases. Furthermore, new non-conservative terms associated with these contributions of the Phase-Field mechanism appear in the volume fraction equation of the reduced five-equation model resulting from the isobaric closure via the pressure relaxation. These terms were not discovered in previous studies \citep{Shuklaetal2010,Tiwarietal2013,Jainetal2020,JainMoin2022,HuangJohnsen2022,HuangJohnsen2023,HuangJohnsen2024}, where the Phase-Field mechanism was incorporated into the five-equation models. As a result, once the Phase-Field mechanism are incorporated, the proposed six-equation model with the pressure relaxation is no longer mathematically equivalent to the five-equation models \citep{Shuklaetal2010,Tiwarietal2013,Jainetal2020,JainMoin2022,HuangJohnsen2022,HuangJohnsen2023,HuangJohnsen2024} with the non-conservative term of Kapila et al. \citep{Kapilaetal2001}, and hence should not be treated as their alternative.

The proposed consistent and conservative numerical approach is modified from our previous approach \citep{HuangJohnsen2022,HuangJohnsen2023,HuangJohnsen2024} based on the five-equation models with the Phase-Field mechanism. To adapt to the proposed six-equation model, the non-conservative terms in the hyperbolic and Phase-Field steps are reformulated to link to the corresponding numerical fluxes; the HLLC flux \citep{Saureletal2009} is modified for the phasic total energy in the hyperbolic step; a new mapping is applied to the UD flux \citep{HuangJohnsen2024} for the volume fraction boundedness and mass positivity in the Phase-Field step; and the consistency requirement for reconstruction is extended to include the phasic pressure weighted by the volume fraction to preserve the velocity, pressure, and temperature equilibria at isolated interfaces for any equation of state.
Both the pressure and pressure-temperature relaxations, the new component of solving the six-equation model, are theoretically analyzed in detail, proving for the first time that there exists a unique thermodynamically admissible solution for these two relaxations under a general multiphase setup with the equation of state by Le M{\'e}tayer et al. \citep{LeMetayeretal2005}.

The proposed model and numerical approach are demonstrated in various compressible two-phase flow benchmarks that include shocks, rarefactions, interfaces, and their interactions. Good agreement with the exact solutions is achieved when the pressure relaxation is applied, while the results with the pressure-temperature relaxation are also supplemented. The interfacial equilibrium condition and the conservation of mass, momentum, and energy are verified. Thanks to the Phase-Field mechanism that quantitatively controls the interface thickness, a discussion about the effect of interface thickness is enabled, which illustrates the existence of a sharp-interface limit in a water-air shock tube problem with large density and pressure ratios. We also validate that the six-equation model based on the phasic total energy captures correct bubble collapse dynamics.

\section*{Acknowledgments}
ZH acknowledges Dr. William J. White, Dr. Baudouin Fonkwa Kamga, and Prof. Eric Johnsen for their fruitful discussions, while ZH was at the University of Michigan and the University of Alabama.

\appendix
\section{Proof of Galilean invariance of the proposed model}\label{Appendix:Galilean}
Given a fixed frame $(\mathbf{x},t)$ and a moving frame $(\mathbf{x}',t')$ that has a constant velocity $\mathbf{u}_0$, the Galilean transformation is
\begin{equation}
\mathbf{x}'=\mathbf{x}-\mathbf{u}_0 t,\quad
t'=t,\quad
\mathbf{u}'=\mathbf{u}-\mathbf{u}_0,\quad
f'(\mathbf{x}',t')=f(\mathbf{x},t),\quad
\frac{\partial f'}{\partial t'}=\frac{\partial f}{\partial t}+\mathbf{u}_0 \cdot \nabla f,\quad
\nabla' f'=\nabla f,
\end{equation}
where $f$ is a scalar function in the fixed frame $(\mathbf{x},t)$ , and $f'$ is the same quantity  in the moving frame $(\mathbf{x}',t')$.
For convenience, we define
\begin{equation}\label{Eq:Force-Interface}
\mathbf{F}_{p \leftarrow q}^I
=
\left(
\frac{(\alpha_p \rho_p)}{\rho} \nabla \cdot (\alpha_q \boldsymbol{\sigma}_q)
-
\frac{(\alpha_q \rho_q)}{\rho} \nabla \cdot (\alpha_p \boldsymbol{\sigma}_p)
\right)
+
\left(
\frac{(\alpha_p \rho_p)}{\rho} (\mathbf{J}_q \rho_q) \cdot  \nabla \mathbf{u}
-
\frac{(\alpha_q \rho_q)}{\rho} (\mathbf{J}_p \rho_p) \cdot \nabla \mathbf{u}
\right),
\end{equation}
resulting in
$
W_{p \leftarrow q}^I
=
\mathbf{u} \cdot \mathbf{F}_{p \leftarrow q}^I
-
\zeta_{p,q}^I P_{p,q}^I (P_p - P_q).
$

Given $\mathbf{J}_p$, $\boldsymbol{\sigma}_p$, $\mathbf{Q}_p$, and $Q_{p \leftarrow q}^I$ objective (frame indifferent), the phasic mass equation in the moving frame is
\begin{eqnarray}\label{ProofGalilean:Mass-Phase}
\frac{\partial (\alpha'_p \rho'_p)}{\partial t'}
+
\nabla' \cdot \left( (\alpha'_p \mathbf{u}' - \mathbf{J}'_p)\rho'_p \right)
=
\underbrace{\mathbf{u}_0 \cdot \nabla (\alpha_p \rho_p)
-
\nabla \cdot \left( \alpha_p \rho_p \mathbf{u}_0 \right)}_{=0}
+
\underbrace{\frac{\partial (\alpha_p \rho_p)}{\partial t}
+
\nabla \cdot \left( (\alpha_p\mathbf{u} - \mathbf{J}_p)\rho_p \right)}_{=0}
=
0,
\end{eqnarray}
the mixture momentum equation in the moving frame is
\begin{eqnarray}\label{ProofGalilean:Momentum}
\frac{\partial (\rho' \mathbf{u}')}{\partial t'}
+
\nabla' \cdot (\mathbf{m}' \otimes \mathbf{u}')
-
\nabla' \cdot \boldsymbol{\sigma}'
=
-
\underbrace{\left( \frac{\partial (\rho \mathbf{u}_0)}{\partial t}
+
\nabla \cdot (\mathbf{m} \otimes \mathbf{u}_0) \right)}_{=\mathbf{u}_0 \left( \frac{\partial \rho}{\partial t}
+
\nabla \cdot \mathbf{m} \right)=\mathbf{0}}\\
\nonumber
+
\underbrace{\mathbf{u}_0 \cdot \nabla ( \rho (\mathbf{u}-\mathbf{u}_0) )
-
\nabla \cdot ( \rho \mathbf{u}_0 \otimes (\mathbf{u}-\mathbf{u}_0))}_{=\mathbf{0}}
+
\underbrace{\frac{\partial (\rho \mathbf{u})}{\partial t}
+
\nabla \cdot (\mathbf{m} \otimes \mathbf{u})
-
\nabla \cdot \boldsymbol{\sigma}}_{=\mathbf{0}}
=\mathbf{0},
\end{eqnarray}
the phasic total energy equation in the moving frame is
\begin{eqnarray}\label{ProofGalilean:Energy-Phase}
\frac{\partial (\alpha'_p \rho'_p E'_p)}{\partial t'}
+
\nabla' \cdot \left( (\alpha'_p \mathbf{u}' - \mathbf{J}'_p) \rho'_p E'_p \right)
-
\nabla' \cdot (\alpha'_p \mathbf{u}' \cdot \boldsymbol{\sigma}'_p)
+
\nabla' \cdot (\alpha'_p \mathbf{Q}'_p)\\
\nonumber
-
\sum_{q=1}^N \mathbf{u}' \cdot (\mathbf{F}_{p \leftarrow q}^I)'
-
\sum_{q=1}^N (Q_{p \leftarrow q}^I)'
+
\sum_{q=1}^N (\zeta_{p,q}^I)' (P_{p,q}^I)' (P'_p - P'_q)\\
\nonumber
=
\underbrace{\left(
\mathbf{u}_0 \cdot \nabla (\alpha_p \rho_p E_p)
- 
\nabla \cdot \left(\alpha_p\rho_p E_p \mathbf{u}_0 \right)
\right)}_{=0}
+
\mathbf{u}_0 \cdot \underbrace{\left(
\nabla \cdot \left( \alpha_p \rho_p \mathbf{u}_0 \otimes \mathbf{u}\right)
-
\mathbf{u}_0 \cdot \nabla (\alpha_p \rho_p \mathbf{u})
\right)}_{=\mathbf{0}}\\
\nonumber
+
\frac{\mathbf{u}_0 \cdot \mathbf{u}_0}{2} \underbrace{\left(
\mathbf{u}_0 \cdot \nabla (\alpha_p \rho_p)
-
\nabla \cdot \left( \alpha_p \rho_p \mathbf{u}_0 \right)
\right)}_{=0}
+
\frac{\mathbf{u}_0 \cdot \mathbf{u}_0}{2} \underbrace{\left(
\frac{\partial (\alpha_p \rho_p)}{\partial t}
+
\nabla \cdot \left( (\alpha_p \mathbf{u} - \mathbf{J}_p) \rho_p \right)
\right)}_{=0}\\
\nonumber
+
\underbrace{\left(
\begin{array}{cc}
     \frac{\partial (\alpha_p \rho_p E_p)}{\partial t}
+
\nabla \cdot \left( (\alpha_p \mathbf{u} - \mathbf{J}_p)\rho_p E_p \right)
-
\nabla \cdot (\alpha_p \mathbf{u} \cdot \boldsymbol{\sigma}_p)
+
\nabla \cdot (\alpha_p \mathbf{Q}_p)&  \\
     -
\sum_{q=1}^N \mathbf{u} \cdot \mathbf{F}_{p \leftarrow q}^I
-
\sum_{q=1}^N Q_{p \leftarrow q}^I
+
\sum_{q=1}^N \zeta_{p,q}^I P_{p,q}^I (P_p - P_q)& 
\end{array}
\right)}_{=0}\\
\nonumber
-
\mathbf{u}_0 \cdot \underbrace{\left(
\frac{\partial (\alpha_p \rho_p \mathbf{u})}{\partial t}
+
\nabla \cdot \left( (\alpha_p \mathbf{u} - \mathbf{J}_p)\rho_p \otimes \mathbf{u} \right)
-
\nabla \cdot (\alpha_p \boldsymbol{\sigma}_p)
- 
\sum_{q=1}^N \mathbf{F}_{p \leftarrow q}^I
\right)}_{=\mathbf{0}}
=
0,
\end{eqnarray}
and the volume fraction equation in the moving frame is
\begin{eqnarray}\label{ProofGalilean:VolumeFraction}
\frac{\partial \alpha'_p}{\partial t'}
+
\nabla' \cdot (\alpha'_p \mathbf{u}' - \mathbf{J}'_p)
-
\alpha'_p \nabla' \cdot \mathbf{u}'
-
\sum_{q=1}^N (\zeta_{p,q}^I)' (P'_p - P'_q)\\
\nonumber
=
\underbrace{\mathbf{u}_0 \cdot \nabla \alpha_p
-
\nabla \cdot (\alpha_p \mathbf{u}_0 )+\alpha_p \nabla \cdot \mathbf{u}_0}_{=0}
+
\underbrace{\frac{\partial \alpha_p}{\partial t}
+
\nabla \cdot (\alpha_p \mathbf{u} - \mathbf{J}_p)
-
\alpha_p \nabla \cdot \mathbf{u}
-
\sum_{q=1}^N \zeta_{p,q}^I (P_p - P_q)
}_{=0}
=0.
\end{eqnarray}
As a result, the proposed model is Galilean invariant.
The derivation of Eq.~(\ref{ProofGalilean:Energy-Phase}) uses
\begin{equation}\label{Eq:Momentum-Phase}
\begin{split}
\frac{\partial (\alpha_p \rho_p \mathbf{u})}{\partial t}
+
\nabla \cdot \left( (\alpha_p \mathbf{u} - \mathbf{J}_p) \rho_p \otimes \mathbf{u} \right)
-
\nabla \cdot (\alpha_p \boldsymbol{\sigma}_p)
-
\sum_{q=1}^N \mathbf{F}_{p \leftarrow q}^I\\
=
\frac{(\alpha_p \rho_p)}{\rho} \underbrace{\left(\frac{\partial (\rho \mathbf{u})}{\partial t} +
\nabla \cdot \left(\sum_{q=1}^N (\alpha_q\mathbf{u} - \mathbf{J}_q)\rho_q \otimes \mathbf{u}\right)-
\nabla \cdot \left(\sum_{q=1}^N \alpha_q \boldsymbol{\sigma}_q\right)\right)}_{=\mathbf{0}}\\
-
\frac{(\alpha_p \rho_p)}{\rho} \mathbf{u} \sum_{q=1}^N \underbrace{\left(
\frac{\partial (\alpha_q\rho_q)}{\partial t}
+
\nabla \cdot \left( (\alpha_q\mathbf{u} - \mathbf{J}_q)\rho_q \right)
\right)}_{=0}
+
\mathbf{u} \underbrace{\left(\frac{\partial (\alpha_p \rho_p)}{\partial t} +
\nabla \cdot \left( (\alpha_p\mathbf{u} - \mathbf{J}_p)\rho_p \right) \right)}_{=0}\\
+
\underbrace{\left(
(\alpha_p \rho_p) \mathbf{u} \cdot \nabla \mathbf{u}
-
\frac{(\alpha_p \rho_p)}{\rho} \sum_{q=1}^N \alpha_q \rho_q \mathbf{u} \cdot \nabla \mathbf{u}
\right)}_{=\mathbf{0}}\\
+
\sum_{q=1}^N \underbrace{\left(
\frac{(\alpha_p \rho_p)}{\rho} \nabla \cdot \left(\alpha_q \boldsymbol{\sigma}_q\right)
-
\frac{(\alpha_q \rho_q)}{\rho} \nabla \cdot \left(\alpha_p \boldsymbol{\sigma}_p\right)
+
\frac{(\alpha_p \rho_p)}{\rho} (\mathbf{J}_q \rho_q) \cdot \nabla \mathbf{u}
-
\frac{(\alpha_q\rho_q)}{\rho} (\mathbf{J}_p \rho_p) \cdot \nabla \mathbf{u}
\right)}_{=\mathbf{F}_{p \leftarrow q}^I}
-
\sum_{q=1}^N \mathbf{F}_{p \leftarrow q}^I
=
\mathbf{0},
\end{split}
\end{equation}
which is the phasic momentum equation of the six-equation model.

\bibliographystyle{plain}
\bibliography{refs.bib}

\end{document}